\documentclass[12pt,letterpaper]{article}
\usepackage[utf8]{inputenc}
\usepackage{docmute}
\usepackage{appendix}
\usepackage{subcaption}
\usepackage{placeins}
\usepackage{comment}
\usepackage{pdflscape}

\usepackage{amsthm,amsmath,amssymb}
\usepackage{tikz}
\usetikzlibrary{intersections,calc,arrows.meta,shapes.callouts}
\usepackage{pxpgfmark} 
\usepackage[customcolors,shade,markings]{hf-tikz}
\if0
\usepackage{mathtools}
\mathtoolsset{showonlyrefs}
\fi
\usepackage{microtype}
\usepackage{letterspace}

\usepackage{graphicx}
\usepackage{setspace}
\usepackage{ascmac}
\usepackage{bm}

\usepackage{xcolor}
\usepackage{float}

\if0
\usepackage{nameref} 
\usepackage{zref-xr}
\zxrsetup{toltxlabel} 
\zexternaldocument*{}
\usepackage{xr}
\externaldocument{}
\fi
\usepackage{xr}
\makeatletter
\newcommand*{\addFileDependency}[1]{
  \typeout{(#1)}
  \@addtofilelist{#1}
  \IfFileExists{#1}{}{\typeout{No file #1.}}
}
\makeatother


\usepackage{algorithm2e}
\usepackage{algorithmic}
\usepackage{enumitem}
\usepackage{latexsym}
\usepackage{threeparttable}

\newcommand{\R}{\mathbb{R}}

\newcommand{\N}{\mathbb{N}}

\newcommand{\mB}{\mathcal{B}}

\newcommand{\mX}{\mathcal{X}}

\newcommand{\sumin}{\sum_{i=1}^n}

\newcommand{\pto}{\stackrel{P}{\to}}

\newcommand{\dto}{\stackrel{d}{\to}}
\newcommand{\deq}{\stackrel{d}{=}}

\newcommand{\eps}{\varepsilon}

\if0

\fi

\DeclareMathOperator{\Var}{Var}
\DeclareMathOperator{\Cov}{Cov}

\DeclareMathOperator{\diag}{diag}
\DeclareMathOperator{\trace}{trace}

\DeclareMathOperator*{\argmin}{arg\,min}

\newcommand{\1}{\mbox{1}\hspace{-0.25em}\mbox{l}}

\if0
\setlength{\footnotesep}{18pt}
\makeatletter
\let\@makefntextOrig\@makefntext
\def\@makefntext#1{\@makefntextOrig{%
\baselineskip=18pt
#1}}
\makeatother
\fi

\theoremstyle{plain}
\newtheorem{theorem}{Theorem}[section]
\newtheorem{lemma}{Lemma}[section]
\newtheorem{proposition}{Proposition}[section]

\newtheorem{assumption}{Assumption}

\theoremstyle{definition}

\newtheorem{remark}{Remark}

\usepackage[top=1.25in, bottom=1.25in, left=1.25in, right=1.25in]{geometry}

\onehalfspacing

\numberwithin{equation}{section}

\usepackage{color}

\thickmuskip=1.0\thickmuskip
\medmuskip=0.8\medmuskip
\thinmuskip=0.8\thinmuskip
\arraycolsep=0.3\arraycolsep
\AtBeginDocument{
  \abovedisplayskip     =0.6\abovedisplayskip
  \abovedisplayshortskip=0.6\abovedisplayshortskip
  \belowdisplayskip     =0.6\belowdisplayskip
  \belowdisplayshortskip=0.6\belowdisplayshortskip}

\let\oldenumerate\enumerate
\renewcommand{\enumerate}{
\oldenumerate
\setlength{\itemsep}{1.5pt}
\setlength{\parskip}{1.5pt}
\setlength{\parsep}{1.5pt}
}

\sloppy
\usepackage{hyperref}
\hypersetup{
       colorlinks=false,
       citebordercolor=green,
       linkbordercolor=red,
       urlbordercolor=cyan,
}
\PassOptionsToPackage{linktocpage}{hyperref}
\usepackage{breakcites}
\usepackage[authoryear, round, longnamesfirst]{natbib}
\bibliographystyle{ecta}


	\title{A unified test for regression discontinuity designs}
	
	\author{Koki Fusejima\thanks{Hitotsubashi Institute for Advanced Study, Hitotsubashi University; 2-1 Naka, Kunitachi, Tokyo 186-8601, Japan; k.fusejima@r.hit-u.ac.jp},
	Takuya Ishihara\thanks{Graduate School of Economics and Management,
     Tohoku University; 27-1 Kawauchi, Aoba-ku, Sendai, Miyagi 980-8576, Japan; takuya.ishihara.b7@tohoku.ac.jp}, and 
    Masayuki Sawada\thanks{Corresponding Author. Institute of Economic Research,
  Hitotsubashi University; 2-1 Naka, Kunitachi,
Tokyo 186-8601, Japan; m-sawada@ier.hit-u.ac.jp}}
	
	\date{}
	
	
\begin{document}

\if0
\setlength{\abovedisplayskip}{2pt} 
\setlength{\belowdisplayskip}{2pt} 
\fi

	\maketitle
 
	\begin{abstract}
	Diagnostic tests for regression discontinuity design face a size-control problem. We document a massive over-rejection of the diagnostic restriction among empirical studies in the top five economics journals. At least one diagnostic test was rejected for $19$ out of $59$ studies, whereas less than 5\% of the collected $787$ tests rejected the null hypotheses. In other words, one-third of the studies rejected at least one of their diagnostic tests, whereas their underlying identifying restrictions appear plausible. Multiple testing causes this problem because the median number of tests per study was as high as $12$. Therefore, we offer unified tests to overcome the size-control problem. Our procedure is based on the new joint asymptotic normality of local polynomial mean and density estimates. In simulation studies, our unified tests outperformed the Bonferroni correction. We implement the procedure as an R package \href{https://github.com/SMasa11/rdtest}{rdtest} with two empirical examples in its vignettes.
	\end{abstract}

 \newpage
 
\section{Introduction}
Diagnostic tests are vital to regression discontinuity (RD) design. The typical procedure evaluates the testable restrictions which are expected to hold under a non-testable identifying restriction \citep{hahnIdentificationEstimationTreatment2001}. All such testable restrictions are expected to hold whenever identification holds. Hence, one must test the null hypothesis of \textit{all restrictions hold} against the alternative hypothesis of \textit{at least one restriction fails}.

However, these restrictions are often tested separately without appropriate size control. Throughout this study, we consider the \textit{continuity-based} framework. \footnote{In \citeauthor{Cattaneo_Idrobo_Titiunik_2019}(\citeyear{Cattaneo_Idrobo_Titiunik_2019,Cattaneo_Idrobo_Titiunik_2023}), the \textit{continuity-based} framework is a contrasting concept against the alternative \textit{local randomization} framework for a finite sample analysis in which explicit randomization is considered within a small range as studied in \cite{Cattaneo_Frandsen_Titiunik_2015, Cattaneo_Titiunik_Vazquez-Bare_2016, cattaneoComparingInferenceApproaches2017}.} In this framework, most empirical studies test these restrictions as null hypotheses for the density test \citep{mccraryManipulationRunningVariable2008} and the balance or placebo test \citep{leeRandomizedExperimentsNonrandom2008}. These tests are well-established separately but are rarely evaluated jointly. Hence, the multiple-testing problem plagues size control. Consequently, empirical studies may over-reject the underlying null hypothesis that all restrictions hold under identification.

In this study, we document the severe over-rejection of the diagnostic restriction in a meta-analysis of $59$ empirical RD studies published recently in the top five economics journals. \footnote{See Online Appendix B 
for the search criterion.} We found that $19$ out of $59$ studies rejected at least one testable restriction, leading to the underlying restriction being questioned in $32\%$ of the studies. 

Nevertheless, each test appeared valid. Among $787$ tests run separately, only $4.3\%$ of the restrictions were rejected, and their p-values uniformly distributed over $[0,1]$, suggesting that the test statistics are drawn from the null hypothesis that all testable restrictions hold. Hence, the underlying identification was plausible; nonetheless, we ended up with an incorrect conclusion for $19$ studies rather than three. 

We conclude that this over-rejection was caused by multiple testing problems. In $59$ studies, the median number of tests was 12 per study. Furthermore, we found only five studies out of $59$ reporting some type of joint testing. No study has reported a joint test that combines the density and balance or placebo tests because no test has been proposed to handle them jointly.


In this study, we propose a unified diagnostic test to resolve this over-rejection problem. This involves two major challenges. First, no tests consider the joint nature of the density and balance tests. Second, the test statistics are nonparametric estimates. Only one of the five studies considered the nonparametric nature of the test statistics. To address these two challenges, we derived the joint asymptotic normality of the nonparametric density estimation and nonparametric conditional mean estimation. We find that nonparametric density estimates and conditional mean estimates are asymptotically orthogonal. Based on this finding, we provide the first unified test with density and balance test statistics based on our new theoretical results from their nonparametric bias-corrected estimates.

Our unified test aggregates the vector of the density and balance test statistics; however, possible aggregation is not unique. An appropriate aggregation depends on the underlying alternative hypothesis. We considered two different aggregations: (1) a Wald (chi-squared) statistic of the sum of the squared statistics of the vector and (2) a Max test statistic of the maximum squared statistics of the vector. The two test statistics complement each other in detecting different types of alternative hypotheses. 

The Wald statistic detects the alternative hypothesis that a large fraction of the original vector deviates from the null hypothesis. The Max statistic detects the alternative hypothesis that only a small fraction of the vector deviates from the null hypothesis. Nevertheless, the conventional Wald statistic did not attain the nominal size, possibly because of poor inverse estimation of the variance matrix. We bypass this problem by standardizing only the diagonal elements instead of multiplying the inverse of the variance matrix and propose it as the \textit{standardized} Wald (hereafter, sWald) statistic. 

In numerical simulations, the sWald and Max statistics exhibit appropriate size control properties for a moderate number of covariates when the sample size is as large as $1000$. In particular, the sWald test is superior to other methods in size control performance for any number of covariates up to $25$ in cases where even the Bonferroni correction may fail to achieve size control. \footnote{We conjecture that the reason for this failure is that Bonferroni correction with $25$ covariates evaluates the thin tail of the nonparametric test statistic distribution, which could be imprecisely estimated.}

We verified the superior power properties of the proposed methods. The Max test is superior to the Bonferroni correction in its power against the alternative hypothesis that only one covariate sees a jump in the conditional mean. The sWald test is superior to the Bonferroni correction in its power against the alternative hypothesis that all covariates see jumps in their conditional means. In conclusion, we recommend the Max test if a user has a few particularly concerning covariates. Otherwise, we recommend the sWald test for better size control and power in other scenarios. \footnote{We also compare our methods with the resampling based method of \cite{romanoExactApproximateStepdown2005, romanoEfficientComputationAdjusted2016} using a \textit{stata} package \textit{rwolf2} \citep{clarkeRwolf2ImplementationFlexible2021} in Online Appendix D. We note that the applicability of \cite{romanoExactApproximateStepdown2005} is not necessarily guaranteed for the RD context, and \textit{rwolf2} can accept the balance tests only. Hence, our simulation comparisons with \textit{rwolf2} are limited to the balance tests only. In the simulation, the Romano and Wolf correction exhibits a bimodal performance: it achieves a good power for the large effect sizes with high correlations, but the power is much lower than our methods when the effect sizes are small. Hence, we continue to support our methods for the purpose of detecting possibly small violations in diagnostic testings.}

This study contributes to the literature on diagnostic procedures for RD design. \footnote{See the latest textbooks \citeauthor{Cattaneo_Idrobo_Titiunik_2019}(\citeyear{Cattaneo_Idrobo_Titiunik_2019,Cattaneo_Idrobo_Titiunik_2023}) for an extensive survey of RD literature.} \cite{mccraryManipulationRunningVariable2008} has developed the density test. Several theoretical studies have improved these procedures. \footnote{Several theoretical studies also provide other forms of testing procedures in regression discontinuity designs. For example, \cite{hsuTestingTreatmentEffect2019} and \cite{hsuTestingMonotonicityConditional2021} provide tests for the presence of treatment effect heterogeneity across covariate values and for monotonicity in the conditional treatment effects; \cite{Bertanha_Imbens_2020} provide tests for exogeneity and external validity in fuzzy designs.}
\citet{otsuEstimationInferenceDiscontinuity2013} have proposed an empirical likelihood-based density test, \citet{cattaneoSimpleLocalPolynomial2019} have considered a test based on a local polynomial estimation of the density function, and \cite{bugniTestingContinuityDensity2020a} have proposed an approximate sign test for the density test. Nevertheless, none of these studies have considered the joint procedure for these tests. Diagnostic test practices follow a similar pattern: separate multiple testing of these diagnostic tests or some joint procedure without considering the nonparametric nature of the test statistics. In this study, we provide unified testing for the null hypothesis that all these diagnostic restrictions hold, suggesting that identification is plausible. We also provide a statistical software \textit{rdtest}, which is an explicit wrapper of two standard packages, \textit{rdrobust} (\citealp{Calonico_Cattaneo_Farrell_Titiunik_2017}) and \textit{rddensity} \citep{Cattaneo_Jansson_Ma_2018}. Therefore, we provide an easy-to-implement unified diagnostic procedure for the widely used continuity-based framework.

Moreover, we contribute to the asymptotic theory of nonparametric estimates. We show the joint asymptotic normality of multiple local polynomial estimates for the conditional mean and density functions. \citet{calonicoRobustNonparametricConfidence2014a} show the asymptotic normality of a single local polynomial estimate for a conditional mean function at a boundary point after bias correction. \citet{cattaneoSimpleLocalPolynomial2019} also show the asymptotic normality of a local polynomial estimate of a density function at a boundary point. We developed the joint normality results based on these results. The marginal bias and variance expressions are identical to those in \cite{calonicoRobustNonparametricConfidence2014a} and \cite{cattaneoSimpleLocalPolynomial2019}, and we impose no restrictions on the choice of bandwidth apart from the original rate conditions. Consequently, we succeeded in accommodating the same mean squared error (MSE) optimal estimates for the balance and placebo tests \citep{calonicoRobustNonparametricConfidence2014a} \footnote{Recently, another bandwidth selector of coverage error optimal bandwidths has been proposed by \citet{Calonico_Cattaneo_Farrell_2020} along with theoretical background \citet{calonicoCoverageErrorOptimal2022}. In our numerical analyses and implementation, we used the default option for MSE optimal bandwidths.} and density tests \citep{cattaneoSimpleLocalPolynomial2019}.

The remainder of the paper proceeds as follows. Section \ref{sec:2} demonstrates the multiple testing problem in the meta-analysis, and
Section \ref{sec:3} explains the joint asymptotic normality of the nonparametric estimates. Subsequently, Section \ref{sec:4} presents our joint tests, Section \ref{sec:5} demonstrates the performance of our tests in the simulation, and Section \ref{sec:extension} considers the two extensions for our procedure: equivalence testing which flips the null and alternative hypotheses, and pretesting analysis in view of \cite{roth2022pretest}. Finally, Section \ref{sec:6} concludes the article by recommending practices and proposing future research questions. In Online Appendix A
, we provide all the formal results and the proofs.

\section{Current practices of diagnostic tests}\label{sec:2}

Most empirical studies diagnose their designs using two continuity restrictions: the density function of the running variable and the conditional mean functions of the pretreatment covariates. In this study, we focus on the two major practices that are routinely evaluated with well-established packages. There are other popular diagnoses such as the \textit{placebo cutoff} analysis. See \citet[Section 4]{cattaneoRegressionDiscontinuityDesigns2022} and \citet[Section 5]{Cattaneo_Idrobo_Titiunik_2019} for further diagnostic concepts.

 The logic behind the diagnoses is that ``the RD design offers an array of empirical methods that, under reasonable assumptions, can provide useful evidence about the plausibility of its assumptions.'' \citep{Cattaneo_Idrobo_Titiunik_2019} In the continuity-based framework, identification holds when
 the conditional mean function of the potential outcome is continuous in the conditioning running variable at the cutoff. In other words, the units near the cutoff must be similar in the unobserved characteristics of the potential outcomes. 
 
 However, units often know the cutoff value and hence they may manipulate the running variable to be eligible for treatment. A typical idea to assess such manipulation is that ``if units lack the ability to precisely manipulate the score value they receive, there should be no systematic differences between units with similar values of the score'' \citep{Cattaneo_Idrobo_Titiunik_2019}. This idea conveys a testable restriction that the observed characteristics must be balanced near the cutoff. 
 
 Similar idea is that ``if units do not have the ability to precisely manipulate the value of the score that they receive, the number of treated observations just above the cutoff should be approximately similar to the number of control observations below it'' \citep{Cattaneo_Idrobo_Titiunik_2019}. This latter idea conveys another testable restriction that the observed density function of the running variable must be continuous at the cutoff. The former restriction is often conducted as the balance or placebo test, and the latter restriction is evaluated as the density test. 
 
 All those testable restrictions stem from the ideas that support identification. Hence, if we claim the plausibility of identification, all those diagnostic restrictions should hold, and none of them must be violated. \footnote{Without additional restrictions, these restrictions have no direct connection with identification. For example, the density test is neither necessary nor sufficient for identification \citep{mccraryManipulationRunningVariable2008}. See \cite{Ishihara_Sawada_2022} for conditions under which the null hypothesis of the density test implies identification.} 
 
 In other words, we must test the null hypothesis that \textit{all testable restrictions hold} against the alternative hypothesis that \textit{at least one restriction fails} to diagnose the underlying designs. However, most studies have tested these restrictions separately as density tests \citep[for example]{mccraryManipulationRunningVariable2008, cattaneoSimpleLocalPolynomial2019} and balance or placebo tests \citep[for example]{leeRandomizedExperimentsNonrandom2008, leeRegressionDiscontinuityDesigns2010}. Running these tests separately would over-reject the underlying null hypothesis that all testable restrictions hold because of the multiple testing problem.

We quantified the size control problem by conducting a meta-analysis of RD studies published in the top five economics journals. Among $59$ papers that satisfied our search criteria up to November 2021, we collected $787$ reported test statistics. \footnote{Many studies aim to demonstrate their robustness with different specifications of the same tests with different bandwidths, order of polynomials, or kernels. We screen the \textit{main} specification for each test, and our reported results are the lower bounds of their over-rejections. See Online Appendix B 
for details.} Our meta-analysis found two facts: (1) the reported test statistics appear to conform to the null hypothesis that all restrictions hold; (2) nonetheless, $32\%$ ($19$) studies had at least one rejected test.

\subsection{Fact 1: Test statistics are likely drawn from the null distributions}

First, we document the reported test statistics likely to be drawn from null distributions. Table \ref{tab:rejectFracAmongAll} presents the percentage of tests rejected from the reported ones. Among $787$ tests reported in $59$ studies, we found that less than $5\%$ rejected null hypotheses. Figure \ref{fig:quantile_test} shows the quantile plot of the p-values. The quantile plot aligns with the $45$-degree line, and the distribution is likely to be uniform over $[0,1]$. \footnote{This is also true among the rejected studies. See Appendix Figure 7 for the histogram of p-values among the rejected studies.} Both lines of evidence support the null hypothesis that all restrictions hold and hence identification is plausible.
\begin{table}[H]
    \centering
    \caption{Percentage of rejected tests}
    \begin{tabular}{l*{3}{c}}
                &  Balance&   Density&       Total\\
    \hline
    Rejected      &        4.32\%&        4.35\%&        4.32\%\\
    Not         &        95.68\%&       95.65\%&       95.68\%\\
    \hline
    Number of tests     &         764&          23 &        787\\
    \end{tabular}
    
    \flushleft

    \begin{minipage}{380pt}
{\fontsize{10pt}{10pt}\selectfont\smallskip\textit{Notes}: Percentages of tests reported as \textit{rejected} and \textit{non-rejected} among $764$ balance tests, $23$ density tests, and $787$ tests in total. See Online Appendix B 
for details.}
\end{minipage}
  \label{tab:rejectFracAmongAll}
\end{table}

\begin{figure}[H]
  \centering
  \caption{Quantile plot of p-values of tests}
  \includegraphics[width=0.9\linewidth]{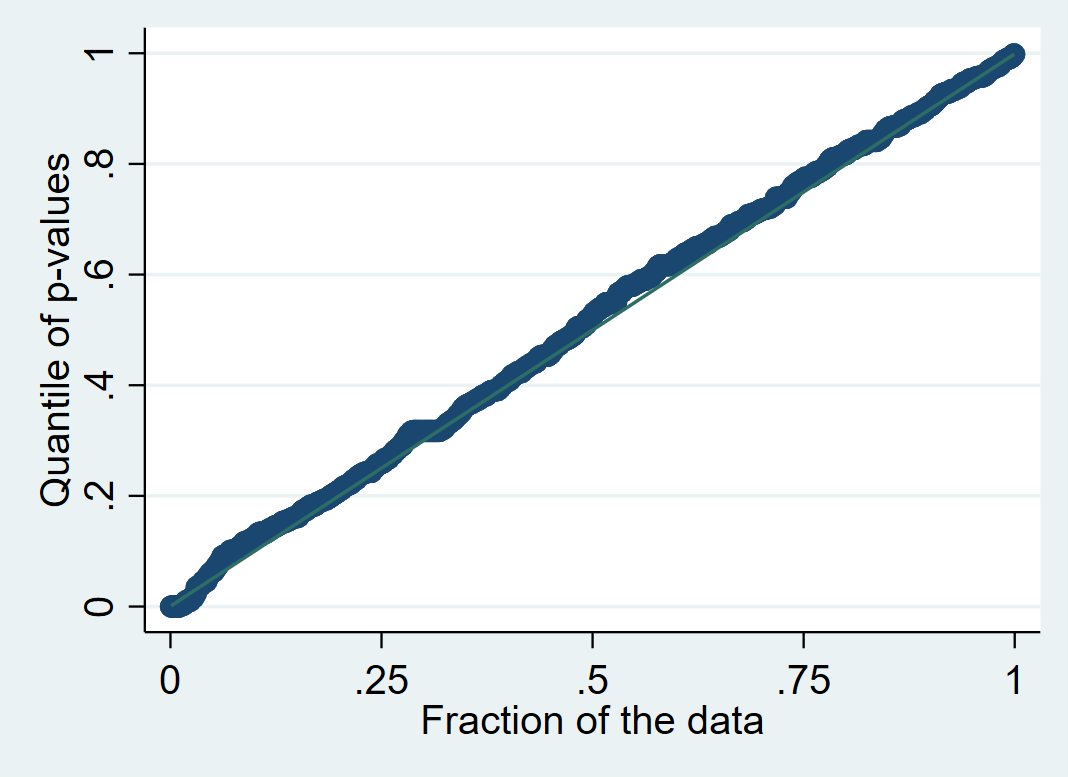} 

  \flushleft
 \begin{minipage}{380pt}
{\fontsize{10pt}{10pt}\selectfont\smallskip\textit{Notes}: The quantile plot of p-values that are reported or computed from t-statistics. Plot aligned with the solid 45 degree line implies similarity to the uniform distribution over $[0,1]$.}
\end{minipage}
  \label{fig:quantile_test}
\end{figure}

\subsection{Fact 2: The underlying null hypothesis is over-rejected because of multiple testing}

From Fact 1, approximately three studies out of $59$ should have shown any test rejection because the null hypotheses appear valid. However, Table \ref{tab:rejectProb} suggests that $32\%$ of the studies ($19$ of $59$) rejected at least one null hypothesis. Among $19$ studies, $8$ studies rejected more than once. As shown in Table \ref{tab:rejectProb} and Figure \ref{fig:histogram}, many studies ran more than $10$ diagnostic tests without appropriate size control. \footnote{Also, studies with many rejections tend to run excessively many tests. See Appendix Figure 8 for the scatter plot of the number of rejected tests and the number of tests.} Hence, we blame the multiple testing problem for over-rejecting the null hypothesis of \textit{all restrictions hold}.


\begin{table}[H]
  \centering
  \caption{Fraction of rejecting at least one hypothesis.}
\begin{tabular}{l*{1}{cccccc}}
            &        mean&      median &    sd\\
\hline
Number of tests per study &    14.20&   12 &  9.59\\
Fraction rejecting at least one test &         0.32& - &  0.47\\
Number of rejected tests per study&         0.69&  0 &  1.52\\
\hline
Number of studies      &          59 & \\
\end{tabular}

  \flushleft
   \begin{minipage}{380pt}
{\fontsize{10pt}{10pt}\selectfont\smallskip\textit{Notes}: The table shows the mean, median, and standard deviation for the number of tests per study, the fraction of rejecting at least one test per study, and the number of rejected tests per study out of $59$ studies.}
\end{minipage}

  \label{tab:rejectProb}
\end{table}
\begin{figure}[H]
  \centering
  \vspace{0.5cm}
    \caption{Histogram of the number of tests}
    \centering
    \includegraphics[width=0.8\hsize]{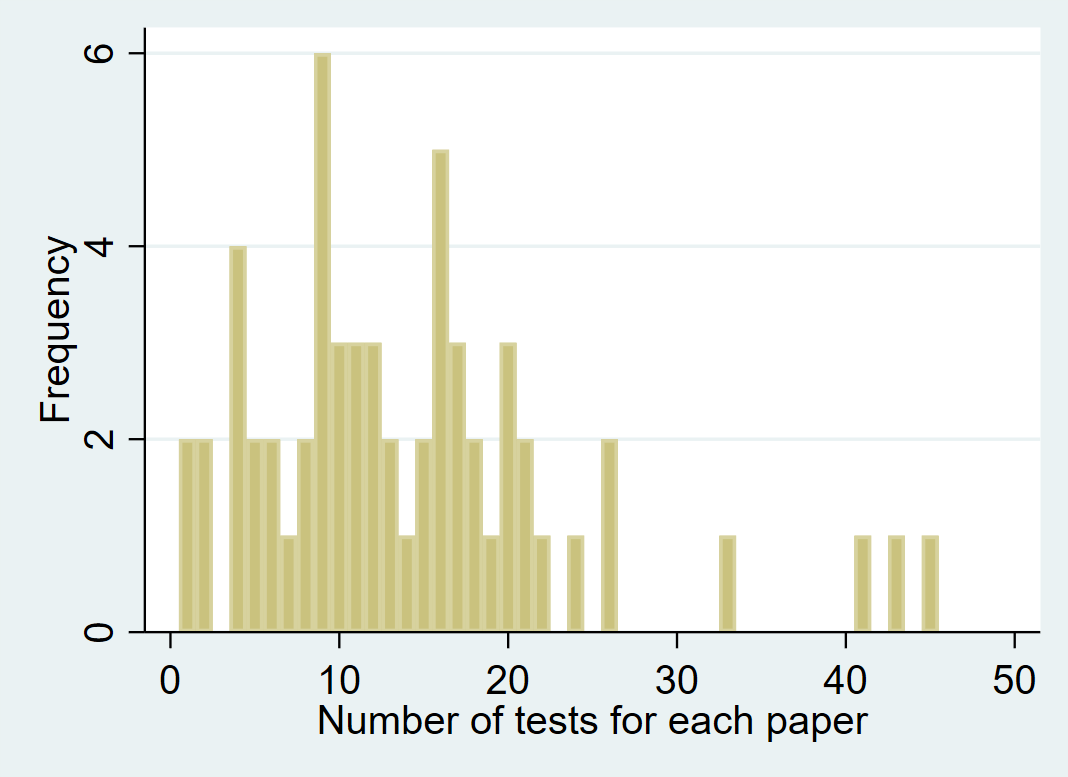}
    \label{fig:histogram}

  \flushleft
 \begin{minipage}{380pt}
{\fontsize{10pt}{10pt}\selectfont\smallskip\textit{Notes}: This figure is the histogram of the number of tests reported in each study.}
\end{minipage}
\end{figure}

\subsection{How does empirical studies respond to their rejections}

Upon rejection of their diagnostic tests, most studies defended their statistical significance in diagnostic tests by making some supporting arguments such as stating \textit{economical} insignificance of their rejected test statistics. Those supporting arguments are indeed desirable. However, the majority of the studies did not address the root cause of their rejections, the multiple testing problem.

Notably, only five studies reported joint testing for balance tests. All but one study did not consider the nonparametric nature of the test statistics. \footnote{We confirm some sort of joint testing in the following five studies: \cite{meyerssonIslamicRuleEmpowerment2014, abdulkadirogluEliteIllusionAchievement2014, johnsonRegulationShamingDeterrence2020, Fort_Ichino_Zanella_2020, jagerLaborBoardroom2021}. \cite{Fort_Ichino_Zanella_2020} use a nonparametric procedure by \cite{canayApproximatePermutationTests2018}.} Furthermore, no studies have reported joint testing of the balance and density tests because no method is available for the joint test.

We emphasize that only a few studies made explicit argument that \textit{testing many hypotheses} can result in their rejections of the diagnostic tests. Furthermore, their arguments were all informal without any multiple testing corrections. For example, \cite{tyreforshinnerichDemocracyRedistributionPolitical2014} stated that ``Only one of the 44 estimates is significant at the 5 percent level. However, that
is to be expected since, if 100 specifications are tested, it is likely that five will
be statistically significant by chance'' (page 981). With a unified diagnostic test, these studies could have formally concluded that their rejections were by chance in their data.

\section{Unified test based on joint asymptotic normality}\label{sec:3,4}
As shown in Section \ref{sec:2}, the multiple-testing problem causes the over-rejection of the underlying restriction which suggests plausible identification. We introduce a notation to formally state the problem. Let $(Z_{1,i},\ldots,Z_{d,i},X_i)'$, $i=1,\ldots,n$ denote the observed random sample, where $X_i$ is the running variable with density $f(x)$ with respect to the Lebesgue measure and $Z_i=(Z_{1,i},\ldots,Z_{d,i})'$ denote the pre-treatment covariates. 
Given a known threshold $\Bar{x}$, which we set to $\Bar{x}=0$ without loss of generality, let the running variable $X_i$ determine whether unit $i$ is assigned a treatment ($X_i\geq0$) against ($X_i<0$). In this notation, empirical applications frequently consider two different types of restrictions: the density function $f(x)$ is continuous at $x = 0$, and the conditional mean functions of $Z_k$, $\mu_{Z_k}(x)=E[Z_{k,i}|X_i=x]$ for $k=1,\ldots,d$, are continuous at $x = 0$. Specifically, the following $d+1$ restrictions must hold simultaneously
\[
 f_+ = f_-, \mu_{Z_1+} = \mu_{Z_1-}, \ldots, \mu_{Z_d+} = \mu_{Z_d-}
\]
where $f_+ = \lim_{x\to0^+} f(x)$, $f_- = \lim_{x\to0^-} f(x)$ and $\mu_{Z_k+}=\lim_{x\to0^+}\mu_{Z_k}(x)$, $\mu_{Z_k-}=\lim_{x\to0^-}\mu_{Z_k}(x)$.
Hence, one must conduct the following test:
\begin{equation*}
H_0: (\tau_f,\tau_{Z_1},\ldots,\tau_{Z_d})=0 \quad \text{vs} \quad
H_1: (\tau_f,\tau_{Z_1},\ldots,\tau_{Z_d})\neq0,
\end{equation*}
to determine whether identification is plausible where $\tau_f = f_+ - f_-$ and $\tau_{Z_k} = \mu_{Z_k+} - \mu_{Z_k-}$. However, if one runs $d+1$ tests separately:
\begin{align*}
 H_{0f}:& \tau_f = 0 \quad \text{vs} \quad H_{1f}: \tau_f \neq 0, \\
 H_{01}:& \tau_{Z_1} = 0 \quad \text{vs} \quad H_{11}: \tau_{Z_1} \neq 0, \\
 & \vdots \\
 H_{0d}:& \tau_{Z_d} = 0 \quad \text{vs} \quad H_{1d}: \tau_{Z_d} \neq 0,
\end{align*}
the null hypothesis $H_0$ is over-rejected as documented in Section \ref{sec:2}.

We resolved the size control problem by presenting a unified test in two steps. First, we present the theoretical basis for our unified test. Second, we propose two test statistics for the unified test against multiple testing.

\subsection{New joint normality result}\label{sec:3}
First, we derive the joint asymptotic normality of the nonparametric local polynomial estimators for $f$ and $(\mu_{Z_1}, \ldots, \mu_{Z_d})$ evaluated at the boundary of the support.

Following \cite{calonicoRobustNonparametricConfidence2014a} and \cite{cattaneoSimpleLocalPolynomial2019}, we use local polynomial regressions to approximate the unknown functions $f(x)$ and $\mu_{Z_k}(x)$ for $k=1,\ldots,d$ flexibly near the cutoff $\Bar{x}=0$; see \cite{fan1996local} for a review of local polynomial regressions. Based on the testing problem, we must obtain two distinct estimands for each function using the two subsamples $\{X_i: X_i\geq 0\}$ (approximation from the right) and $\{X_i: X_i< 0\}$ (approximation from the left) at a boundary point $x = 0$. These estimators are recommended owing to their excellent boundary properties. 

Using a polynomial expansion, we construct smooth local one-sided approximations of the sample average of $Z_{k,i}$ and obtain the estimators $\hat \mu_{Z_k+,l}(h_{k,n})$ and $\hat \mu_{Z_k-,l}(h_{k,n})$ as intercepts in the local polynomial regression. Specifically, we estimate $\tau_{Z_k}$ using the following $l$th-order local polynomial estimators: for some $l\geq1$ and for each $k=1,\ldots,d$,
\begin{equation*}
\hat\tau_{Z_k,l}(h_{k,n})=\hat \mu_{Z_k+,l}(h_{k,n})-\hat \mu_{Z_k-,l}(h_{k,n}),
\end{equation*}
\begin{eqnarray}\label{eq:1}
\hat \mu_{Z_k+,l}(h_{k,n}) &=& e_0'\hat{\beta}_{Z_{k}+,l}(h_{k,n}), \\ 
\hat \mu_{Z_k-,l}(h_{k,n}) &=& e_0'\hat{\beta}_{Z_{k}-,l}(h_{k,n}),
\end{eqnarray}
where
\begin{eqnarray}\label{eq:2}
\hat{\beta}_{Z_k+,l}(h_{k,n}) &=& \argmin_{\beta\in\R^{l+1}}
\sumin1\{X_i \geq 0\} (Z_{k,i}-r_l(X_i)'\beta)^2K(X_i/h_{k,n})/h_{k,n}, \\
\hat{\beta}_{Z_k-,l}(h_{k,n}) &=& \argmin_{\beta\in\R^{l+1}}
\sumin1\{X_i < 0\} (Z_{k,i}-r_l(X_i)'\beta)^2K(X_i/h_{k,n})/h_{k,n},
\end{eqnarray}
$e_0=(1,0,\ldots,0)'$ is the first $(l+1)$-dimensional unit vector, $r_l(x)=(1,x,\ldots,x^l)'$ is a $l$th order polynomial expansion, $K(\cdot)$ is a kernel function, and $h_{k,n}$ is a positive-bandwidth sequence.

Similarly, using a polynomial expansion, we construct smooth local one-sided approximations of the empirical distribution of $X_i$ and obtain the estimators $\hat f_{+,p}(h_{f,n})$ and $\hat f_{-,p}(h_{f,n})$ as the slope coefficients in the local polynomial regression. Specifically, we estimate $\tau_{f}$ using the following $p$th-order local polynomial estimator. For some $p\geq2$,
\begin{equation*}
\hat\tau_{f,p}(h_{f,n})=\hat f_{+,p}(h_{f,n}) - \hat f_{-,p}(h_{f,n}),
\end{equation*}
\begin{eqnarray}\label{eq:f1}
\hat f_{+,p}(h_{f,n}) &=& e_1'\hat{\beta}_{f+,p}(h_{f,n}), \\
\hat f_{-,p}(h_{f,n}) &=& e_1'\hat{\beta}_{f-,p}(h_{f,n}),
\end{eqnarray}
where
\begin{eqnarray}
\hat{\beta}_{f+,p}(h_{f,n})=\argmin_{\beta\in\R^{p+1}}
\sumin1\{X_i \geq 0\} (\tilde{F}(X_i)-r_p(X_i)'\beta)^2K(X_i/h_{f,n})/h_{f,n},\label{eq:f2.1} \\
\hat{\beta}_{f-,p}(h_{f,n})=\argmin_{\beta\in\R^{p+1}}
\sumin1\{X_i < 0\} (\tilde{F}(X_i)-r_p(X_i)'\beta)^2K(X_i/h_{f,n})/h_{f,n},\label{eq:f2.2}
\end{eqnarray}
$e_1=(0,1,0,\ldots,0)'$ is the second $(p+1)$-dimensional unit vector; $\tilde{F}(x) \equiv n^{-1}\sumin$\\$ 1\{X_i \leq x\}$ is the empirical distribution function of $X_i$, and $h_{f,n}$ is the positive-bandwidth sequence.

We derive the asymptotic distributions of the local polynomial estimators based on the assumptions in \cite{calonicoRobustNonparametricConfidence2014a} and \cite{cattaneoSimpleLocalPolynomial2019}, as follows.
\begin{assumption}\label{asp:1}
For some $\kappa_0 > 0$, the following holds true for $(-\kappa_0,0)\cup(0,\kappa_0)$ around the cutoff $\Bar{x}=0$.
\begin{enumerate}
\item[(a)] $f(x)$ is $R$ times continuously differentiable for some $R \geq 1$ and bounded away from zero.
\item[(b)] For $k=1,\ldots,d$, $\mu_{Z_k}(x)$ is $S$ times continuously differentiable for some $S \geq 1$, $\Var(Z_{k,i}|X_i=x)$ is continuous and bounded away from zero, and $E[|Z_{k,i}|^4|X_i=x]$ is bounded.
\end{enumerate}
\end{assumption}
Assumption \ref{asp:1} is a set of basic regularity conditions for the data generation process.
Assumption \ref{asp:1} (a) ensures that the density of the running variable is well defined and sufficiently smooth and that the observed values are arbitrarily close to the cutoff for large samples.
Assumption \ref{asp:1} (b) ensures that the conditional expectations and variances of the pre-treatment covariates are well defined and have sufficient smoothness with the existence of the fourth moments.

\begin{assumption}\label{asp:2}
For some $\kappa > 0$, the kernel function $K(\cdot):\R\to\R$ is symmetric, bounded, and non-negative; it is zero outside the support, and positive and continuous on $(-\kappa,\kappa)$.
\end{assumption}
Assumption \ref{asp:2} is a set of standard conditions for the kernel function and is satisfied for kernels such as triangular and uniform kernels, commonly used in empirical work. Our results can be extended to accommodate kernels with unbounded supports or asymmetric kernels with more complex notations.

We provide a joint asymptotic distribution for the local polynomial estimators. In the following, let $l,p\in\N$; $h_n=(h_{1,n},\ldots,h_{d,n})'$; $h_{max,n}=\max\{h_{1,n},\ldots,h_{d,n}\}$; $h_{min,n}=\min\{h_{1,n},\ldots,h_{d,n}\}$;
we assume that all the limits are $n\to\infty$ (and $h_{k,n}, h_{f,n}\to0$), unless stated otherwise.
Whenever there was no confusion, we dropped the sample size $n$ sub-index notation.
For derivatives $\mu_{Z_k}^{(s)}(x)$ 
and $f^{(s)}(x)$, let 
$\mu_{Z_k+}^{(s)}=\lim_{x\to0^+}\mu_{Z_k}^{(s)}(x)$, $\mu_{Z_k-}^{(s)}=\lim_{x\to0^-}\mu_{Z_k}^{(s)}(x)$,
$f_+^{(s)} = \lim_{x\to0^+} f^{(s)}(x)$, 
and $f_-^{(s)} = \lim_{x\to0^-} f^{(s)}(x)$.
\begin{theorem}\label{thm:1}
Suppose that Assumptions \ref{asp:1} and \ref{asp:2} hold, with $R \geq p + 1$ and $S\geq l+2$. If $n h_{min}\to\infty$, $nh_f^2 \to\infty$, $n (h_{max})^{2l+5}\to0$ and $n (h_f)^{2p+3} \to0$, then
\[
\begin{split}
\sqrt{n} 
\begin{bmatrix}
\sqrt{h_{1}}(\hat\tau_{Z_1,l}(h_1) - \tau_{Z_1} - (h_1)^{1+l}B_{Z_1,l,l+1}(h_1)) \\
\vdots \\
\sqrt{h_{d}}(\hat\tau_{Z_d,l}(h_d) - \tau_{Z_d} - (h_d)^{1+l}B_{Z_d,l,l+1}(h_d)) \\
\sqrt{h_{f}}(\hat\tau_{f,p}(h_f) - \tau_f - h_f^{p} B_{f,p,p+1}(h_f))
\end{bmatrix}
\dto& N_{d+1}(0,V_{l,p}),
\end{split}
\]
where $N_{d+1}(\mu,\Sigma)$ denotes the multivariate normal distribution with $(d+1)$-dimensional mean vector $\mu$ and
$(d+1)\times(d+1)$ covariance matrix $\Sigma$,
\[
\begin{split}
B_{Z^k,l,r}(h_k) 
=& B_{Z^k+,l,r}(h_k) - B_{Z^k-,l,r}(h_k) \\
=& \frac{\mu^{(r)}_{Z_k+}}{r!}\mB_{+,0,l,r}(h_k)
- \frac{\mu^{(r)}_{Z_k-}}{r!}\mB_{-,0,l,r}(h_k),
\end{split}
\]
\[
\begin{split}
B_{f,p,q}(h_f) 
=& B_{f,+,p,q}(h_f) - B_{f,-,p,q}(h_f) \\
=& \frac{f_+^{(q-1)}}{(q-1)!} \mB_{+,1,p,q}(h_f)
- \frac{f_-^{(q-1)}}{(q-1)!} \mB_{-,1,p,q}(h_f),
\end{split}
\]
and
\[
V_{l,p} = 
\begin{pmatrix}
    V_{Z,l} & 0\\
    0& V_{f,p}
\end{pmatrix}.
\]
The exact forms of $\mB_{+,s,l,r}(h)$ and $\mB_{-,s,l,r}(h)$ are provided in Online Appendix A.2. 
The exact forms of $V_{Z,l}$ and $V_{f,p}$ are provided in Online Appendix A.4.
\end{theorem}

Theorem \ref{thm:1} shows that the joint asymptotic distribution of $\hat\tau_{f,p}(h_f)$ and $\hat\tau_{Z_k,l}(h_k)$ is a multivariate normal distribution under the same set of conditions as in \cite{calonicoRobustNonparametricConfidence2014a} for $\hat\tau_{Z_k,l}(h_k)$ and \cite{cattaneoSimpleLocalPolynomial2019} for $\hat\tau_{f,p}(h_f)$. Hence, the proposed joint distributional approximation is a natural generalization of the distributional approximation of each local polynomial estimator in these studies.

We emphasize that the covariance matrix of $(\hat\tau_{Z_1,l}(h_1),\ldots,\hat\tau_{Z_d,l}(h_d))'$ and the variance of $\hat\tau_{f,p}(h_f)$ are asymptotically orthogonal. Orthogonality simplifies the estimation of the overall asymptotic covariance matrix because the asymptotic covariance of $\hat\tau_{f,p}(h_f)$ and $\hat\tau_{Z_k,l}(h_k)$ is known to be zero. We derive this orthogonality from the well-known fact that $\Cov(Z_{k,i} - \mu_{Z_k}(X_i) ,g(X_i))=0$ for any function $g$.

Orthogonality affects the nature of the multiple testing problem in various ways. When the test statistics are independent, the multiple testing problem is likely worse. Hence, the size control problem in the naive test procedure becomes worse. In turn, a typical multiple test correction such as Bonferroni correction may become less conservative. For our procedure, we have one less parameter to estimate, and the orthogonality may help stabilizing our procedure.

\begin{remark}[Increasing the polynomial order for bias correction]
We use bias-corrected estimators for the following analyses. In general, bias correction is critical in empirical contexts \citep{hyytinenWhenDoesRegression2018}. As highlighted in \cite{calonicoRobustNonparametricConfidence2014a} and \cite{cattaneoSimpleLocalPolynomial2019}, bias correction is important for inference.
As provided in Online Appendix A.2
, $\mB_{+,s,l,r}(h)$ and $\mB_{-,s,l,r}(h)$ are the observed quantities, and the bias-corrected estimators are constructed by replacing $\mu^{(r)}_{Z_k+}$, $\mu^{(r)}_{Z_k-}$, $f_+^{(q-1)}$, and $f_-^{(q-1)}$ with their local polynomial estimators.
Nevertheless, as shown in \cite{calonicoRobustNonparametricConfidence2014a}, if the same bandwidth and kernel function are used for estimating $\mu^{(r)}_{Z_k+}$ and $\mu^{(r)}_{Z_k-}$, the bias-corrected estimator is numerically equivalent to $\hat\tau_{Z_k,l+1}(h_k)$ (with no bias-correction).
A bias correction based on similar results has been adopted by \cite{cattaneoSimpleLocalPolynomial2019} for their local polynomial density estimators. This same bandwidth choice is shown to be optimal for the uniform kernel \citep{calonicoCoverageErrorOptimal2022}.
Following their approach, we implement bias correction by increasing the order of the local polynomial estimators. \footnote{See Online Appendix A.3 
for details.}
\end{remark}

\begin{remark}[Bandwidth selection]
We allow the bandwidths to differ for each estimator as long as the conditions stated in Theorem \ref{thm:1} are satisfied and the marginal bias and variance expressions are identical to those in \cite{calonicoRobustNonparametricConfidence2014a} and \cite{cattaneoSimpleLocalPolynomial2019}. Hence, we may apply the original data-driven bandwidth selection provided in \cite{calonicoRobustNonparametricConfidence2014a} for $h_1,\ldots,h_d$ and \cite{cattaneoSimpleLocalPolynomial2019} for $h_f$ to our joint distributional approximation. As described above, we used bias-corrected local polynomial estimators. Specifically, $h_k$ is the MSE-optimal bandwidth for $\hat\tau_{Z_k,l}(h_k)$, and $h_f$ is the MSE-optimal bandwidth for $\hat\tau_{f,p}(h_f)$. We construct the test statistics based on $\hat\tau_{Z_k,l+1}(h_k)$ and $\hat\tau_{f,p+1}(h_f)$. The latter implementation via $\hat\tau_{f,p+1}(h_f)$ is the default for \textit{rddensity}, based on \cite{cattaneoSimpleLocalPolynomial2019}.
\end{remark}

\begin{remark}[Asymptotic covariance matrix]
We estimate the asymptotic covariance matrix as follows.
\[
\hat V_{l,p}(h) = 
\begin{pmatrix}
    \hat V_{Z,l}(h) & 0\\
    0& \hat V_{f,p}(h_f)
\end{pmatrix},
\]
where
\[
\hat V_{Z,l}(h)\pto V_{Z,l},\quad \hat V_{f,p}(h_f)\pto V_{f,p}.
\]
We employed the estimators proposed by
\cite{calonicoRobustNonparametricConfidence2014a} and \cite{cattaneoSimpleLocalPolynomial2019} for the asymptotic variances of $\hat\tau_{Z_k,l}(h_k)$ and $\hat\tau_{f,p}(h_f)$ for each diagonal element of the asymptotic covariance matrix $\hat V_{Z,l}(h)$ and $\hat V_{f,p}(h_f)$. 
For the covariance estimators of $\hat\tau_{Z_j,l}(h_j)$ and $\hat\tau_{Z_k,l}(h_k)$, we follow \cite{calonicoRobustNonparametricConfidence2014a}. Specifically, we propose to estimate the elements of $\hat V_{Z,l}(h)$ based on the nearest-neighbor estimation, which may be more robust than plugging in the corresponding residuals of nonparametric regressions in finite samples.
We provide the exact forms of $\hat V_{Z,l}(h)$ and $\hat V_{f,p}(h_f)$ in Online Appendix  A.4.
\end{remark}

\subsection{Proposed unified tests}\label{sec:4}
Given the joint asymptotic normality result, we introduce a unified test for the plausibility of identification. We constructed the test statistics based on the bias-corrected local polynomial estimators $\hat\tau_{Z_k,l+1}(h_k)$ and $\hat\tau_{f,p+1}(h_f)$, where the bandwidth was chosen as MSE-optimal for $\hat\tau_{Z_k,l}(h_k)$ and $\hat\tau_{f,p}(h_f)$. Our unified test has two forms: one aggregates the vector of the density and balance test statistics with the $L^2$-norm, and the other aggregates them via the $L^\infty$-norm. 

The Wald (chi-squared) test statistic with the $L^2$-norm has the following form.
\begin{equation*}
\hat \chi^2_{l+1,p+1}(h) = ||(\hat V_{l+1,p+1}(h))^{-1/2}\hat T_{l+1,p+1}(h)||^2_2,
\end{equation*}
where
\begin{equation*}
\hat T_{l,p}(h)=
\sqrt{n} [\sqrt{h_{1}}\hat \tau_{Z_1,l}(h_1),\ldots,\sqrt{h_{d}}\hat \tau_{Z_d,l}(h_d),\sqrt{h_{f}}\hat\tau_{f,p}(h_f)]'.
\end{equation*}
This conventional test statistic suffers from severe size distortions in finite samples. In particular, the inverse of the asymptotic covariance matrix estimator becomes unstable for a sufficiently large dimension $d$. 
 
Instead, we propose a \textit{sWald} test statistic that modifies the Wald test statistic. Specifically, we standardize only the diagonal elements instead of multiplying the entire matrix by the inverse of the asymptotic covariance matrix. The sWald test statistic has the following form:
\begin{equation*}
\tilde \chi^2_{l+1,p+1}(h) = ||\tilde T_{l+1,p+1}(h)||^2_2,
\end{equation*}
where
\begin{equation*}
\tilde T_{l,p}(h) = \diag\{(\hat V_{Z_1,l}(h_1))^{-1/2},\ldots,(\hat V_{Z_d,l}(h_d))^{-1/2},(\hat V_{f,p}(h_f))^{-1/2}\} \hat T_{l,p}(h),
\end{equation*}
and $\diag(a_1,\ldots,a_n)$ is the $(n\times n)$ diagonal matrix with diagonal elements $a_1,\ldots,a_n$.


We obtain the asymptotic distribution of $\tilde \chi^2_{l+1,p+1}(h)$ from the joint asymptotic distribution for the local polynomial estimators, with their diagonal elements standardized by their standard errors $(\hat V_{f,p+1}(h_f))^{1/2}$ and $(\hat V_{Z_k,l+1}(h_k))^{1/2}$. We construct the standard errors of $\tau_{Z_k}$ as $\hat V_{Z_k,l+1}(h_k)\pto V_{Z_k,l+1}$, where $V_{Z_k,l+1}$ is the asymptotic variance of $\sqrt{n h_{k}}(\hat\tau_{Z_k,l+1}(h_k)-\tau_{Z_k})$.
The asymptotic covariance matrix of the standardized local polynomial estimators is the asymptotic correlation matrix of the local polynomial estimators for $\tau_f$ and $\tau_{Z_k}$. The correlation matrix $V^*_{l+1,p+1}$ is
\[
V^*_{l+1,p+1} = 
\begin{pmatrix}
    V^*_{Z,l+1} & 0\\
    0& 1
\end{pmatrix},
\]
where $V^*_{Z,l}=\diag(V_{Z_1,l}^{-1/2},\ldots,V_{Z_d,l}^{-1/2})V_{Z,l}\diag(V_{Z_1,l}^{-1/2},\ldots,V_{Z_d,l}^{-1/2})$; our estimator of $V^*_{l+1,p+1}$ is
\[
\hat V^*_{l+1,p+1}(h) = 
\begin{pmatrix}
    \hat V^*_{Z,l+1}(h) & 0\\
    0& 1
\end{pmatrix},
\]
where
\begin{align*}
 & \hat V^*_{Z,l}(h) = D_{\hat{V}}(h) \hat V_{Z,l}(h) D_{\hat{V}}(h)\mbox{ with } D_{\hat{V}}(h) = \diag\{(\hat V_{Z_1,l}(h_1))^{-1/2},\ldots,(\hat V_{Z_d,l}(h_d))^{-1/2}\}.
\end{align*}

In the following proposition, we show that the finite-sample distribution of $\tilde \chi^2_{l+1,p+1}(h)$ is approximated by the distribution of the sum of the squares of the $(d+1)$-dimensional multivariate normal random vector.
\begin{proposition}\label{prop:1.5}
Suppose Assumptions \ref{asp:1} and \ref{asp:2} hold with $R \geq p+2$ and $S\geq l+3$, and the asymptotic covariance matrix $V_{l+1,p+1}$ is nonsingular. If $n h_{min}\to\infty$, $nh_f^2 \to\infty$, $n (h_{max})^{2l+5}\to0$, and $n (h_f)^{2p+3} \to0$, then
\begin{eqnarray*}
\text{Under }H_0 &:& \lim_{n\to\infty} 
P(\tilde \chi^2_{l+1,p+1}(h)\geq \hat c_{l+1,p+1}(\alpha)) = \alpha, \\
\text{Under }H_1 &:& \lim_{n\to\infty} 
P(\tilde \chi^2_{l+1,p+1}(h)\geq \hat c_{l+1,p+1}(\alpha)) = 1,
\end{eqnarray*}
where $\hat c_{l+1,p+1}(\alpha)$ is the $(1 - \alpha)$-quantile of the distribution of $||N_{d+1}(0,\hat V^*_{l+1,p+1}(h))||^2_2$.
\end{proposition}

Proposition \ref{prop:1.5} establishes the asymptotic validity and consistency of the $\alpha$-level testing procedure, which rejects $H_0$ if $\hat c_{l+1,p+1}(\alpha)$ exceeds $\tilde \chi^2_{l+1,p+1}(h)$.
The critical value $\hat c_{l+1,p+1}(\alpha)$ is obtained numerically by generating $(d+1)$-dimensional standard normal random vectors using Monte Carlo simulation.

The \textit{Max} test statistic with $L^{\infty}$ norm takes the following form.
\begin{equation*}
\hat M_{l+1,p+1}(h) = ||\tilde T_{l+1,p+1}(h)||^2_\infty.
\end{equation*}
Similar to the sWald statistic, the Max statistic is constructed by standardizing diagonal elements. The following statement shows that the finite sample distribution of $\hat M_{l+1,p+1}(h)$ is approximated by the distribution of the maximum value of the $(d+1)$-dimensional multivariate normal random vector. \footnote{This construction via simulation is similar to the plug-in sup-t implementation in \cite{oleaSimultaneousConfidenceBands2019}.}
\begin{proposition}\label{prop:2}
Suppose Assumptions \ref{asp:1} and \ref{asp:2} hold with $R \geq p+2$ and $S\geq l+3$, and the asymptotic covariance matrix $V_{l+1,p+1}$ is nonsingular. If $n h_{min}\to\infty$, $nh_f^2 \to\infty$, $n (h_{max})^{2l+5}\to0$ and $n (h_f)^{2p+3} \to0$, 
\begin{eqnarray*}
\text{Under }H_0 &:& \lim_{n\to\infty} 
P(\hat M_{l+1,p+1}(h)\geq \hat m_{l+1,p+1}(\alpha)) = \alpha, \\
\text{Under }H_1 &:& \lim_{n\to\infty} 
P(\hat M_{l+1,p+1}(h)\geq \hat m_{l+1,p+1}(\alpha)) = 1,
\end{eqnarray*}
where $\hat m_{l+1,p+1}(\alpha)$ is the $(1 - \alpha)$-quantile of the distribution $||N_{d+1}(0,\hat V^*_{l+1,p+1}(h))||^2_\infty$.
\end{proposition}

Proposition \ref{prop:2} establishes the asymptotic validity and consistency of the $\alpha$-level testing procedure, which rejects $H_0$ if $\hat M_{l+1,p+1}(h)$ exceeds $\hat m_{l+1,p+1}(\alpha)$. The critical value $\hat m_{l+1,p+1}(\alpha)$ is obtained numerically by generating $(d+1)$-dimensional standard normal random vectors using Monte Carlo simulation.

\begin{remark}[Max test against Bonferroni correction]
The Max test is more powerful than the Bonferroni correction. The Bonferroni correction conducts asymptotic $t$ tests separately with an altered size of $\alpha/(d+1)$. The Bonferroni correction rejects $H_0$ if
\[
\hat M_{l+1,p+1}(h) \geq z^2(\alpha/2(d+1)),
\]
where $z(\alpha/2(d+1))$ is the $\left(1-\frac{\alpha}{2(d+1)}\right)$-quantile of $N(0,1)$, suggesting that the same $\hat M_{l+1,p+1}(h)$ is used for the Bonferroni correction and Max test. For any correlation matrix $V$, we obtain
\begin{eqnarray*}
P\left( \| N_{d+1}(0,V) \|_{\infty}^2 \geq z^2(\alpha/2(d+1)) \right) \leq \alpha.
\end{eqnarray*}
Furthermore, $z^2(\alpha/2(d+1))$ must be greater than $\hat m_{l+1,p+1}(\alpha)$
because $\hat m_{l+1,p+1}(\alpha)$ is the $(1 - \alpha)$-quantile of distribution $||N_{d+1}(0,\hat V^*_{l+1,p+1}(h))||^2_\infty$. Therefore, the rejection probability of the Max test is always higher than that of the Bonferroni correction.
\end{remark}

\begin{remark}
 In general, neither Max test nor sWald test might not dominate the other in their power properties. On the one hand, the sWald (and Wald) test has the rejection region which is more sensitive to multiple small jumps than that of the Max test. On the other hand, the rejection region of the Max test is more sensitive to a single large jump than that of the sWald test. Since both cases are sensible in assessing the plausibility of the RD design, it may be difficult to conclude which tests are superior. In the simulation in Section \ref{sec:5}, we confirmed this feature in the power analysis.
\end{remark}

\begin{remark}
    Multiple testing problem for multiple outcomes is also a serious concern. Our \textit{rdtest} package handles a special case of the test for the balance tests only, and the proposed tests are readily applicable to a joint testing for multiple outcomes.
\end{remark}

\section{Simulation evidences for the unified tests}\label{sec:5}

Given the established theoretical properties of the proposed tests, their performance was evaluated using the Monte Carlo experiment. We conducted $3000$ replications to generate a random sample 
$\{(X_i, Z_{1,i}, \ldots, Z_{d,i})':i=1,\ldots,n\}$ with size $n= \{500, 1000\}$ for each of them. \footnote{The detailed simulation data generating process is in Online Appendix C.}

We specify the distribution of running variable $X_i$ as the weighted average of two truncated normal distributions, where $\bar{p} \geq 0.5$ determines the weights. The density of $X_i$ is shown in Figure \ref{fig:fx}. The density $f(x)$ is continuous at $x=0$ if and only if $\bar{p} = 0.5$ but jumps at $x=0$ when $\bar{p} > 0.5$.
\begin{figure}[H]
\centering
\scalebox{1.4}{
\begin{tikzpicture}
 \coordinate[label=below left:\scriptsize{O}] (O) at (2,0); 
 \coordinate (XS) at (-0.5,0); 
 \coordinate (XL) at (4.5,0); 
 \coordinate (YS) at (2,-0.5); 
 \coordinate (YL) at (2,4); 
 \draw[->,>=stealth] (XS)--(XL) node[right] {\scriptsize{$x$}}; 
 \draw[name path=y,->,>=stealth] (YS)--(YL) 
 node[right] {\scriptsize{$y$}}; 
 \draw[green,semithick,samples=100,domain=-0.5:4.5] 
 plot(\x,{exp(-pow((\x-2)/10,2)/(2*pow(0.12,2)))/sqrt(2*pow(0.12,2)*pi)});
 \draw[name path=fp,red,semithick,samples=100,domain=2:4.5] 
 plot(\x,{exp(-pow((\x-2)/10,2)/(2*pow(0.12,2)))/sqrt(2*pow(0.12,2)*pi)*(1-0.48)/0.5});
 \draw[name path=fm,red,semithick,samples=100,domain=-0.5:2] 
 plot(\x,{exp(-pow((\x-2)/10,2)/(2*pow(0.12,2)))/sqrt(2*pow(0.12,2)*pi)*0.48/0.5});
 
 \draw[] (1,2.5) node[left] {\scriptsize{$f(x)$}};
 \path[name intersections={of= y and fp, by={FP}}];
 \path[name intersections={of= y and fm, by={FM}}];
 \draw[<-,>=stealth] (FP) arc (130:90:1.15); 
 \draw (FP)++(130:-1.15)++(90:1.15)
 node[right]{\scriptsize{$f_+$}};
 \draw[<-,>=stealth] (FM) arc (60:20:1.15); 
 \draw (FM)++(67:-1.15)++(27:1.15)
 node[below]{\scriptsize{$f_-$}};
 \draw[green,semithick] (4,3.5)--(4.75,3.5) node[black,right] {\scriptsize{$\bar{p}=0.5$}};
 \draw[red,semithick] (4,3)--(4.75,3) node[black,right] {\scriptsize{$\bar{p}>0.5$ }};
 \draw[] (3.9,2.75) rectangle (6.2,3.75);
\end{tikzpicture}
}
 \caption{Graph of $f(x)$}\label{fig:fx}
\end{figure}

For covariates $Z_i$, we consider two specifications: first, one of the $d$ covariates sees a jump; second, all $d$ covariates see a jump, with each jump size divided by $d$. For the first specification, we have $\mu_{Z_k}(x)=E[Z_{k,i}|X_i=x], k = 1\ldots, d$ as follows.
\[
\mu_{Z_k}(x) = \lambda(x) \text{ for }k=1,\ldots,d-1\quad \text{and} \quad
\mu_{Z_d}(x) =
\begin{cases}
\lambda(x)  & \text{if }x<0\\
\lambda(x) + a & \text{if }x\geq 0,
\end{cases}
\]
where only the $d$th covariate sees a jump, and the functional form of $\lambda(x)$ is obtained from the simulated data close to the data in \cite{leeRandomizedExperimentsNonrandom2008} following \cite{calonicoRobustNonparametricConfidence2014a}, where $a\geq0$ is the level of discontinuity of $\mu_{Z_d}(x)$ at $x=0$. For the second specification, we have $\mu_{Z_k}(x)=E[Z_{k,i}|X_i=x], k=1,\ldots,d$ as follows.
\[
\mu_{Z_k}(x) =
\begin{cases}
\lambda(x)  & \text{if }x<0\\
\lambda(x) + a/d & \text{if }x\geq 0,
\end{cases}
\]
where $a/d$ denotes the discontinuity level for each $\mu_{Z_k}(x)$ at $x=0$,
For both cases, let $1>\rho\geq0$ denote the conditional correlation between $Z_{j,i}$ and $Z_{k,i}$, given $X_i=x$. In this data-generating process, the null hypothesis for the joint manipulation test, $H_0: (\tau_f,\tau_{Z_1},\ldots,\tau_{Z_d})=0$ is true if and only if $\bar{p} = 0.5$ and $a = 0$; $\tau_f > 0$ when $\bar{p} > 0.5$; $\tau_{Z_d}=a>0$ when $a>0$.

We conducted the tests with size $\alpha=0.05$, employing local polynomial estimators $(\hat{\tau}_{Z_{1},2}(h_1),….,\hat{\tau}_{Z_{d},2}(h_d), \hat{\tau}_{f,3}(h_f))$ described in Section \ref{sec:3} with $K(u)=(1-|u|)\vee0$, where the bandwidths are chosen to be MSE-optimal for $\hat\tau_{Z_k,1}(h_k)$ and $\hat\tau_{f,2}(h_f)$.
The selection of the kernel function and polynomial order follows \cite{calonicoRobustNonparametricConfidence2014a} and \cite{cattaneoSimpleLocalPolynomial2019}. \footnote{As mentioned in the previous sections, the polynomial orders are chosen so that the estimates are bias-corrected for their valid inference. This choice of polynomial order is the default for \textit{rddensity} and available as $\rho = 1$ option for \textit{rdrobust.}}
Below, we report the empirical sizes and powers of the conventional and proposed testing methods.

We compared five testing methods: (i) naive testing; (ii) Bonferroni correction of naive testing; (iii) Wald (chi-square) test; (iv) Max test proposed in Section \ref{sec:4}; (v) sWald test proposed in Section \ref{sec:4}. Naive testing separately conducts $d+1$ asymptotic $t$ tests, as described in \cite{calonicoRobustNonparametricConfidence2014a} and \cite{cattaneoSimpleLocalPolynomial2019} with a size $\alpha$ for each of the $d+1$ restrictions of the null hypothesis, $\tau_f=0, \tau_{Z_1}=0,\ldots,\tau_{Z_d}=0$. Naive testing rejects the null hypothesis if at least one test rejects it. The Bonferroni correction conducts asymptotic $t$ tests separately, but the size is changed to $\alpha/(d+1)$ to restrict type 1 error of the null hypothesis that all $d+1$ restrictions hold. \footnote{We do not benefit from rejecting more than one test for evaluating the underlying identifying restriction. Thus, Holm correction for Bonferroni test does not improve the following results because the probability of rejecting at least one null hypothesis remains the same in two corrections.}

In Tables \ref{tab:5.1}–\ref{tab:5.3}, we compare the empirical size of each test under the null hypothesis (where $\bar{p} = 0.5$ and $a = 0$) with different correlation coefficients $\rho=0,0.5,0.9$. Notably, naive testing suffers from size distortion, and over-rejection worsens when $d$ is large. Similarly, the Wald test without correction suffers from size distortion as $d$ increases. As discussed in Section \ref{sec:4}, the effective sample size may be too small compared to a large $d$.
When $d$ is large, the covariance matrix increases, and inverse matrix estimation becomes unstable when the effective sample size is limited.

\begin{table}[H]
\caption{Empirical size of tests (i)--(v) ($\rho = 0$)}\label{tab:5.1}
\centering
\scalebox{0.9}{
\begin{tabular}{c|c c c c c|c c c c c} 
\multicolumn{1}{c|}{} & \multicolumn{5}{|c|}{n: 500} & \multicolumn{5}{|c}{n: 1000} \\ 
dim & {naive} & bonfe & Wald & {Max} & sWald & {naive} & bonfe & Wald & {Max} & sWald\\ \hline
1 & 0.095 & 0.050 & 0.047 & 0.060 & 0.051 & 0.081 & 0.035 & 0.039 & 0.048 & 0.046 \\ 3 & 0.185 & 0.056 & 0.068 & 0.064 & 0.049 & 0.173 & 0.049 & 0.054 & 0.054 & 0.043 \\ 5 & 0.289 & 0.070 & 0.112 & 0.077 & 0.055 & 0.271 & 0.052 & 0.073 & 0.057 & 0.044 \\ 10 & 0.464 & 0.086 & 0.271 & 0.088 & 0.055 & 0.425 & 0.059 & 0.132 & 0.060 & 0.041 \\ 25 & 0.787 & 0.114 & 0.896 & 0.116 & 0.034 & 0.748 & 0.074 & 0.517 & 0.071 & 0.032
\end{tabular}
}

\vspace{0.5cm}
\caption{Empirical size of tests (i)--(v) ($\rho = 0.5$)}\label{tab:5.2}
\centering
\scalebox{0.9}{
\begin{tabular}{c|c c c c c|c c c c c}
\multicolumn{1}{c|}{} & \multicolumn{5}{|c|}{n: 500} & \multicolumn{5}{|c}{n: 1000} \\ 
dim & {naive} & bonfe & Wald & {Max}  & sWald & {naive} & bonfe & Wald & {Max}  & sWald\\ \hline
3 & 0.177 & 0.053 & 0.072 & 0.061 & 0.060 & 0.167 & 0.045 & 0.054 & 0.051 & 0.050 \\5 & 0.252 & 0.058 & 0.107 & 0.071 & 0.060 & 0.234 & 0.051 & 0.072 & 0.060 & 0.055 \\10 & 0.362 & 0.076 & 0.248 & 0.092 & 0.068 & 0.327 & 0.051 & 0.119 & 0.062 & 0.054 \\25 & 0.545 & 0.076 & 0.862 & 0.097 & 0.060 & 0.500 & 0.058 & 0.468 & 0.073 & 0.051
\end{tabular}
}

\vspace{0.5cm}
\caption{Empirical size of tests (i)-(v) ($\rho =0.9$)}\label{tab:5.3}
\centering
\scalebox{0.9}{
\begin{tabular}{c|c c c c c|c c c c c}
\multicolumn{1}{c|}{} & \multicolumn{5}{|c|}{n: 500} & \multicolumn{5}{|c}{n: 1000} \\ 
dim & {naive} & bonfe & Wald & {Max}  & sWald & {naive} & bonfe & Wald & {Max}  & sWald\\ \hline
3 & 0.132 & 0.042 & 0.066 & 0.055 & 0.060 & 0.124 & 0.033 & 0.051 & 0.049 & 0.053 \\5 & 0.150 & 0.036 & 0.090 & 0.059 & 0.048 & 0.149 & 0.035 & 0.068 & 0.054 & 0.050 \\10 & 0.186 & 0.037 & 0.216 & 0.072 & 0.070 & 0.169 & 0.023 & 0.104 & 0.050 & 0.050 \\25 & 0.226 & 0.024 & 0.820 & 0.060 & 0.056 & 0.204 & 0.018 & 0.406 & 0.047 & 0.047
\end{tabular}
}

\vspace{0.5cm}
\begin{minipage}{380pt}
{\fontsize{10pt}{10pt}\selectfont\smallskip\textit{Notes}: (i) ``naive'': empirical size of naive testing; (ii) ``bonfe'': empirical size of Bonferroni correction; (iii) ``Max'': empirical size of the Max test; (iv) ``Wald'': empirical size of the Wald (chi-square) test; (v) ``sWald'': standardized Wald (chi-square) test; ``dim'': dimension of the pre-treatment covariates $Z_i=(Z_{1,i},\ldots,Z_{d,i})'$; Columns under ``n: 500'' and ``n: 1000'' report the results obtained with size $n=500$ and $n=1000$, respectively.}
\end{minipage}
\end{table}

The performances of the other methods vary according to sample size,  dimension of covariates, and correlation coefficients. For the relatively small sample size of $n = 500$, the superiority of these methods is ambiguous. For a relatively small number of covariates $dim \in \{1,3,5\}$, the sWald test controls for size by $6\%$, whereas the Max test and Bonferroni correction control for size by approximately $7\%$. For a larger number of covariates, the Bonferroni correction and Max test may fail to control size by $10\%$, whereas the sWald test controls size by $7\%$ throughout. For a larger sample size of $n = 1000$, the sWald test dominates the Bonferroni correction for size control. The sWald test controls size by $5.5\%$ for any correlation and number of covariates up to $25$. \footnote{We consider that the failure in the Bonferroni correction comes from challenges in evaluating the tail of the nonparametric test statistics.} For a moderate number of covariates up to $dim = 5$ or $10$, the Max test can control the size by $6\%$ and is not substantially worse than the Bonferroni correction. Therefore, when the sample size is not small, the sWald test can control the size for any dimension up to dim = $25$; the Max test and Bonferroni correction can control the size up to dim = $5$ or $10$.

Given the results of size control, we evaluate their empirical power properties. We compare the empirical power of the (ii) Bonferroni correction, (iv) Max test, and (v) sWald test under $\bar{p}=0.575$ and $\tau_{Z_d}(=a)=0,0.5,1,1.5,2$ for different dimensions $d=1,3,5$.
Figure \ref{fig:5.1} presents the results for the scalar covariate. Figures \ref{fig:5.3.last}, and \ref{fig:5.5.last} present the results when one of the $d$ covariates sees a jump. Further, Figures \ref{fig:5.3.all} and \ref{fig:5.5.all} show the results when all $d$ covariates see a jump. However, each jump size is divided by $d$ for $d=3,5$.
In Figures \ref{fig:5.3.last}–\ref{fig:5.5.all}, each figure presents the results for different correlations $\rho = 0.5, 0.9$.\footnote{In Online Appendix D
, we present the results for weaker correlations and results containing negative correlations.}

In all cases, the Max test exhibits higher power than the Bonferroni correction. When one of the covariates jumps (Figures \ref{fig:5.3.last} and \ref{fig:5.5.last}), the Max test outperforms the Bonferroni correction and sWald test, and the gap among the three methods expands as the correlation and dimension increase. While the sWald test has relatively less power to detect small jumps in the covariate, the power of the sWald test catches up quickly as the jump size increases. 

When all covariates have seen jumps (see Figure \ref{fig:5.3.all} and \ref{fig:5.5.all}), the sWald test outperforms the Max test and Bonferroni correction in most cases. When the correlation is weak, the sWald test is substantially superior to the other two methods; however, the Max test approaches the sWald test as the correlation and dimensions increase. In Online Appendix D
, we report qualitatively similar simulation results for other specifications, including lower or negative correlation coefficients.

To summarize, we recommend the Max test when one suspects that a few covariates exhibit jumps but not the others among a moderate number of covariates. We recommend the sWald test for other cases because it has superior size control and power properties for detecting jumps among many covariates.

\begin{figure}[H]
 \centering
  \includegraphics[width=\hsize]{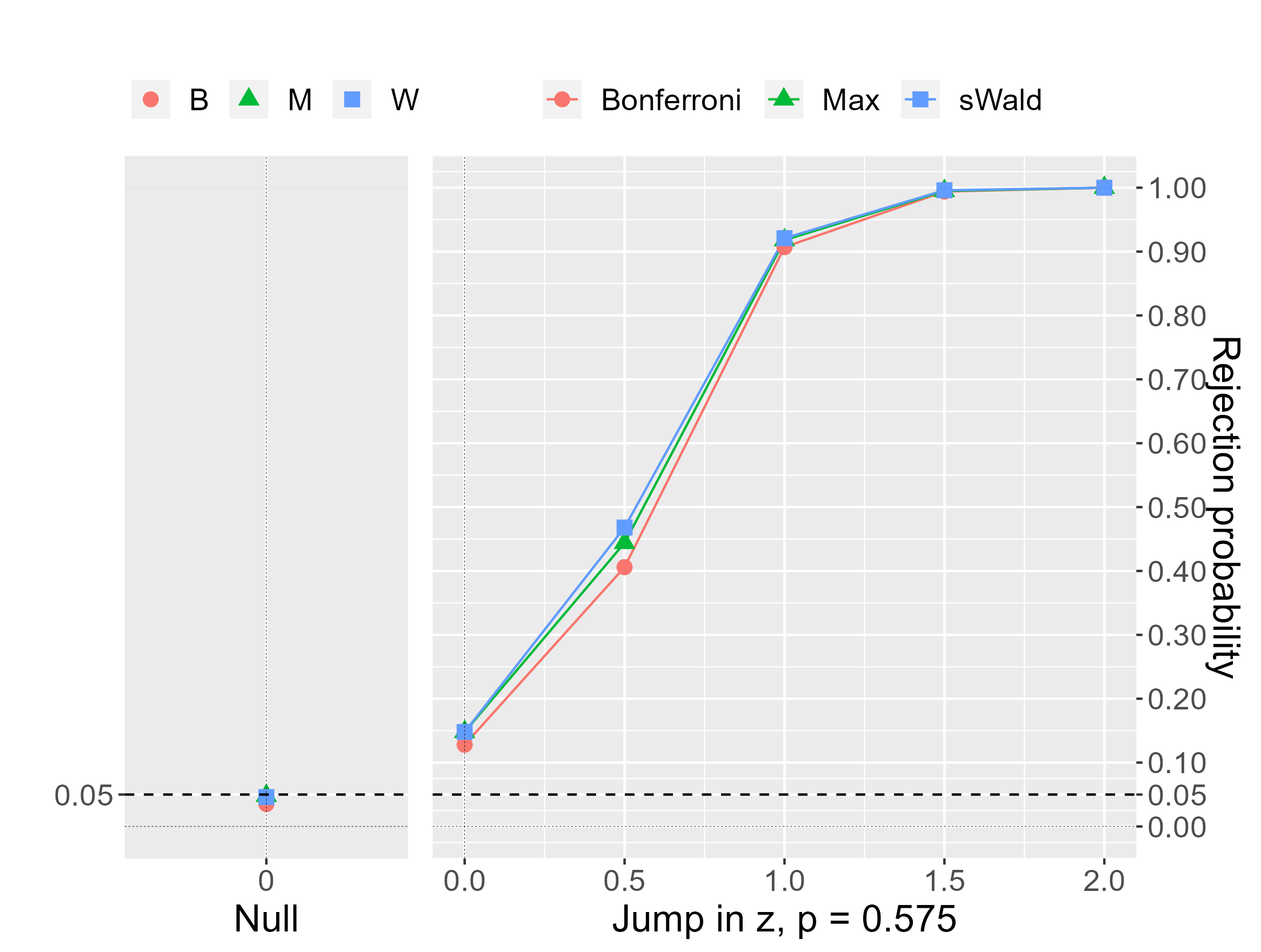}
  \caption{A jump in a scalar $Z$ and a density test, $n = 1000$}\label{fig:5.1}
  \begin{minipage}{380pt}
{\fontsize{10pt}{10pt}\selectfont\smallskip\textit{Notes}: 
Jump in $z$ is relative to its standard deviation of the error term. The density has the discontinuity with the density ratio of $0.575/0.425 = 1.353$.
}
\end{minipage}
\end{figure}

\begin{figure}[H]
 \centering
 \begin{minipage}[b]{0.8\hsize}
   \includegraphics[width=\hsize]{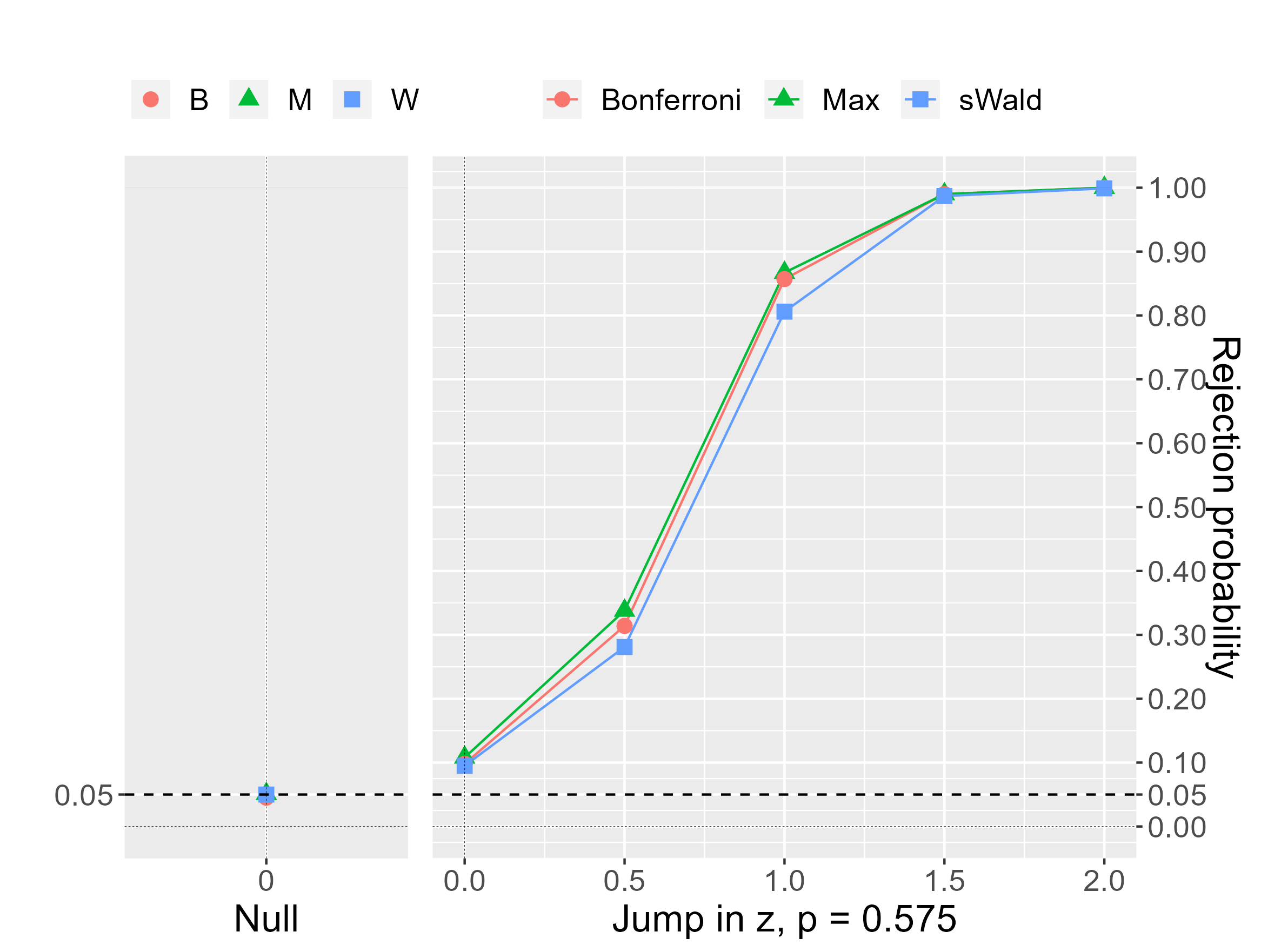}
   \subcaption{Covariates have the same pairwise correlation coefficient of $0.5$}
 \end{minipage} \\
 \begin{minipage}[b]{0.8\hsize}
   \includegraphics[width=\hsize]{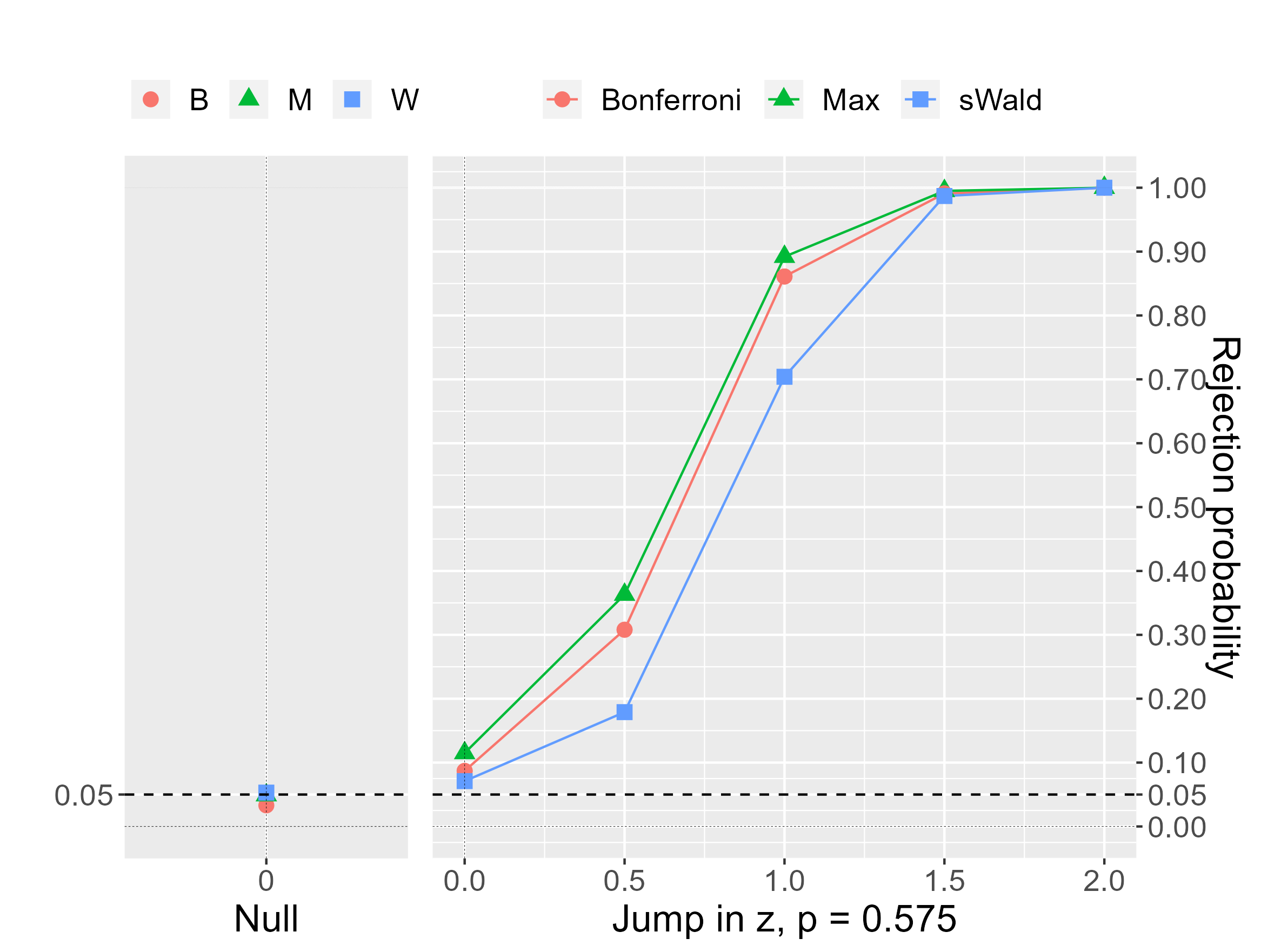}
  \subcaption{Covariates have the same pairwise correlation coefficient of $0.9$}
 \end{minipage} 
  \caption{A jump in the three covariates with a density, $n = 1000$}\label{fig:5.3.last} 
\begin{minipage}{380pt}
{\fontsize{10pt}{10pt}\selectfont\smallskip\textit{Notes}: 
Jump in $z$ is relative to its standard deviation of the error term. The density has the discontinuity with the density ratio of $0.575/0.425 = 1.353$.
}
\end{minipage}
\end{figure}

\begin{figure}[H]
 \centering
 \begin{minipage}[b]{0.8\hsize}
   \includegraphics[width=\hsize]{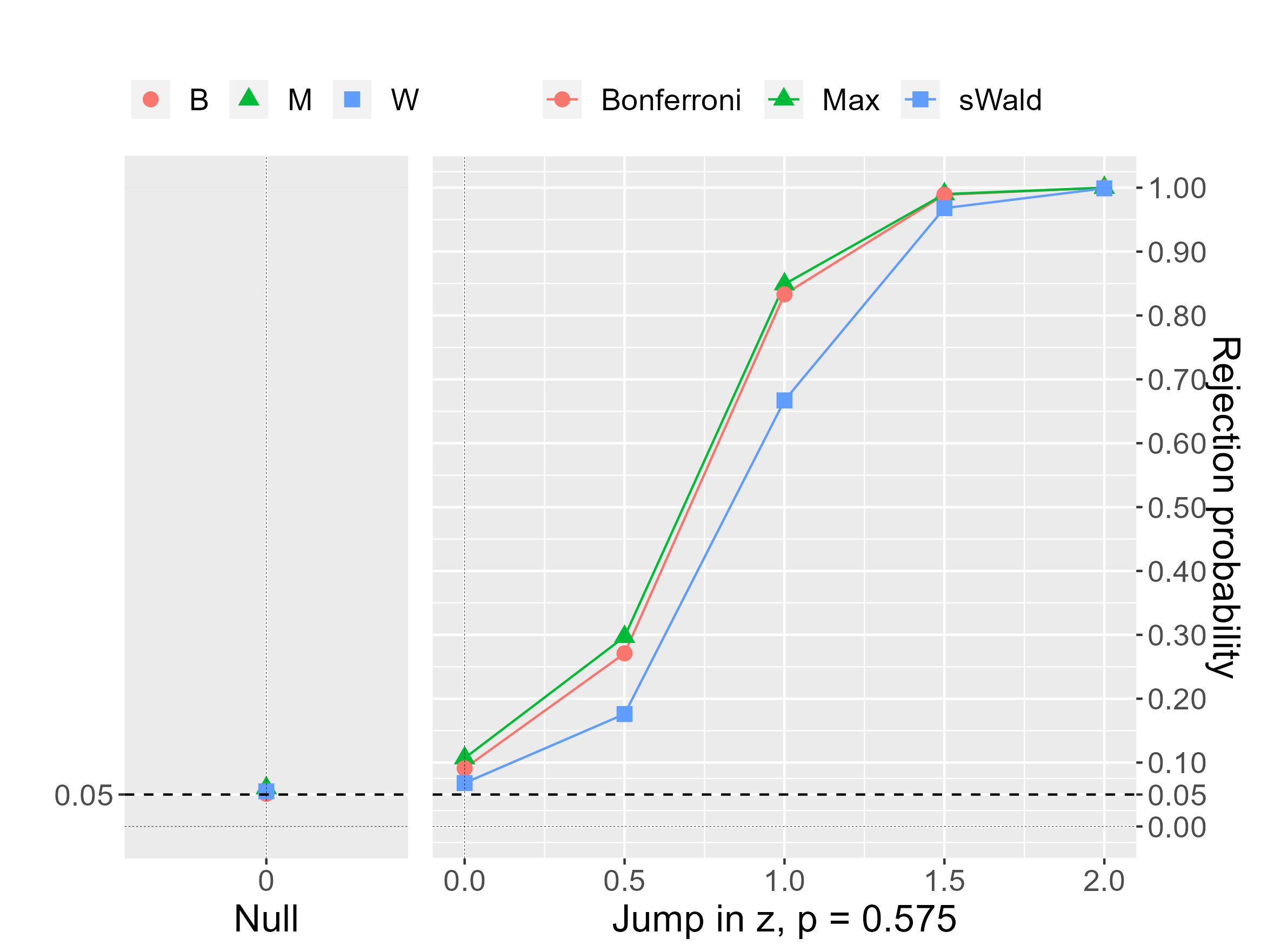}
   \subcaption{Covariates have the same pairwise correlation coefficient of $0.5$}
 \end{minipage} \\
 \begin{minipage}[b]{0.8\hsize}
   \includegraphics[width=\hsize]{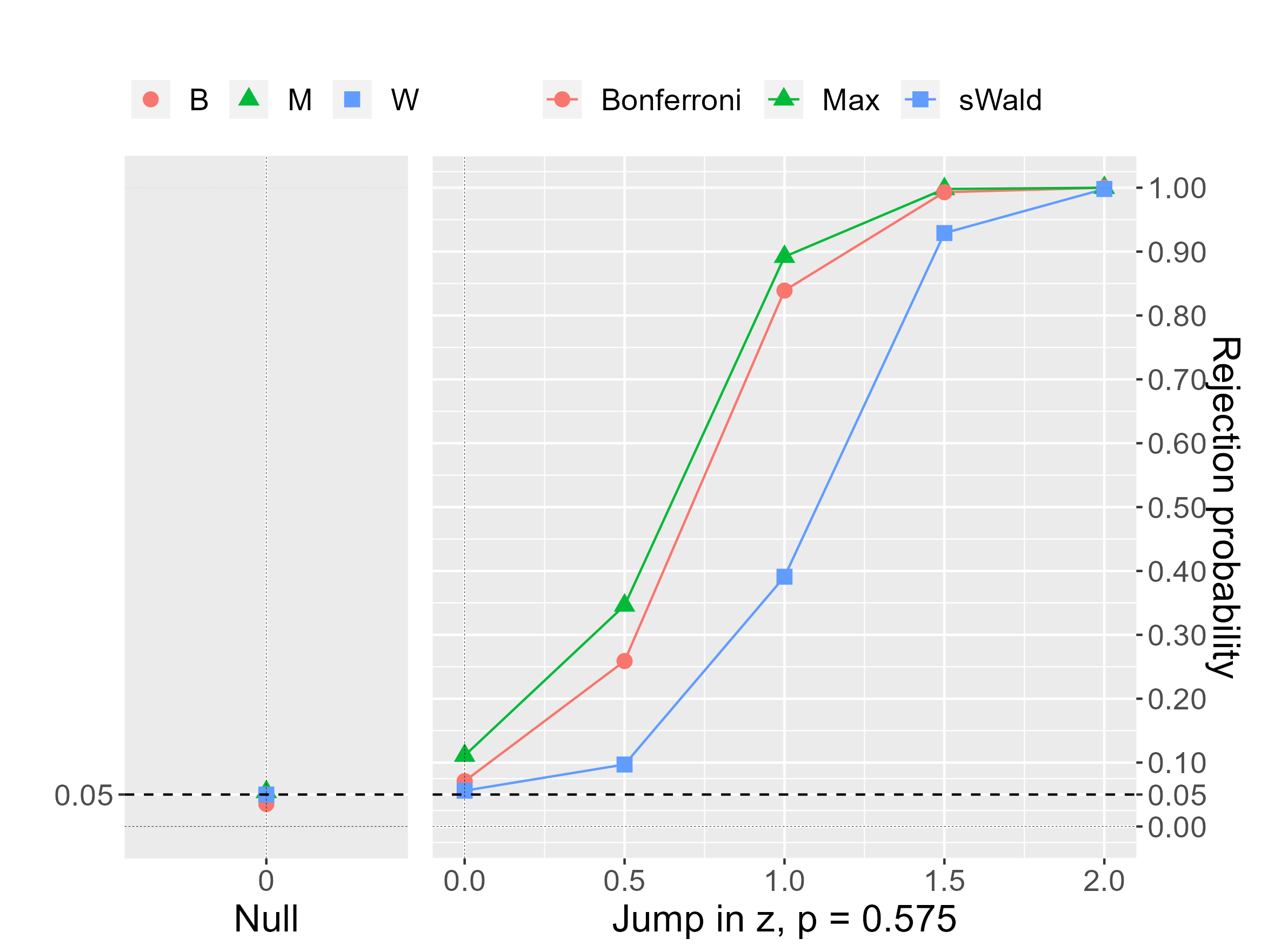}
  \subcaption{Covariates have the same pairwise correlation coefficient of $0.9$}
 \end{minipage} 
  \caption{A jump in the five covariates with a density, $n = 1000$}\label{fig:5.5.last} 
\begin{minipage}{380pt}
{\fontsize{10pt}{10pt}\selectfont\smallskip\textit{Notes}: 
Jump in $z$ is relative to its standard deviation of the error term. The density has the discontinuity with the density ratio of $0.575/0.425 = 1.353$.
}
\end{minipage}
\end{figure}

\begin{figure}[H]
 \centering
 \begin{minipage}[b]{0.8\hsize}
   \includegraphics[width=\hsize]{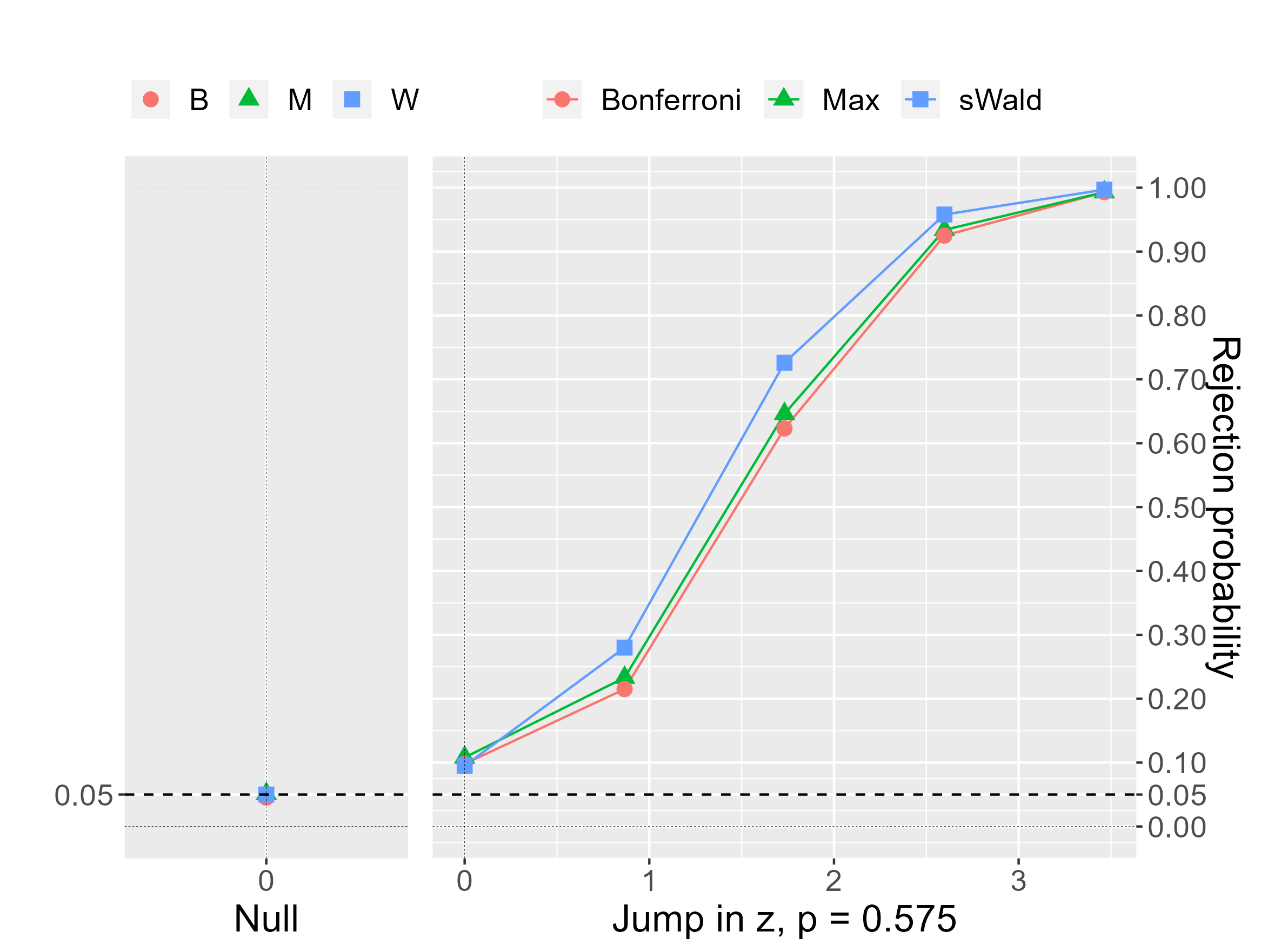}
   \subcaption{Covariates have the same pairwise correlation coefficient of $0.5$}
 \end{minipage} \\
 \begin{minipage}[b]{0.8\hsize}
   \includegraphics[width=\hsize]{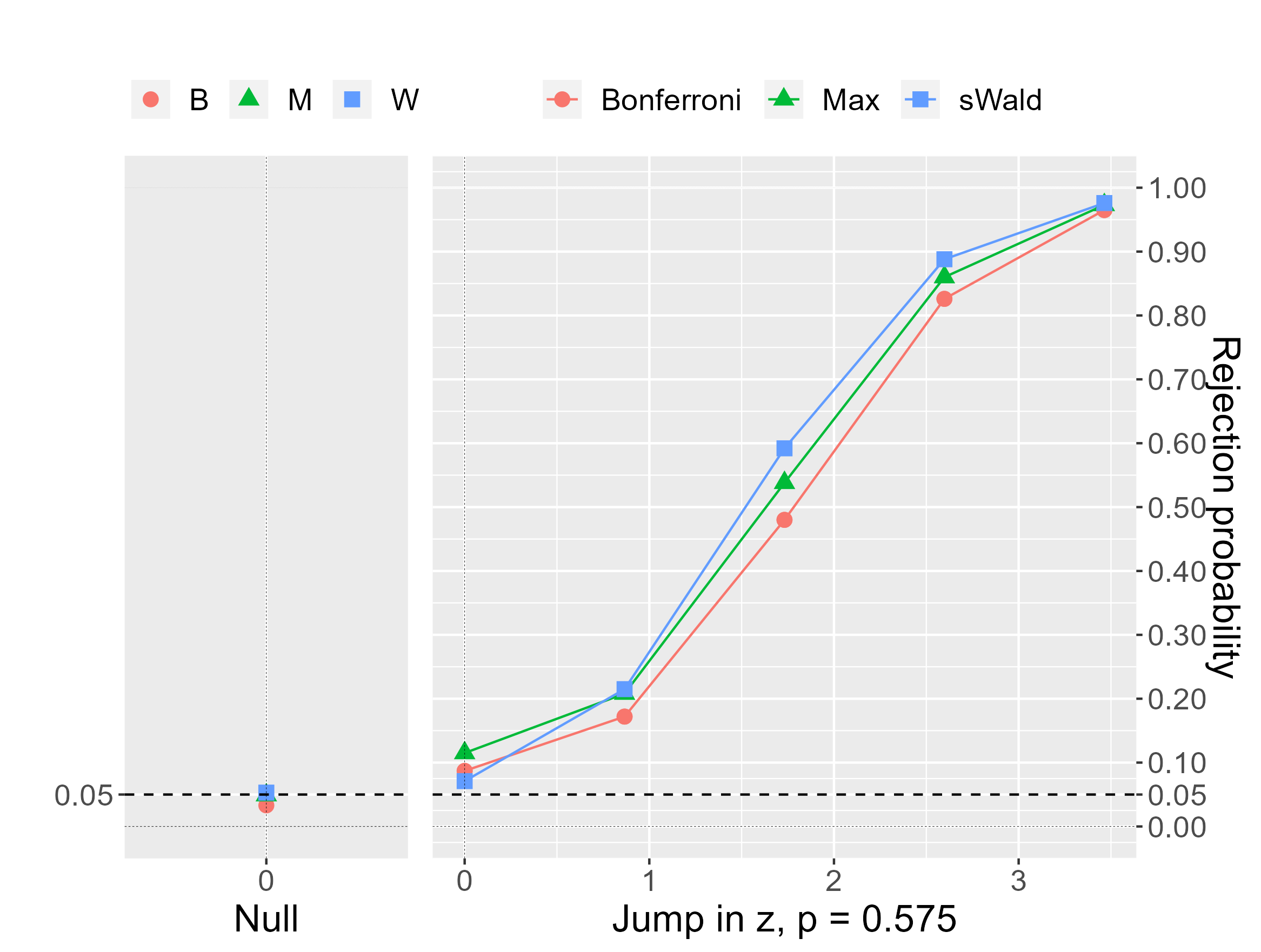}
  \subcaption{Covariates have the same pairwise correlation coefficient of $0.9$}
 \end{minipage} 
  \caption{$1/3$ of jumps in all the three covariates with density, $n = 1000$}\label{fig:5.3.all} 
  \begin{minipage}{380pt}
{\fontsize{10pt}{10pt}\selectfont\smallskip\textit{Notes}: 
Jump in $z$ divided by the number of covariates is relative to its standard deviation of the error term. The density has the discontinuity with the density ratio of $0.575/0.425 = 1.353$.
}
\end{minipage}
\end{figure}

\begin{figure}[H]
 \centering
 \begin{minipage}[b]{0.8\hsize}
   \includegraphics[width=\hsize]{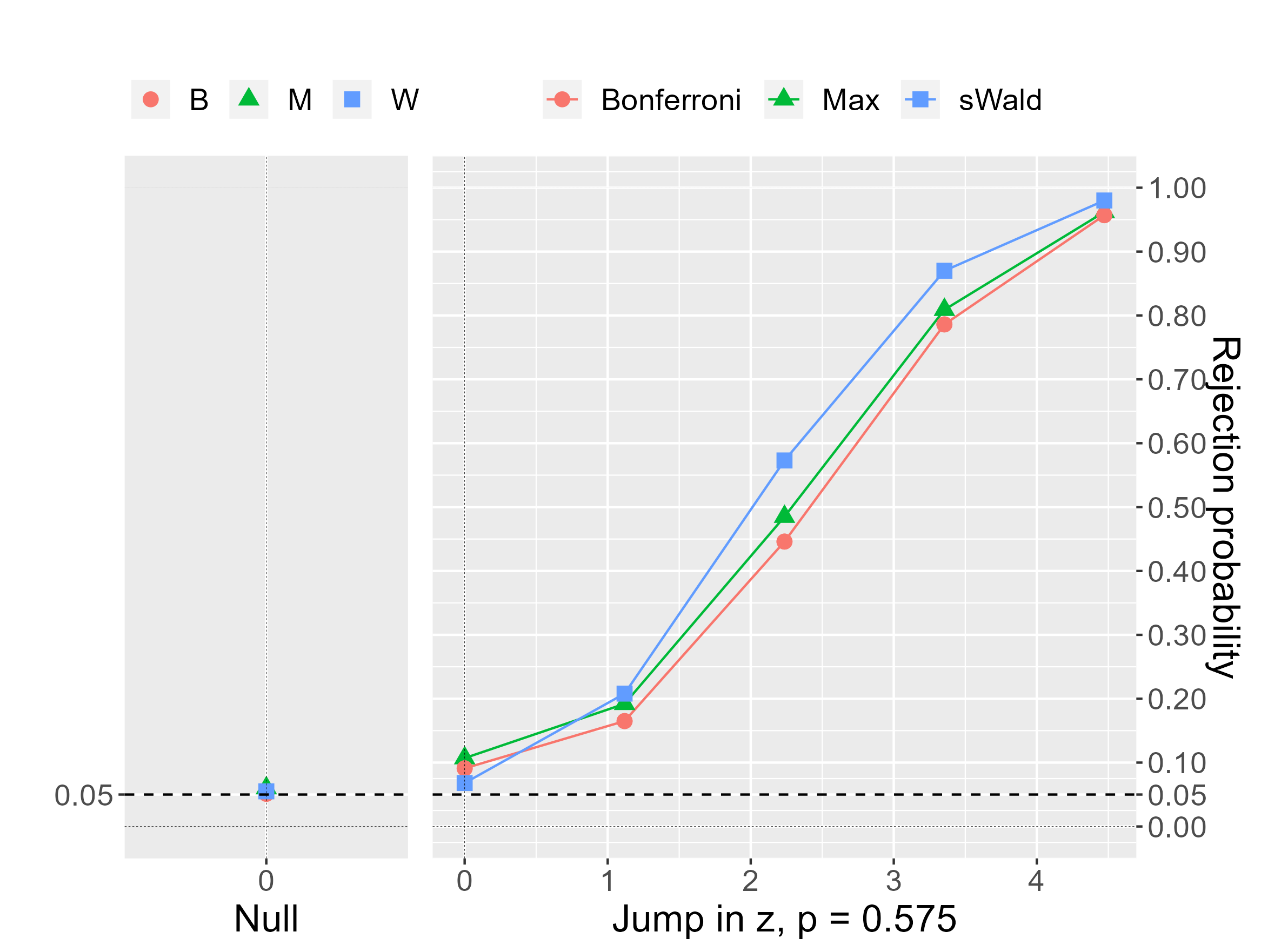}
   \subcaption{Covariates have the same pairwise correlation coefficient of $0.5$}
 \end{minipage} \\
 \begin{minipage}[b]{0.8\hsize}
   \includegraphics[width=\hsize]{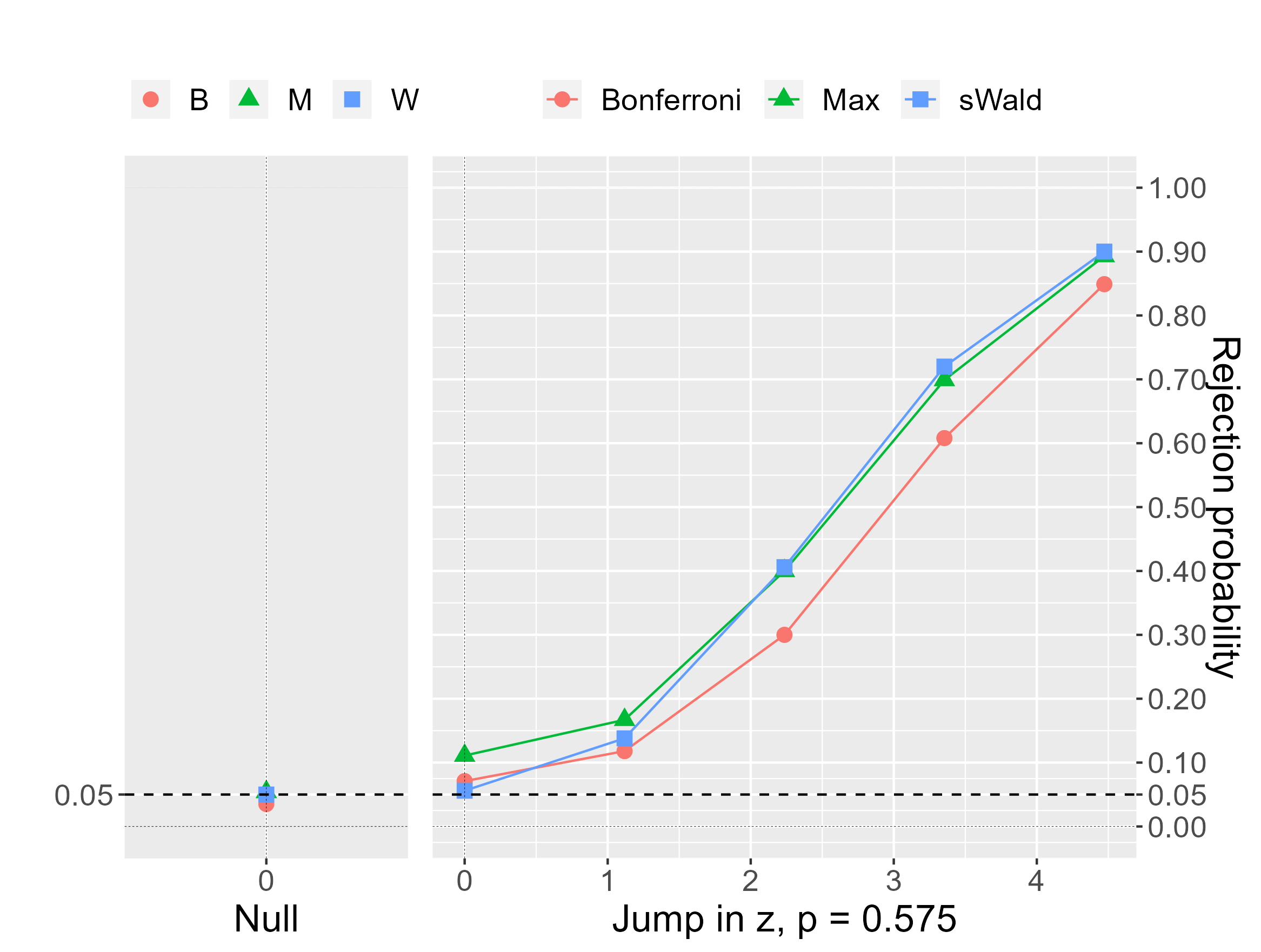}
  \subcaption{Covariates have the same pairwise correlation coefficient of $0.9$}
 \end{minipage} 
  \caption{$1/5$ of jumps in all the five covariates with density, $n = 1000$}\label{fig:5.5.all} 
  \begin{minipage}{380pt}
{\fontsize{10pt}{10pt}\selectfont\smallskip\textit{Notes}: 
Jump in $z$ divided by the number of covariates is relative to its standard deviation of the error term. The density has the discontinuity with the density ratio of $0.575/0.425 = 1.353$.
}
\end{minipage}
\end{figure}

\section{Extension} \label{sec:extension}

Our unified test solves the over-rejection problem prevalent in empirical studies. Below, we consider two extensions of our analysis to supplement our unified test.

\subsection{Equivalence Testing} \label{subsec:equivalence}
The diagnostic test evaluates the null hypothesis that suggests the plausibility of the underlying identification. Hence, the diagnostic procedure tests whether the \textit{desired} null hypothesis can be rejected, and we control the probability of falsely rejecting the \textit{desired} null hypothesis of no imbalance or discontinuity. In other words, the roles of type I and type II errors are flipped.

Several studies propose \textit{equivalence} or \textit{non-inferiority} testing as alternatives to flipped testing for the desirable null hypothesis. See, for example, \cite{wellek2010equivalence} and \cite{hartman2018equivalence} for further details. In RD designs, \cite{hartman2021equivalence} propose equivalence tests for diagnostic testing of each covariate and density. As before, the existing equivalence tests are uni-variate, and a unified alternative is necessary. Hence, we developed a unified equivalent test for multivariate balance tests and a density test in terms of the max test based on \cite{wellek2010equivalence} with a rectangular equivalence domain.

Recall that we want to verify whether the following $d+1$ restrictions hold simultaneously
\[
 f_+ = f_-,\quad \mu_{Z_1+} = \mu_{Z_1-}, \quad \ldots, \quad \mu_{Z_d+} = \mu_{Z_d-}.
\]
An appropriate set of hypotheses for the equivalence test is as follows:
\begin{equation}
\begin{split}
&H_0: \min\{\tau_f,\tau_{Z_1},\ldots,\tau_{Z_d}\} \leq -\varepsilon \quad \text{or} \quad
\max\{\tau_f,\tau_{Z_1},\ldots,\tau_{Z_d}\} \geq \varepsilon \\
\text{vs} \quad
&H_1: -\varepsilon < \tau_f < \varepsilon \quad \text{and} \quad -\varepsilon < \tau_{Z_k} < \varepsilon
\quad\text{for all} \quad k=1,\ldots,d,
\end{split}
\end{equation}
where $\varepsilon > 0$ determines the range within which the imbalance or discontinuity is sufficiently small. 

To conduct an equivalence test, the researcher must select an appropriate equivalence range $(-\varepsilon,\varepsilon)$. We follow \cite{hartman2018equivalence} and \cite{hartman2021equivalence} to focus on the equivalence confidence interval, which is the shortest equivalence range $(-\varepsilon,\varepsilon)$ such that the alternative hypothesis with the equivalence range would not be detected at the prespecified level $\alpha$ for the observed data.
Hence, we achieve an equivalence confidence interval without manually selecting $\epsilon$.

Denote 
\[
\begin{split}
&\hat \psi_{1,l,p}(h) \\
=& \max\left\{ \frac{\sqrt{n}\sqrt{h_{f}}(\hat\tau_{f,p}(h_f)-\varepsilon)}{\hat V_{f,p}(h_f)}, 
\frac{\sqrt{n}\sqrt{h_{1}}(\hat\tau_{Z_1,l}(h_1)-\varepsilon)}{\hat V_{Z_1,l}(h_1)}, \ldots,
\frac{\sqrt{n}\sqrt{h_{d}}(\hat\tau_{Z_d,l}(h_d)-\varepsilon)}{\hat V_{Z_d,l}(h_d)}
\right\}
\end{split}
\]
and
\[
\begin{split}
&\hat \psi_{2,l,p}(h) \\
=&\min\left\{ \frac{\sqrt{n}\sqrt{h_{f}}(\hat\tau_{f,p}(h_f)+\varepsilon)}{\hat V_{f,p}(h_f)}, 
\frac{\sqrt{n}\sqrt{h_{1}}(\hat\tau_{Z_1,l}(h_1)+\varepsilon)}{\hat V_{Z_1,l}(h_1)}, \ldots,
\frac{\sqrt{n}\sqrt{h_{d}}(\hat\tau_{Z_d,l}(h_d)+\varepsilon)}{\hat V_{Z_d,l}(h_d)}
\right\}.
\end{split}
\]
The following proposition demonstrates the asymptotic validity of the proposed equivalence test for a prespecified level $\alpha$.
\begin{proposition}\label{prop:5.1}
Suppose Assumptions \ref{asp:1} and \ref{asp:2} hold with $R \geq p+2$ and $S\geq l+3$, and the asymptotic covariance matrix $V_{l+1,p+1}$ is nonsingular.
If $n h_{min}\to\infty$, $nh_f^2 \to\infty$, $n (h_{max})^{2l+5}\to0$ and $n (h_f)^{2p+3} \to0$, 
\begin{eqnarray*}
\text{Under }H_0 &:& \limsup_{n\to\infty} P\left( \hat \psi_{1,l+1,p+1}(h) \leq z_{\alpha/2} \text{ or }
\hat \psi_{2,l+1,p+1}(h) \geq z_{1 - \alpha/2} \right) \leq \alpha, \\
\text{Under }H_1 &:& \lim_{n\to\infty} P\left( \hat \psi_{1,l+1,p+1}(h) \leq z_{\alpha/2} \text{ or }
\hat \psi_{2,l+1,p+1}(h) \geq z_{1 - \alpha/2} \right)  = 1,
\end{eqnarray*}
where $z_{\alpha/2}$ and $z_{1 - \alpha/2}$ are the $\alpha/2$ and $(1 - \alpha/2)$ quantiles of $N(0,1)$, respectively.
\end{proposition}

\subsection{Pretesting Analysis} \label{subsec:pretest}

Following \cite{roth2022pretest}, we analyze the bias of the treatment effect estimator induced by conducting a pretest. Let $\hat{\tau}_{Y}, \hat{\tau}_{Z}, \hat{\tau}_{f}$ be the local polynomial estimators of $\lim_{x \to 0+} E[Y_i|X_i=x] - \lim_{x \to 0-} E[Y_i|X_i=x], (\mu_{Z_1+}-\mu_{Z_1-}, \ldots, \mu_{Z_d+}-\mu_{Z_d-})', f_{+}-f_{-}$, respectively.\footnote{To apply Theorem \ref{thm:1}, we use the bias-corrected estimators as $\hat{\tau}_{Y}, \hat{\tau}_{Z}, \hat{\tau}_{f}$.} For simplicity, suppose that $d=1$ and $(\hat{\tau}_{Y}, \hat{\tau}_{Z}, \hat{\tau}_{f})'$ is multivariate normally distributed; that is,
\[
\left(
\begin{array}{c}
\hat{\tau}_{Y} \\
\hat{\tau}_{Z} \\
\hat{\tau}_{f} 
\end{array}
\right) \ \sim \ N \left( \left(
\begin{array}{c}
\tau_{Y} \\
0 \\
0 
\end{array}
\right) + \left(
\begin{array}{c}
\delta_{Y} \\
\delta_{Z} \\
\delta_{f} 
\end{array}
\right), \left(
\begin{array}{ccc}
\sigma_{YY} & \sigma_{YZ}& 0 \\
\sigma_{YZ} & \sigma_{ZZ}& 0 \\
0 & 0 & \sigma_{ff} 
\end{array}
\right)  \right),
\]
where $\tau_{f}$ is the treatment effect and $\delta = (\delta_Y, \delta_Z, \delta_f)'$ is the unconditional bias from the imbalance at the cutoff. Because Theorem \ref{thm:1} implies that the asymptotic covariance between $(\hat{\tau}_Y, \hat{\tau}_Z)'$ and $\hat{\tau}_f$ is zero, we assume that $\hat{\tau}_f$ is independent of $\hat{\tau}_Y$ and $\hat{\tau}_Z$.

Let $\mathcal{T}(\Sigma) \subset \mathbb{R}^{2}$ be the acceptance region, which potentially depends on the covariance matrix $\Sigma = (\sigma_{ZZ},\sigma_{ff})$. From Proposition 1 in \cite{roth2022pretest}, the expectation of $\hat{\tau}_Y$ conditional on passing the pretest can be written as
\begin{eqnarray}
E\left[ \hat{\tau}_{Y} | (\hat{\tau}_Z, \hat{\tau}_{f})' \in \mathcal{T}(\Sigma) \right] \ = \ \tau_{Y} + \delta_{Y} +\frac{\sigma_{YZ}}{\sigma_{ZZ}} \left\{ E\left[ \hat{\tau}_{Z} | (\hat{\tau}_Z, \hat{\tau}_{f})' \in \mathcal{T}(\Sigma) \right] - \delta_{Z} \right\}. \label{bias_pretest}
\end{eqnarray}
Equation (\ref{bias_pretest}) implies that the expectation of $\hat{\tau}_{Y}$ conditional on passing the pretest is the sum of the treatment effect $\tau_{Y}$, the unconditional bias $\delta_{Y}$, and the pretest bias $\frac{\sigma_{YZ}}{\sigma_{ZZ}} \left\{ E\left[ \hat{\tau}_{Z} | (\hat{\tau}_Z, \hat{\tau}_{f})' \in \mathcal{T}(\Sigma) \right] - \delta_{Z} \right\}$. When considering the max test, the acceptance region can be written as
\begin{equation*}
\mathcal{T}(\Sigma) \ = \ \{ (t_Z, t_f)' \in \mathbb{R}^{2} : |t_Z| \leq c_{Z} \ \text{and} \ |t_f| \leq c_{f} \},
\end{equation*}
where $c_Z$ and $c_f$ are positive constants that depend on $\Sigma$. Because $\hat{\tau}_f$ is independent of $\hat{\tau}_Z$, (\ref{bias_pretest}) becomes
\begin{eqnarray}
E\left[ \hat{\tau}_{Y} | (\hat{\tau}_Z, \hat{\tau}_{f})' \in \mathcal{T}(\Sigma) \right] \ = \ \tau_{Y} + \delta_{Y} + \frac{\sigma_{YZ}}{\sigma_{ZZ}} \left\{ E\left[ \hat{\tau}_{Z} | - c_Z \leq \hat{\tau}_Z \leq c_Z \right] - \delta_{Z} \right\}. \label{bias_pretest_max}
\end{eqnarray}
Hence, when we use the Max test as the pretest, the density test statistic $\hat{\tau}_{f}$ does not affect the pretest bias, which disappears if $\delta_Z=0$. 

To discuss the sign of the pretest bias, we consider the following model:
\begin{equation*}
Y_i \ = \ \tau_{Y} 1\{X_i \geq 0\} + \mu_Y(X_i) + \beta Z_i + \epsilon_i,
\end{equation*}
where $\mu_Y$ is a continuous function, and $\epsilon_i$ is an error term satisfying $E[\epsilon_i|X_i,Z_i]=0$. Then, we have
\[
\lim_{x\to0^+} E[Y_i|X_i=x] - \lim_{x\to0^+} E[Y_i|X_i=x] \ = \ \tau_Y + \beta \delta_Z,
\]
where we use $\lim_{x\to0^+} E[Z_i|X_i=x] - \lim_{x\to0^+} E[Z_i|X_i=x] = \delta_Z$. This implies that the unconditional bias becomes $\delta_Y = \beta \delta_Z$. Hence, if $\beta > 0$ and $\sigma_{YZ} >0$, then the sign of the pretest bias is opposite to that of the unconditional bias $\delta_Y$. Therefore, in this case, the pretest can reduce bias. In Online Appendix E, we conduct a numerical analysis using simulated and calibrated datasets. The numerical results also support our theoretical analysis that pretesting may reduce bias.

An analogous analysis for the variance implies that the variance of the treatment effect estimate is likely to be smaller after pretesting. Similar to Proposition 3 of \cite{roth2022pretest}, we obtain
\begin{equation}
    V\left[ \hat{\tau}_{Y} | (\hat{\tau}_Z, \hat{\tau}_{f})' \in \mathcal{T}(\Sigma) \right] \ = \ \sigma_{YY}  + \left(\frac{\sigma_{YZ}}{\sigma_{ZZ}}\right)^2 \left\{ V\left[ \hat{\tau}_{Z} | (\hat{\tau}_Z, \hat{\tau}_{f})' \in \mathcal{T}(\Sigma) \right] - \sigma_{ZZ} \right\}. \label{variance_pretest}
\end{equation}
If the max test is conducted, the acceptance region is convex. Hence, Proposition 4 in \cite{roth2022pretest} implies that 
\[
 V\left[ \hat{\tau}_{Z} | (\hat{\tau}_Z, \hat{\tau}_{f})' \in \mathcal{T}(\Sigma) \right] - \sigma_{ZZ} \leq 0.
\]
Hence, the variance is likely to be smaller after pretesting. \footnote{The above analysis indicates that pretesting can reduce the bias and variance of the treatment effect estimates. Its implication for the power of the test for the treatment effect is indefinite. For example, pretesting reduces the variance of estimator when the bias is zero $\delta_Y = \delta_Z = 0$. The smaller variance may increase the power when $\tau_Y$ is away from $0$ and decrease the power when $\tau_Y$ is close to $0$.}
\FloatBarrier

\section{Conclusion}\label{sec:6}

Diagnostic tests are vital for applied studies of RD designs; however, their inappropriate use frustrates their purpose. The diagnostic procedure for RD design aims to test the underlying null hypothesis that all pre-determined covariates balance and the density function is continuous. Current practices test these restrictions separately without appropriate size control---they over-reject the null hypothesis that the identifying restriction for the RD design is plausible.

We demonstrate an extraordinary over-rejection of the null hypothesis, where each testable restriction is valid. Among the $59$ studies published in the top five economics journals, $32\%$ ($19$ studies) rejected at least one testable restriction. Nevertheless, less than $5\%$ of the $787$ tests were rejected. Hence, we conclude that the plausibility of identification is highly over-rejected where the underlying null hypotheses appear to hold. We blame the multiple testing problem for the over-rejection because most studies run more than $10$ covariates balance tests separately.


We provide a solution to this over-rejection problem. We demonstrate the joint asymptotic normality of RD design estimators and provide alternative unified testing procedures. We show two different unified tests: the standardized Wald (sWald) test and the Max test. In the numerical simulations, the sWald and Max statistics achieved appropriate size-control properties for a moderate number of covariates when the sample size was $1000$. The sWald test is particularly superior to other methods in size control, as it can control for any dimension of covariates up to $25$ where the Bonferroni correction may fail to control the size. The power properties vary according to the underlying alternative hypotheses. The Max test is superior to the Bonferroni correction when only one covariate exhibits a jump in the conditional mean. The sWald test is superior to the Bonferroni correction when all the covariates jump in their conditional means. 

In conclusion, we recommend the Max test for a design with a moderate number of covariates when only a few covariates which critically matter for the plausibility of identification such as placebo outcomes. We recommend the sWald test for other designs. This test is appropriate for most practical designs where there is no ex-ante concerning covariate and testing many covariate balancing is the norm.

Our unified test provides a formal judgment for whether the rejected diagnostic tests were spurious and by luck or not. In RD designs, rejections in the diagnostic tests can be genuine signals for their designs being implausible while they may be spurious. Hence, our procedure offers a transparent decision rule that is essential to the users if the diagnostic tests are part of the assessment in RD designs.

Furthermore, our procedure is complementary with the other current practices. For example, practitioners often conducts an additional inspection for a covariate with an apparent graphical evidence for a jump. We continue to recommend graphical analyses and those careful inspections because such procedure is complementary with our unified tests. With our unified test, practitioners may have stronger argument for a plausible design in the following combination of arguments: our unified test suggests that the rejections were spurious; furthermore, other anecdotes conclude that the spurious rejections were immaterial to the plausibility of the design. Hence, our procedure shall be an essential support for effectively justifying designs with a few rejected diagnostic tests when the underlying design is genuinely valid.

Based on our findings, we consider several topics for future research. First, finding another use for testable restrictions may be preferable. To validate the plausibility of identification, we must focus on the underlying null hypothesis rather than detecting which restrictions fail. Nevertheless, if we detect a particular subgroup of covariates that threats the plausibility of identification, such a finding may provide valuable information. Diagnostic procedures using such information may improve design validation. Second, although we focused on the statistical inference of testable restrictions, a graphical analysis of testable restrictions is also important. Recent studies, such as \citet{Calonico_Cattaneo_Titiunik_2015_JASA}, have considered the optimal procedure for visualization, and \citet{Korting_Lieberman_Matsudaira_Pei_Shen_2023} have demonstrated that tuning parameters in visualization is critical. We recommend our unified testing along with graphical analysis, with multiple testing problems in graphical analyses as a future research topic. Finally, evaluating the joint distribution of testing statistics that are nonparametric estimates is challenging. We will consider these issues as future research questions.


\FloatBarrier

\section*{Acknowledgements}

The study was supported by JSPS KAKENHI Grant Numbers JP20J20046, JP23K18799 (Fusejima), JP22K13373 (Ishihara), and JP21K13269 (Sawada). This study is derived from a chapter of Koki Fusejima's dissertation. We thank Eiji Kurozumi, Yukitoshi Matsushita, Katsumi Shimotsu, and Kohei Yata, the anonymous referees and associate editor as well as seminar participants at Hitotsubashi University, The University of Tokyo, Kansai Keiryo Keizaigaku Kenkyukai, Japanese Joint Statistical Meeting, International Conference on Econometrics and Statistics, and the Asian Meeting of the Econometric Society in East and South-East Asia. Moreover, we thank Masaki Oguni and Yuri Sugiyama for their excellent research assistance.


\begin{appendix}
\section{Formal results}\label{sec:a}
Appendix \ref{sec:a} provides formal theoretical results and proofs.

\subsection{Setup and Notation}\label{sec:a.1}
Let $|\cdot|$ denote the Euclidean matrix norm, that is, $|A|^2=\trace(A'A)=\sum_{j=1}^n\sum_{i=1}^m a_{ij}^2$ for $(m\times n)$ matrix $A=\{a_{ij}\}_{i=1,\ldots,m}^{j=1,\ldots,n}$, and $|a|^4=|aa'|^2=\sum_{j=1}^n\sum_{i=1}^n a_{i}^2a_{j}^2$ for $(n\times 1)$ vector $a=(a_1,\ldots,a_n)'$.
Let $a_n\precsim b_n$ denote $a_n\leq Cb_n$ for a positive constant $C$, which is independent of $n$.

The estimands of the local polynomial estimators are
\begin{equation*}\label{eq:19}
\beta_{Z_k+,l}=
\left[\mu_{Z_k+},\frac{\mu_{Z_k+}^{(1)}}{1!},\ldots,\frac{\mu_{Z_k+}^{(l)}}{l!}\right]',\,\,
\beta_{Z_k-,l}=
\left[\mu_{Z_k-},\frac{\mu_{Z_k-}^{(1)}}{1!},\ldots,\frac{\mu_{Z_k-}^{(l)}}{l!}\right]',
\end{equation*}
\begin{equation*}\label{eq:f19}
\beta_{f+,p}=
\left[{F}_+,\frac{f_+}{1!},\frac{f^{(1)}_+}{2!},\ldots,\frac{f^{(p-1)}_+}{p!}\right]',\,\,
\beta_{f-,p}=
\left[{F}_-,\frac{f_-}{1!},\frac{f^{(1)}_-}{2!},\ldots,\frac{f^{(p-1)}_-}{p!}\right]',
\end{equation*}
where $F(x)=E[1\{X \leq x\}]$, ${F}_+ = \lim_{x \to0^+} F(x)$, ${F}_- = \lim_{x \to0^-} F(x)$. \\
We set $Z_k=(Z_{k,1},\ldots,Z_{k,n})'$,
$\tilde{F}=(\tilde{F}(X_1),\ldots,\tilde{F}(X_n))'$,
$K_h(u)=K(u/h)/h$, $H_p(h)=\diag(1,h,\ldots,h^{p})$,
$X_p(h)=[r_p(X_1/h),\ldots,r_p(X_n/h)]'$,
\begin{equation*}\label{eq:5}
W_+(h)=\diag(1\{X_1 \geq 0\}K_{h}(X_1),\ldots,1\{X_n \geq 0\}K_{h}(X_{n})),
\end{equation*}
\begin{equation*}
W_-(h)=\diag(1\{X_1 < 0\}K_{h}(X_1),\ldots,1\{X_n < 0\}K_{h}(X_{n})),
\end{equation*}
\begin{equation*}\label{eq:6}
\Gamma_{+,p}(h) = \frac{1}{n} X_p(h)'W_+(h)X_p(h),\quad
\Gamma_{-,p}(h) = \frac{1}{n} X_p(h)'W_-(h)X_p(h),
\end{equation*}
The local polynomial estimators are as follows.
\begin{equation*}\label{eq:10}
\hat{\beta}_{Z_k+,l}(h_k)=
\frac{1}{n} H_l^{-1}(h_k)\Gamma_{+,l}^{-1}(h_k)X_l(h_k)'W_+(h_k)Z_k,
\end{equation*}
\begin{equation*}
\hat{\beta}_{Z_k-,l}(h_k)=
\frac{1}{n} H_l^{-1}(h_k)\Gamma_{-,l}^{-1}(h_k)X_l(h_k)'W_-(h_k)Z_k,
\end{equation*}
and
\[
\hat{\beta}_{f+,p}(h_{f}) = 
\frac{1}{n} H_p^{-1}(h_f)\Gamma_{+,p}^{-1}(h_f) X_p(h_f)'W_+(h_f) \tilde{F},
\]
\[
\hat{\beta}_{f-,p}(h_{f}) = 
\frac{1}{n} H_p^{-1}(h_f)\Gamma_{-,p}^{-1}(h_f) X_p(h_f)'W_-(h_f) \tilde{F}.
\]
Let
\[
\Delta \hat \mu_{Z_k+,l}(h_k) \equiv \sqrt{n h_{k}}[\hat{\mu}_{Z_k+,l}(h_k)-{\mu}_{Z_k+}],
\quad \Delta \hat \mu_{Z_k-,l}(h_k) \equiv \sqrt{n h_{k}}[\hat{\mu}_{Z_k-,l}(h_k)-{\mu}_{Z_k-}],
\]
\[
\Delta \hat f_{+,p}(h_f) \equiv \sqrt{n h_f}[\hat{f}_{+,p}(h_f) - f_+],
\quad \Delta \hat f_{-,p}(h_f) \equiv \sqrt{n h_f}[\hat{f}_{-,p}(h_f) - f_-].
\]

Letting $\mX_n=(X_1,\ldots,X_n)'$, we decompose $\Delta \hat \mu_{Z_k+,l}(h_k)$ and $\Delta \hat \mu_{Z_k-,l}(h_k)$ as
\[
\Delta \hat \mu_{Z_k+,l}(h_k)
=\Delta^M \hat \mu_{Z_k+,l}(h_k)+\Delta^B \hat \mu_{Z_k+,l}(h_k),
\]
\[
\Delta \hat \mu_{Z_k-,l}(h_k)
=\Delta^M \hat \mu_{Z_k-,l}(h_k)+\Delta^B \hat \mu_{Z_k-,l}(h_k),
\]
where
\[
\Delta^M \hat \mu_{Z_k+,l}(h_k) \equiv \sqrt{n h_{k}}\{\hat{\mu}_{Z_k+,l}(h_k)-E[\hat{\mu}_{Z_k+,l}(h_k)|\mX_n]\},
\]
\[
\Delta^M \hat \mu_{Z_k-,l}(h_k) \equiv \sqrt{n h_{k}}\{\hat{\mu}_{Z_k-,l}(h_k)-E[\hat{\mu}_{Z_k-,l}(h_k)|\mX_n]\}
\]
and
\[
\Delta^B \hat \mu_{Z_k+,l}(h_k) \equiv \sqrt{n h_{k}}\{E[\hat{\mu}_{Z_k+,l}(h_k)|\mX_n]-{\mu}_{Z_k+}\},
\]
\[
\Delta^B \hat \mu_{Z_k-,l}(h_k) \equiv \sqrt{n h_{k}}\{E[\hat{\mu}_{Z_k-,l}(h_k)|\mX_n]-{\mu}_{Z_k-}\}.
\]
$\Delta^B \hat \mu_{Z_k+,l}(h_k)$ and $\Delta^B \hat \mu_{Z_k-,l}(h_k)$ represent the smoothing bias. We consider the distributional approximations of $\Delta^M \hat \mu_{Z_k+,l}(h_k)$ and $\Delta^M \hat \mu_{Z_k-,l}(h_k)$.

We define $\tilde f_{+,p}(h_{f})$ and $\tilde f_{-,p}(h_{f})$ as slope coefficients in the following local polynomial regressions.
\begin{eqnarray*}
\tilde f_{+,p}(h_{f}) = e_1'\tilde{\beta}_{f+,p}(h_{f}), \quad
\tilde f_{-,p}(h_{f}) = e_1'\tilde{\beta}_{f-,p}(h_{f}),
\end{eqnarray*}
where
\begin{eqnarray*}
\tilde{\beta}_{f+,p}(h_{f})=\argmin_{\beta\in\R^{p+1}}
\sumin1\{X_i \geq 0\} ({F}(X_i)-r_p(X_i)'\beta)^2{K}_{h_f}(X_i), \\
\tilde{\beta}_{f-,p}(h_{f})=\argmin_{\beta\in\R^{p+1}}
\sumin1\{X_i < 0\} ({F}(X_i)-r_p(X_i)'\beta)^2{K}_{h_f}(X_i).
\end{eqnarray*}
$\tilde f_{+,p}(h_{f})$ and $\tilde f_{-,p}(h_{f})$ are obtained from $p$th-order local polynomial one-sided approximations of the true distribution of $X_i$.
We decompose $\Delta \hat f_{+,p}(h_f)$ and $\Delta \hat f_{-,p}(h_f)$ as
\[
\Delta \hat f_{+,p}(h_f)
=\Delta^{M_0} \hat f_{+,p}(h_f) + \Delta^{B_0} \hat f_{+,p}(h_f), \quad
\Delta \hat f_{-,p}(h_f)
=\Delta^{M_0} \hat f_{-,p}(h_f) + \Delta^{B_0} \hat f_{-,p}(h_f),
\]
where
\[
\Delta^{M_0} \hat f_{+,p}(h_f) \equiv \sqrt{n h_f}\{\hat f_{+,p}(h_f) - \tilde f_{+,p}(h_{f})\},
\]
\[
\Delta^{M_0} \hat f_{-,p}(h_f) \equiv \sqrt{n h_f}\{\hat f_{-,p}(h_f) - \tilde f_{-,p}(h_{f})\}
\]
and
\[
\Delta^{B_0} \hat f_{+,p}(h_f) \equiv \sqrt{n h_f}\{\tilde f_{+,p}(h_{f}) - f_+\}, \quad
\Delta^{B_0} \hat f_{-,p}(h_f) \equiv \sqrt{n h_f}\{\tilde f_{-,p}(h_{f}) - f_-\}.
\]
$\Delta^{B_0} \hat f_{+,p}(h_f)$ and $\Delta^{B_0} \hat f_{-,p}(h_f)$ represent smoothing bias.
We set \\$F=({F}(X_1),\ldots,{F}(X_{n}))'$, $\iota=(1,\ldots,1)'$, which are $n$-dimensional.
\[
U_{+,p}(h_f) \equiv \frac{1}{n(n-1)}\sum_{i,j:i\neq j}^{n} \psi_{+,p,h_f}(X_i, X_j),
\]
\[
U_{-,p}(h_f) \equiv \frac{1}{n(n-1)}\sum_{i,j:i\neq j}^{n} \psi_{-,p,h_f}(X_i, X_j), 
\]
where
\[
\psi_{+,p,h_f}(X_i, X_j) \equiv 1\{X_i \geq 0\}r_p(X_i/h_f)(1\{X_j \leq X_i\} - F(X_i)){K}_{h_f}(X_i),
\]
\[
\psi_{-,p,h_f}(X_i, X_j) \equiv 1\{X_i < 0\}r_p(X_i/h_f)(1\{X_j \leq X_i\} - F(X_i)){K}_{h_f}(X_i).
\]
We further decompose $\Delta^{M_0} \hat f_{+,p}(h_f)$ and $\Delta^{M_0} \hat f_{-,p}(h_f)$ as follows.
\[
\Delta^{M_0} \hat f_{+,p}(h_f)
=\Delta^{M_1} \hat f_{+,p}(h_f) + \Delta^{B_1} \hat f_{+,p}(h_f),
\]
\[
\Delta^{M_0} \hat f_{-,p}(h_f)
=\Delta^{M_1} \hat f_{-,p}(h_f) + \Delta^{B_1} \hat f_{-,p}(h_f),
\]
where
\[
\Delta^{M_1} \hat f_{+,p}(h_f) \equiv \left(1-\frac{1}{n}\right)\sqrt{\frac{n}{h_{f}}} e_1' \Gamma_{+,p}^{-1}(h_f) U_{+,p}(h_f),
\]
\[
\Delta^{M_1} \hat f_{-,p}(h_f) \equiv \left(1-\frac{1}{n}\right)\sqrt{\frac{n}{h_{f}}} e_1' \Gamma_{-,p}^{-1}(h_f) U_{-,p}(h_f),
\]
and
\[
\Delta^{B_1} \hat f_{+,p}(h_f) \equiv \frac{1}{\sqrt{n h_{f}}}\frac{1}{n} e_1' \Gamma_{+,p}^{-1}(h_f) X_p(h_f)' W_+(h_f) (\iota - {F}),
\]
\[
\Delta^{B_1} \hat f_{-,p}(h_f) \equiv \frac{1}{\sqrt{n h_{f}}}\frac{1}{n} e_1' \Gamma_{-,p}^{-1}(h_f) X_p(h_f)' W_-(h_f) (\iota - {F}).
\]
$U_{+,p}(h_f)$ and $U_{-,p}(h_f)$ are considered second-order U-statistics (where the kernels $\psi_{+,p,h_f}(X_i, X_j)$ and $\psi_{+,p,h_f}(X_i, X_j)$ are $(p+1)$-dimensional and depend on the bandwidth $h_f$), and
$\Delta^{B_1} \hat f_{+,p}(h_f)$ and $\Delta^{B_1} \hat f_{-,p}(h_f)$ represent the leave-in bias.
These terms arise because the empirical distribution function $\tilde{F}(X_i)$ is approximated, which leads to double summation.
Note that $E[\psi_{+,p,h_f}(X_i, X_j)]=0$ and $E[\psi_{-,p,h_f}(X_i, X_j)]=0$. U-statistics can be decomposed as
\[
U_{+,p}(h_f)=U_{1,+,p}(h_f) + U_{2,+,p}(h_f), \quad
U_{-,p}(h_f)=U_{1,-,p}(h_f) + U_{2,-,p}(h_f)
\]
with
\[
U_{1,+,p}(h_f) \equiv \frac{1}{n} \sumin E[\psi_{+,p,h_f}(X_j, X_i)|X_i], \quad
U_{1,-,p}(h_f) \equiv \frac{1}{n} \sumin E[\psi_{-,p,h_f}(X_j, X_i)|X_i]
\]
for $j\neq i$ and
\[
U_{2,+,p}(h_f) \equiv \frac{1}{n(n-1)}\sum_{i,j:i\neq j}^{n} \varphi_{+,p,h_f}(X_i, X_j), 
\]
\[
U_{2,-,p}(h_f) \equiv \frac{1}{n(n-1)}\sum_{i,j:i\neq j}^{n} \varphi_{-,p,h_f}(X_i, X_j),
\]
where
\[
\begin{split}
\varphi_{+,p,h_f}(X_i, X_j) \equiv& \psi_{+,p,h_f}(X_i, X_j) + \psi_{+,p,h_f}(X_j, X_i) \\
&- E[\psi_{+,p,h_f}(X_i, X_j)|X_j] - E[\psi_{+,p,h_f}(X_j, X_i)|X_i],
\end{split}
\]
\[
\begin{split}
\varphi_{-,p,h_f}(X_i, X_j) \equiv& \psi_{-,p,h_f}(X_i, X_j) + \psi_{-,p,h_f}(X_j, X_i) \\
&- E[\psi_{-,p,h_f}(X_i, X_j)|X_j] - E[\psi_{-,p,h_f}(X_j, X_i)|X_i].
\end{split}
\]
This decomposition is often referred to as the Hoeffding decomposition (see, e.g., \cite{serfling2009approximation}).
$U_{2,+,p}(h_f)$ and $U_{2,-,p}(h_f)$ are degenerate U-statistics that are later shown to be asymptotically negligible under our assumptions.
Let
\[
\Delta^{B_2} \hat f_{+,p}(h_f) \equiv \sqrt{\frac{n}{h_{f}}} e_1' \Gamma_{+,p}^{-1}(h_f) U_{2,+,p}(h_f),
\]
\[
\Delta^{B_2} \hat f_{-,p}(h_f) \equiv \sqrt{\frac{n}{h_{f}}} e_1' \Gamma_{-,p}^{-1}(h_f) U_{2,-,p}(h_f),
\]
\[
\Delta^{B_3} \hat f_{+,p}(h_f) \equiv -\frac{1}{\sqrt{n h_{f}}} e_1' \Gamma_{+,p}^{-1}(h_f) U_{+,p}(h_f),
\]
\[
\Delta^{B_3} \hat f_{-,p}(h_f) \equiv -\frac{1}{\sqrt{n h_{f}}} e_1' \Gamma_{-,p}^{-1}(h_f) U_{-,p}(h_f).
\]
We can decompose $\Delta \hat f_{+,p}(h_f)$ and $\Delta \hat f_{-,p}(h_f)$ into
\[
\Delta \hat f_{+,p}(h_f)
=\Delta^M \hat f_{+,p}(h_f) + \Delta^B \hat f_{+,p}(h_f),\quad
\Delta \hat f_{-,p}(h_f)
=\Delta^M \hat f_{-,p}(h_f) + \Delta^B \hat f_{-,p}(h_f),
\]
where
\[
\Delta^M \hat f_{+,p}(h_f) \equiv 
\sqrt{\frac{n}{h_{f}}} e_1' \Gamma_{+,p}^{-1}(h_f) U_{1,+,p}(h_f),\quad
\Delta^M \hat f_{-,p}(h_f) \equiv 
\sqrt{\frac{n}{h_{f}}} e_1' \Gamma_{-,p}^{-1}(h_f) U_{1,-,p}(h_f)
\]
and
\[
\Delta^B \hat f_{+,p}(h_f) \equiv 
\sum_{j=0}^3 \Delta^{B_j} \hat f_{+,p}(h_f), \quad
\Delta^B \hat f_{-,p}(h_f) \equiv 
\sum_{j=0}^3 \Delta^{B_j} \hat f_{-,p}(h_f).
\]
We consider distributional approximations of $\Delta^M \hat f_{+,p}(h_f)$ and $\Delta^M \hat f_{-,p}(h_f)$.

To characterize the bias and variance of the local polynomial estimators, we employ the following notation.
\begin{equation*}\label{eq:0.2}
\sigma^{2}_{Z_k+}=\lim_{x\to0^+}\sigma^2_{Z_k}(x), \quad
\sigma^{2}_{Z_k-}=\lim_{x\to0^-}\sigma^2_{Z_k}(x), \quad
\sigma^2_{Z_k}(x)=\Var(Z_k|X=x),
\end{equation*}
\begin{equation*}\label{eq:0.3}
\sigma_{Z_j Z_k+}=\lim_{x\to0^+}\sigma_{Z_j Z_k}(x), \quad
\sigma_{Z_j Z_k-}=\lim_{x\to0^-}\sigma_{Z_j Z_k}(x), \quad
\sigma_{Z_j Z_k}(x)=\Cov(Z_j,Z_k|X=x),
\end{equation*}
$\eps_{Z_k}=(\eps_{Z_k,1},\ldots,\eps_{Z_k,n})'$ with $\eps_{Z_k,i}=Z_{k,i}-\mu_{Z_k}(X_i)$,
\[
\Sigma_{Z_k}=E[\eps_{Z_k}\eps_{Z_k}'|\mX_n]=
\diag(\sigma^2_{Z_k}(X_1),\ldots,\sigma^2_{Z_k}(X_n)),
\]
\[
\Sigma_{Z_j Z_k}=E[\eps_{Z_j}\eps_{Z_k}'|\mX_n]=
\diag(\sigma_{Z_j Z_k}(X_1),\ldots,\sigma_{Z_j Z_k}(X_n)),
\]
\[
S_p(h)=[(X_1/h)^p,\ldots,(X_n/h)^p]',
\]
\begin{equation*}\label{eq:7}
\vartheta_{+,p,q}(h)=X_p(h)'W_+(h)S_q(h)/n,\quad
\vartheta_{-,p,q}(h)=X_p(h)'W_-(h)S_q(h)/n,
\end{equation*}
\begin{equation*}\label{eq:8}
\Psi_{Z_j Z_k+,p,q}(h_j,h_k)=
X_p(h_j)'W_+(h_j)\Sigma_{Z_j Z_k}W_+(h_k)X_q(h_k)/n,
\end{equation*}
\begin{equation*}
\Psi_{Z_j Z_k-,p,q}(h_j,h_k)=
X_p(h_j)'W_-(h_j)\Sigma_{Z_j Z_k}W_-(h_k)X_q(h_k)/n,
\end{equation*}
\begin{equation*}\label{eq:9}
\Psi_{Z_k+,p}(h)=\Psi_{Z_k Z_k+,p,p}(h,h)=
X_p(h)'W_+(h)\Sigma_{Z_k}W_+(h)X_p(h)/n,
\end{equation*}
\begin{equation*}
\Psi_{Z_k-,p}(h)=\Psi_{Z_k Z_k-,p,p}(h,h)=
X_p(h)'W_-(h)\Sigma_{Z_k}W_-(h)X_p(h)/n,
\end{equation*}
\if0
\begin{equation*}\label{eq:11}
\tilde \Gamma_{+,p}(h)= 
\int_0^\infty K(u)r_p(u)r_p(u)'f(uh)du,\quad
\tilde \Gamma_{-,p}(h)= 
\int_0^\infty K(u)r_p(u)r_p(u)'f(-uh)du,
\end{equation*}
\begin{equation*}\label{eq:12}
\tilde \vartheta_{+,p,q}(h)=
\int_0^\infty K(u)u^q r_p(u)f(uh)du,\quad
\tilde \vartheta_{-,p,q}(h)=
\int_0^\infty K(u)u^q r_p(u)f(-uh)du,
\end{equation*}
\begin{equation}\label{eq:f11}
\tilde \Gamma_{+,p}(c;h)= 
\int_0^\infty K(u-c)r_p(u-c)r_p(u-c)'f((u-c)h)du,
\end{equation}
\begin{equation}\label{eq:f12}
\tilde \vartheta_{+,p,q}(c;h)=
\int_0^\infty K(u-c)u^q r_p(u-c)f((u-c)h)du,
\end{equation}
\fi
\begin{equation*}\label{eq:13}
\begin{split}
&\tilde \Psi_{Z_j Z_k+,p,q}(h_j,h_k)\\
=&\int_0^\infty K\left(\frac{h_{jk}u}{h_j}\right)
K\left(\frac{h_{jk}u}{h_k}\right)
r_p\left(\frac{h_{jk}u}{h_j}\right)
r_q\left(\frac{h_{jk}u}{h_k}\right)'
\sigma^2_{Z_j Z_k}(uh_{jk})f(uh_{jk})du, 
\end{split}
\end{equation*}
\begin{equation*}
\begin{split}
&\tilde \Psi_{Z_j Z_k-,p,q}(h_j,h_k)\\
=&\int_0^\infty K\left(\frac{h_{jk}u}{h_j}\right)
K\left(\frac{h_{jk}u}{h_k}\right)
r_p\left(\frac{h_{jk}u}{h_j}\right)
r_q\left(\frac{h_{jk}u}{h_k}\right)'
\sigma^2_{Z_j Z_k}(-uh_{jk})f(-uh_{jk})du, 
\end{split}
\end{equation*}
\begin{equation*}\label{eq:14}
\tilde \Psi_{Z_k+,p}(h)=\tilde \Psi_{Z_k Z_k+,p,p}(h,h)
=\int_0^\infty K(u)^2r_p(u)r_p(u)'\sigma^2_{Z_k}(uh)f(uh)du,
\end{equation*}
\begin{equation*}
\tilde \Psi_{Z_k-,p}(h)=\tilde \Psi_{Z_k Z_k-,p,p}(h,h)
=\int_0^\infty K(u)^2r_p(u)r_p(u)'\sigma^2_{Z_k}(-uh)f(-uh)du,
\end{equation*}
\begin{equation*}\label{eq:15}
\Gamma_{p}=
\int_0^\infty K(u)r_p(u)r_p(u)'du,\quad
\vartheta_{p,q}=
\int_0^\infty K(u)u^q r_p(u)du,
\end{equation*}
where $h_{jk}=(h_j h_k)^{1/2}$,
\begin{equation*}\label{eq:17}
\begin{split}
&\Psi_{jk,p,q}\\
&=\begin{cases}
\int_0^\infty K\left((c_{jk})^{1/2}u\right)
K\left((c_{jk})^{-1/2}u\right)
r_p\left((c_{jk})^{1/2}u\right)
r_q\left((c_{jk})^{-1/2}u\right)'du & \text{if }
\rho_{jk}\to c_{jk} \\
0 & \text{otherwise}
\end{cases} \\
&=\begin{cases}
(c_{jk})^{1/2}\int_0^\infty K\left(c_{jk}u\right)
K\left(u\right)
r_p\left(c_{jk}u\right)
r_q\left(u\right)'du & \text{if }
\rho_{jk}\to c_{jk}\\
0 & \text{otherwise},
\end{cases} 
\end{split}
\end{equation*}
for $c_{jk} \in(0,\infty)$,
\begin{equation*}\label{eq:18}
\Psi_{p,q}=\Psi_{kk,p,q}
=\int_0^\infty K(u)^2r_p(u)r_q(u)'du,
\end{equation*}
\begin{equation*}\label{eq:f21}
\Psi_{f,p} = \int_0^{\infty} \int_{0}^{\infty} (u \wedge v) r_p(u) r_p(v)'K(u)K(v) du dv.
\end{equation*}

\subsection{Preliminary lemmas}\label{sec:a.2}
First, we present the asymptotic bias, variance, and distribution of the local polynomial estimators.
We provide the proofs of the lemmas in Appendix \ref{sec:a.4}.

The following lemma describes the asymptotic behavior of the bias terms for the local polynomial estimators.

\begin{lemma}\label{lem:sa3.b}
Suppose Assumptions 1-2 hold with $S\geq l+2$ and $R \geq p+1$.
\begin{enumerate}
\item[(a)] If $n h_{min}\to\infty$ and $h_{max}\to0$, then
\begin{equation}\label{eq:b1}
\begin{split}
&\Delta^B \hat \mu_{Z_k+,l}(h_k) \\
=&\sqrt{n}(h_k)^{3/2+l}B_{Z_k+,l,l+1}(h_k)
+ \sqrt{n}(h_k)^{5/2+l}B_{Z_k+,l,l+2}(h_k)[1+o_p(1)],
\end{split}
\end{equation}
\begin{equation}
\begin{split}
&\Delta^B \hat \mu_{Z_k-,l}(h_k) \\
=&\sqrt{n}(h_k)^{3/2+l}B_{Z_k-,l,l+1}(h_k)
+ \sqrt{n}(h_k)^{5/2+l}B_{Z_k-,l,l+2}(h_k)[1+o_p(1)],
\end{split}
\end{equation}
where
\begin{equation}\label{eq:b2}
B_{Z^k+,l,r}(h_k)=\frac{\mu^{(r)}_{Z_k+}}{r!}
\mB_{+,0,l,r}(h_k),\quad
B_{Z^k-,l,r}(h_k)=\frac{\mu^{(r)}_{Z_k-}}{r!}
\mB_{-,0,l,r}(h_k),
\end{equation}
\begin{equation}\label{eq:b3}
\mB_{+,s,l,r}(h_k)=s!e_s'\Gamma_{+,l}^{-1}(h_k) \vartheta_{+,l,r}(h_k)=s!e_s'\Gamma_{l}^{-1} \vartheta_{l,r}+o_p(1),
\end{equation}
\begin{equation}
\mB_{-,s,l,r}(h_k)=s!e_s'\Gamma_{-,l}^{-1}(h_k) \vartheta_{-,l,r}(h_k)=(-1)^{s+r}s!e_s'\Gamma_{l}^{-1} \vartheta_{l,r}+o_p(1).
\end{equation}
\item[(b)] If $nh_f \to\infty$ and $h_f \to 0$, then
\begin{equation}\label{eq:bf1}
\begin{split}
&\Delta^{B_0} \hat f_{+,p}(h_f) \\
=&\sqrt{n}h^{p+1/2}_f B_{f,+,p,p+1}(h_f)
+\sqrt{n}h^{p+3/2}_f B_{f,+,p,p+2}(h_f)[1 + o_p(1)], \\ 
&\Delta^{B_1} \hat f_{+,p}(h_f) = O_p((n h_f)^{-1/2})
=o_p(1),
\end{split}
\end{equation}
\begin{equation}
\begin{split}
&\Delta^{B_0} \hat f_{-,p}(h_f) \\
=&\sqrt{n}h^{p+1/2}_f B_{f,-,p,p+1}(h_f)
+\sqrt{n}h^{p+3/2}_f B_{f,-,p,p+2}(h_f)[1 + o_p(1)], \\
&\Delta^{B_1} \hat f_{-,p}(h_f) =o_p(1),
\end{split}
\end{equation}
where
\begin{equation}\label{eq:bf2}
B_{f,+,p,q}(h_f) = \frac{f_+^{(q-1)}}{(q-1)!} \mB_{+,1,p,q}(h_f), \quad
B_{f,-,p,q}(h_f) = \frac{f_-^{(q-1)}}{(q-1)!} \mB_{-,1,p,q}(h_f).
\end{equation}
\item[(c)] If $nh_f^2 \to\infty$ and $h_f \to 0$, then
\begin{equation}\label{eq:bf3}
\Delta^{B_2} \hat f_{+,p}(h_f) = O_p((n h_f^2)^{-1/2}) = o_p(1), \quad
\Delta^{B_2} \hat f_{-,p}(h_f) = o_p(1).
\end{equation}
\end{enumerate}
\end{lemma}

Observe that 
\[
\Delta^M \hat \mu_{Z_k+,l}(h_k) = \sqrt{\frac{h_{k}}{n}} e_0' \Gamma_{+,l}^{-1}(h_k)X_l(h_k)'W_+(h_k)\eps_{Z_k},
\]
\[
\Delta^M \hat \mu_{Z_k-,l}(h_k) = \sqrt{\frac{h_{k}}{n}} e_0' \Gamma_{-,l}^{-1}(h_k)X_l(h_k)'W_-(h_k)\eps_{Z_k}.
\]
The following lemma is used to derive the asymptotic variance of the local polynomial estimators.
\begin{lemma}\label{lem:sa3.v}
Suppose Assumptions 1-2 hold with $S\geq l+2$ and $R \geq p+1$.
\begin{enumerate}
\item[(a)] If $n h_{min}\to\infty$ and $h_{max}\to0$, then
\[
\begin{split}
&\Cov\left(\sqrt{\frac{h_{j}}{n}}X_l(h_j)'W_+(h_j)\eps_{Z_j}, \sqrt{\frac{h_{k}}{n}}X_l(h_k)'W_+(h_k)\eps_{Z_k} \right) \\
=& f_+ \sigma_{Z_j Z_k+} \Psi_{j k,l,l} + o(1)
\end{split}
\]
and
\[
\begin{split}
&\Cov\left(\sqrt{\frac{h_{j}}{n}}X_l(h_j)'W_-(h_j)\eps_{Z_j}, \sqrt{\frac{h_{k}}{n}}X_l(h_k)'W_-(h_k)\eps_{Z_k} \right) \\
=& f_- \sigma_{Z_j Z_k-} H_l(-1)\Psi_{j k,l,l}H_l(-1) + o(1).
\end{split}
\]

\item[(b)] If $nh_f^2 \to\infty$ and $h_f \to 0$, then
\begin{eqnarray*}
&&\Var\left(\sqrt{\frac{n}{h_{f}}} U_{1,+,p}(h_f) \right) \\ 
&=& \frac{1}{h_f} f_+^2 (F_+ - F_+^2) \Gamma_{p}e_0e_0'\Gamma_{p} + f_+^3 \Psi_{f,p} \\
&&+ \{- f_+^3 F_+ + f_+ f'_+ (F_+ - F_+^2) \} \Gamma_{p}(e_1e_0'+e_0e_1')\Gamma_{p} + o(1)
\end{eqnarray*}
and
\begin{eqnarray*}
&&\Var\left(\sqrt{\frac{n}{h_{f}}} U_{1,-,p}(h_f) \right) \\
&=& \frac{1}{h_f} f_-^2 (F_- - F_-^2) H_p(-1)\Gamma_{p}e_0e_0'\Gamma_{p}H_p(-1) + f_-^3 H_p(-1)\Psi_{f,p}H_p(-1) \\
&&- \{- f_-^3 (F_- - 1) + f_- f'_- (F_- - F_-^2) \} H_p(-1)\Gamma_{p}(e_1e_0'+e_0e_1')\Gamma_{p}H_p(-1) \\
&&+ o(1).
\end{eqnarray*}

\item[(c)]
\[
\Cov\left(X_l(h_k)'W_+(h_k)\eps_{Z_k}, U_{1,+,p}(h_f) \right) = 0
\]
and
\[
\Cov\left(X_l(h_k)'W_-(h_k)\eps_{Z_k}, U_{1,-,p}(h_f) \right) = 0.
\]
\end{enumerate}
\end{lemma}

We set 
\[
C_{Z_j Z_k+,l} = \frac{\sigma^2_{Z_j Z_k+}}{f_+}
e_0'\Gamma_{l}^{-1}
\Psi_{j k,l,l}\Gamma_{l}^{-1}e_0,\quad
C_{Z_j Z_k-,l} = \frac{\sigma^2_{Z_j Z_k-}}{f_-}
e_0'\Gamma_{l}^{-1}
\Psi_{j k,l,l}\Gamma_{l}^{-1}e_0,
\]
\[
V_{Z_k+,l} = C_{Z_k Z_k+,l}
=\frac{\sigma^2_{Z_k+}}{f_+}e_0'\Gamma_{l}^{-1}
\Psi_{l}\Gamma_{l}^{-1}e_0,\quad
V_{Z_k-,l} = C_{Z_k Z_k-,l}
=\frac{\sigma^2_{Z_k-}}{f_-}e_0'\Gamma_{l}^{-1}
\Psi_{l}\Gamma_{l}^{-1}e_0,
\]
\[
V_{f,+,p} = f_+ e_1'\Gamma^{-1}_p\Psi_{f,p}\Gamma^{-1}_pe_1,\quad
V_{f,-,p} = f_- e_1'\Gamma^{-1}_p\Psi_{f,p}\Gamma^{-1}_pe_1.
\]

We further set
\[
\tilde \Delta \hat \mu_{Z_k+,l}(h_k) \equiv \Delta \hat \mu_{Z_k+,l}(h_k) - \sqrt{n}(h_k)^{3/2+l}B_{Z_k+,l,l+1}(h_k),
\]
\[
\tilde \Delta \hat \mu_{Z_k-,l}(h_k) \equiv \Delta \hat \mu_{Z_k-,l}(h_k) - \sqrt{n}(h_k)^{3/2+l}B_{Z_k-,l,l+1}(h_k),
\]
\[
\tilde \Delta \hat f_{+,p}(h_f) \equiv \Delta \hat f_{+,p}(h_f) - \sqrt{n}h^{p+1/2}_f B_{f,+,p,p+1}(h_f),
\]
\[
\tilde \Delta \hat f_{-,p}(h_f) \equiv \Delta \hat f_{-,p}(h_f) - \sqrt{n}h^{p+1/2}_f B_{f,-,p,p+1}(h_f),
\]

Using Lemmas \ref{lem:sa3.b} and \ref{lem:sa3.v}, the following provides an asymptotic distribution for the local polynomial estimators.

\begin{lemma}\label{lem:sa3.d}
Suppose Assumptions 1–2 hold with $S\geq l+2$ and $R \geq p+1$. If $n h_{min}\to\infty$, $nh_f^2 \to\infty$, $n (h_{max})^{2l+5}\to0$, and $n (h_f)^{2p+3} \to0$, then
\begin{equation}\label{eq:d1.5}
[\tilde \Delta \hat \mu_{Z_1+,l}(h_1),\ldots,\tilde \Delta \hat \mu_{Z_d+,l}(h_d),\tilde \Delta \hat f_{+,p}(h_f)]'
\dto N_{d+1}(0,V_{+,l,p}), 
\end{equation}
\begin{equation}
[\tilde \Delta \hat \mu_{Z_1-,l}(h_1),\ldots,\tilde \Delta \hat \mu_{Z_d-,l}(h_d),\tilde \Delta \hat f_{-,p}(h_f)]'
\dto N_{d+1}(0,V_{-,l,p}), 
\end{equation}
where 
\[
V_{+,l,p} = 
\begin{pmatrix}
    V_{Z_+,l} & 0\\
    0& V_{f,+,p}
\end{pmatrix},\quad V_{Z+,l}=\{C_{Z_j Z_k+,l}\}_{j,k=1,\ldots,d},
\]
\[
V_{-,l,p} = 
\begin{pmatrix}
    V_{Z_-,l} & 0\\
    0& V_{f,-,p}
\end{pmatrix},\quad V_{Z-,l}=\{C_{Z_j Z_k-,l}\}_{j,k=1,\ldots,d}.
\]
\end{lemma}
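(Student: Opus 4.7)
The plan is to combine the Cramér--Wold device with the bias/stochastic decompositions already isolated in Section~\ref{sec:a.1}. For any $(t_1,\ldots,t_d,s)\in\R^{d+1}$, set $S_n = \sqrt{n}\bigl[\sum_{k=1}^d t_k \Delta\hat\mu_{Z_k+,l}(h_k) + s\,\Delta\hat f_{+,p}(h_f)\bigr]$; it suffices to show $S_n\dto N(0,\,t'V_{Z+,l}t + s^2 V_{f,+,p})$, since the block-diagonal form in \eqref{eq:d1.5} then follows immediately from the Cramér--Wold characterization. The minus-side case is identical by symmetry. I would split $S_n = S_n^M + S_n^B$ using the $\Delta^M + \Delta^B$ decompositions from Section~\ref{sec:a.1}.

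The bias piece $S_n^B$ is dispatched by Lemma~\ref{lem:sa3.b}. Part (a) gives $\sqrt{n}\,\Delta^B\hat\mu_{Z_k+,l}(h_k) = O_p(\sqrt{n\,h_k^{2l+3}}) = o_p(1)$ under $n(h_{max})^{2l+3}\to 0$, and part (b) gives $\sqrt{n}\,\Delta^B\hat f_{+,p}(h_f) = O_p(\sqrt{nh_f^{2p+1}}) + O_p((nh_f)^{-1/2}) + O_p(n^{-1}) = o_p(1)$ under $nh_f^{2p+1}\to 0$ and $nh_f\to\infty$ (which is implied by $nh_f^2\to\infty$). Hence $S_n^B = o_p(1)$ and the asymptotic distribution of $S_n$ coincides with that of $S_n^M$.

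For $S_n^M$ I would linearize to an i.i.d. sum. Using $e_0'H_l^{-1}(h_k) = e_0'$ and $e_1'H_p^{-1}(h_f) = h_f^{-1}e_1'$, and replacing the random Gram matrices $\Gamma_{+,l}(h_k),\Gamma_{+,p}(h_f)$ by their in-probability limits (a standard LLN step justified by Assumption~\ref{asp:1}(a)), one can write $S_n^M = n^{-1/2}\sum_{i=1}^n W_{n,i} + o_p(1)$ with
$$W_{n,i} = \sum_{k=1}^d t_k\, h_k^{1/2}\,\psi_{k,n}^Z(X_i)\,\eps_{Z_k,i} + s\,h_f^{-1/2}\,\psi_n^f(X_i),$$
where $\psi_{k,n}^Z(X_i) = e_0'\tilde\Gamma_{+,l}^{-1}(h_k)r_l(X_i/h_k)\,1\{X_i\geq 0\}K_{h_k}(X_i)$ and $\psi_n^f(X_i) = e_1'\tilde\Gamma_{+,p}^{-1}(h_f)\hat{\mathbf{L}}_{+,p,i}(h_f)$. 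The $W_{n,i}$ are i.i.d. and mean zero. Crucially, $\hat{\mathbf{L}}_{+,p,i}(h_f)$ is a deterministic function of $X_i$ while $E[\eps_{Z_k,i}\mid X_i] = 0$, so $\Cov(\psi_{k,n}^Z(X_i)\eps_{Z_k,i},\,\psi_n^f(X_i)) = 0$ \emph{exactly} by iterated expectation. Combined with Lemma~\ref{lem:sa3.v}, this yields the block-diagonal limit variance $t'V_{Z+,l}t + s^2V_{f,+,p}$.

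The main obstacle is the Lyapunov verification for the density-side summand, which carries the extra scale $h_f^{-1/2}$. Since $\hat{\mathbf{L}}_{+,p,i}(h_f)$ is uniformly bounded and has variance $O(h_f)$ (from the defining double integral), one obtains $E\bigl[|h_f^{-1/2}\psi_n^f(X_i)|^4\bigr] = O(h_f^{-1})$, so the normalized fourth moment is $O((nh_f)^{-1}) = o(1)$ under the stated rate conditions; the covariate component yields the analogous bound $O((nh_{min})^{-1}) = o(1)$ via the conditional fourth-moment hypothesis in Assumption~\ref{asp:1}(b). Invoking the multivariate Lindeberg--Feller CLT for $n^{-1/2}\sum_i W_{n,i}$ closes the argument. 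Apart from this Lyapunov check and the LLN replacement of the Gram matrices, the proof is a straightforward assembly of Lemmas~\ref{lem:sa3.b} and~\ref{lem:sa3.v}, and the minus-side statement follows identically.
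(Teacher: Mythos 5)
Your proposal is correct and follows essentially the same route as the paper's proof: the same Cram\'er--Wold reduction, the same $\Delta^M+\Delta^B$ split with Lemma \ref{lem:sa3.b} killing the bias under the stated rate conditions, the same replacement of the random Gram matrices by their population counterparts, the same exact-orthogonality observation via $E[\eps_{Z_k,i}\mid \mX_n]=0$, and the same Lyapunov fourth-moment verification yielding $O((nh_{min})^{-1})+O((nh_f)^{-1})=o(1)$ before invoking Lindeberg--Feller. The only cosmetic difference is that you normalize by the limiting variance rather than the finite-sample conditional variance $\mV_{+,l,p}(h)$, which changes nothing of substance.
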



\begin{remark}
In Lemma \ref{lem:sa3.v} (b), the results for the estimators on the left and right sides are not analogous. The term
\[
\{- f_+^3 F_+ + f_+ f'_+ (F_+ - F_+^2) \} \Gamma_{p}(e_1e_0'+e_0e_1')\Gamma_{p}
\]
for the right-hand-side estimator changes to
\[
-\{- f_-^3 (F_- - 1) + f_- f'_- (F_- - F_-^2) \} H_p(-1)\Gamma_{p}(e_1e_0'+e_0e_1')\Gamma_{p}H_p(-1)
\]
for the left-hand-side estimator. Nevertheless, we observe the following.
\[
e_1'\Gamma_{p}^{-1}\Gamma_{p}(e_1e_0'+e_0e_1')\Gamma_{p}\Gamma_{p}^{-1}e_1
= e_1'(e_1e_0'+e_0e_1')e_1 =0,
\]
and the difference has no effect on the asymptotic variances $V_{f,+,p}$ and $V_{f,-,p}$.
We outline the proof of Lemma \ref{lem:sa3.v} (b) for the left-side estimator in Remark \ref{rem:sa3.v.b}.
\end{remark}

\subsection{Result for implementation}\label{sec:a.2.5}
As mentioned in the main text, we implemented bias correction by increasing the order of the local polynomial estimators constructed from the following (not bias-corrected) distributional approximation.
\begin{theorem}\label{thm:1'}
Suppose that Assumptions 1 and 2 hold, with $R \geq p+2$ and $S\geq l+3$. If $n h_{min}\to\infty$, $nh_f^2 \to\infty$, $n (h_{max})^{2l+5}\to0$, and $n (h_f)^{2p+3} \to0$, then:
\[
\begin{split}
\sqrt{n} [&\sqrt{h_{1}}(\hat\tau_{Z_1,l+1}(h_1)-\tau_{Z_1}),\ldots,\sqrt{h_{d}}(\hat \tau_{Z_d,l+1}(h_d)-\tau_{Z_d}),\sqrt{h_{f}}(\hat\tau_{f,p+1}(h_f)-\tau_f)]'\\
\dto& N_{d+1}(0,V_{l+1,p+1}).
\end{split}
\]
\end{theorem}

\subsection{Asymptotic covariance matrix and standard error estimation}\label{sec:a.3}
Using the asymptotic covariance matrices of $(\hat \mu_{Z_1+,l}(h_1),\ldots,\hat \mu_{Z_d+,l}(h_d),\hat f_{+,p}(h_f))'$ and $(\hat \mu_{Z_1-,l}(h_1),\ldots,\hat \mu_{Z_d-,l}(h_d),\hat f_{-,p}(h_f))'$, the asymptotic covariance matrix of $(\hat\tau_{Z_1,l}(h_1),\ldots,\hat\tau_{Z_d,l}(h_d),\hat\tau_{f,p}(h_f))'$ is 
\[
V_{l,p} = V_{+,l,p} + V_{-,l,p} =
\begin{pmatrix}
    V_{Z,l} & 0\\
    0& V_{f,p}
\end{pmatrix},
\]
where $V_{Z,l} = V_{Z+,l} + V_{Z-,l} =
\{C_{Z_j Z_k+,l}+C_{Z_j Z_k-,l}\}_{j,k=1,\ldots,d}$ and $V_{f,p} = V_{f,+,p} + V_{f,-,p}$.
We propose the following estimator for the asymptotic covariance matrix:
\[
\hat V_{l,p}(h) = \hat V_{+,l,p}(h) + \hat V_{-,l,p}(h) =
\begin{pmatrix}
    \hat V_{Z,l}(h) & 0\\
    0& \hat V_{f,p}(h_f)
\end{pmatrix},
\]
where $\hat V_{Z,l}(h) = \hat V_{Z+,l}(h) + \hat V_{Z-,l}(h) =
\{\hat C_{Z_j Z_k+,l}(h_j,h_k) + \hat C_{Z_j Z_k-,l}(h_j,h_k)\}_{j,k=1,\ldots,d}$ and $\hat V_{f,p}(h_f) = \hat V_{f,+,p}(h_f) + \hat V_{f,-,p}(h_f)$.
The diangonal elements of $\hat V_{Z,l}(h)$ are $\hat V_{Z_k,l}(h_k) = \hat V_{Z_k+,l}(h_k) + \hat V_{Z_k-,l}(h_k)$ for $k=1,\ldots,d$.

For $\hat C_{Z_j Z_k+,l}(h_j,h_k)$, $\hat C_{Z_j Z_k-,l}(h_j,h_k)$, $\hat V_{Z_k+,l}(h)$, and $\hat V_{Z_k-,l}(h)$, we follow an approach similar to the standard error estimators of \cite{calonicoRobustNonparametricConfidence2014a} based on nearest-neighbor estimation.
We define
\[
\hat C_{Z_j Z_k+,l}(h_j,h_k) = h_{jk}e_0'\Gamma_{+,l}^{-1}(h_j)
\hat \Psi_{Z_j Z_k+,l,l}(h_j,h_k)\Gamma_{+,l}^{-1}(h_k)e_0,
\]
\[
\hat C_{Z_j Z_k-,l}(h_j,h_k) = h_{jk}e_0'\Gamma_{-,l}^{-1}(h_j)
\hat \Psi_{Z_j Z_k-,l,l}(h_j,h_k)\Gamma_{-,l}^{-1}(h_k)e_0,
\]
\[
\hat V_{Z_k+,l}(h) = \hat C_{Z_k Z_k+,l}(h ,h) = h e_0'\Gamma_{+,l}^{-1}(h)
\hat \Psi_{Z_k+,l}(h)\Gamma_{+,l}^{-1}(h)e_0,
\]
\[
\hat V_{Z_k-,l}(h) = \hat C_{Z_k Z_k-,l}(h ,h) = h e_0'\Gamma_{-,l}^{-1}(h)
\hat \Psi_{Z_k-,l}(h)\Gamma_{-,l}^{-1}(h)e_0
\]
with
\[
\hat \Psi_{Z_j Z_k+,l,l}(h_j,h_k) = X_l(h_j)'W_+(h_j)\hat \Sigma_{Z_j Z_k+}W_+(h_k)X_l(h_k)/n,
\]
\[
\hat \Psi_{Z_j Z_k-,l,l}(h_j,h_k) = X_l(h_j)'W_-(h_j)\hat \Sigma_{Z_j Z_k-}W_-(h_k)X_l(h_k)/n,
\]
\[
\hat \Psi_{Z_k+,l}(h) = \hat \Psi_{Z_k Z_k+,l}(h,h) = X_l(h)'W_+(h)\hat \Sigma_{Z_k+}W_+(h)X_l(h)/n,
\]
\[
\hat \Psi_{Z_k-,l}(h) = \hat \Psi_{Z_k Z_k-,l}(h,h) = X_l(h)'W_-(h)\hat \Sigma_{Z_k-}W_-(h)X_l(h)/n,
\]
\[
\hat \Sigma_{Z_j Z_k+}=
\diag(\hat \sigma^2_{Z_j Z_k+}(X_1),\ldots,\hat \sigma^2_{Z_j Z_k+}(X_n)),
\]
\[
\hat \Sigma_{Z_j Z_k-}=
\diag(\hat \sigma^2_{Z_j Z_k-}(X_1),\ldots,\hat \sigma^2_{Z_j Z_k-}(X_n)),
\]
\[
\hat \Sigma_{Z_k+}=
\diag(\hat \sigma^2_{Z_k+}(X_1),\ldots,\hat \sigma^2_{Z_k+}(X_n)),
\]
\[
\hat \Sigma_{Z_k-}=
\diag(\hat \sigma^2_{Z_k-}(X_1),\ldots,\hat \sigma^2_{Z_k-}(X_n)),
\]
\[
\hat \sigma^2_{Z_j Z_k+}(X_i) = 1\{X_i\geq0\}\frac{M}{M+1}
\left(Z_{j,i} - \sum_{m=1}^M\frac{Z_{j,l^+_m(i)}}{M} \right)
\left(Z_{k,i} - \sum_{m=1}^M\frac{Z_{k,l^+_m(i)}}{M} \right),
\]
\[
\hat \sigma^2_{Z_j Z_k-}(X_i) = 1\{X_i<0\}\frac{M}{M+1}
\left(Z_{j,i} - \sum_{m=1}^M\frac{Z_{j,l^-_m(i)}}{M} \right)
\left(Z_{k,i} - \sum_{m=1}^M\frac{Z_{k,l^-_m(i)}}{M} \right),
\]
\[
\hat \sigma^2_{Z_k+}(X_i) = 1\{X_i\geq0\}\frac{M}{M+1}
\left(Z_{k,i} - \sum_{m=1}^M\frac{Z_{k,l^+_m(i)}}{M} \right)^2,
\]
\[
\hat \sigma^2_{Z_k-}(X_i) = 1\{X_i<0\}\frac{M}{M+1}
\left(Z_{k,i} - \sum_{m=1}^M\frac{Z_{k,l^-_m(i)}}{M} \right)^2,
\]
where $l^+_m(i)$ and $l^-_m(i)$ are the $m$th closest units to unit $i$ among $\{X_i:X_i\geq0\}$ and $\{X_i:X_i<0\}$ respectively, and $M$ denotes the number of neighbors.
\cite{calonicoRobustNonparametricConfidence2014a} shows that, if $\sigma^2_{Z_j Z_k}(x)$ is Lipschitz continuous on $(-\kappa_0,\kappa_0)$, then, for any choice of $M\in\N_+$, 
\[
\hat \Psi_{Z_j Z_k+,l,l}(h_j,h_k) = \Psi_{Z_j Z_k+,l,l}(h_j,h_k)
+ o_p(\min\{h_j^{-1},h_k^{-1}\}),
\]
\[
\hat \Psi_{Z_j Z_k-,l,l}(h_j,h_k) = \Psi_{Z_j Z_k-,l,l}(h_j,h_k)
+ o_p(\min\{h_j^{-1},h_k^{-1}\})
\]
hold, which leads to $\hat V_{Z,l}(h)\pto V_{Z,l}$ combined with Lemma \ref{lem:sa3.v}.

For $\hat V_{f,p}(h_f)$, we employ the jackknife-based standard error estimator from \cite{cattaneoSimpleLocalPolynomial2019}, which may be more robust than plug-in estimations in finite samples.
We define
\[
\hat V_{f+,p}(h_f)=h_f^{-1} e_1' \Gamma_{+,p}^{-1}(h_f)\hat \Psi_{f+,p}(h_f) \Gamma_{+,p}^{-1}(h_f)e_1,
\]
\[
\hat V_{f-,p}(h_f)=h_f^{-1} e_1' \Gamma_{-,p}^{-1}(h_f)\hat \Psi_{f-,p}(h_f) \Gamma_{-,p}^{-1}(h_f)e_1,
\]
with
\[
\begin{split}
\hat \Psi_{f+,p}(h_f) =& \frac{1}{n}\sumin
\left(\frac{1}{n-1}\sum_{j:j\neq i}^n
\hat \phi_{+,p,h_f}^*(X_i, X_j)\right)
\left(\frac{1}{n-1}\sum_{j:j\neq i}^n
\hat \phi_{+,p,h_f}^*(X_i, X_j)\right)'\\
&-\left(\frac{1}{n(n-1)}\sum_{i,j:i < j}^n
\hat \phi_{+,p,h_f}^*(X_i, X_j)\right)
\left(\frac{1}{n(n-1)}\sum_{i,j:i < j}^n
\hat \phi_{+,p,h_f}^*(X_i, X_j)\right)',
\end{split}
\]
\[
\begin{split}
\hat \Psi_{f-,p}(h_f) =& \frac{1}{n}\sumin
\left(\frac{1}{n-1}\sum_{j:j\neq i}^n
\hat \phi_{-,p,h_f}^*(X_i, X_j)\right)
\left(\frac{1}{n-1}\sum_{j:j\neq i}^n
\hat \phi_{-,p,h_f}^*(X_i, X_j)\right)'\\
&-\left(\frac{1}{n(n-1)}\sum_{i,j:i < j}^n
\hat \phi_{-,p,h_f}^*(X_i, X_j)\right)
\left(\frac{1}{n(n-1)}\sum_{i,j:i < j}^n
\hat \phi_{-,p,h_f}^*(X_i, X_j)\right)',
\end{split}
\]
\[
\hat \phi_{+,p,h_f}^*(X_i, X_j) = \hat \psi_{+,p,h_f}^*(X_i, X_j) + \hat \psi_{+,p,h_f}^*(X_j, X_i),
\]
\[
\hat \phi_{-,p,h_f}^*(X_i, X_j) = \hat \psi_{-,p,h_f}^*(X_i, X_j) + \hat \psi_{-,p,h_f}^*(X_j, X_i),
\]
\[
\hat \psi_{+,p,h_f}^*(X_i, X_j) = 1\{X_i \geq 0\}r_p(X_i/h_f)(1\{X_j \leq X_i\} - r_p(X_i)'\hat{\beta}_{f+,p}(h_{f})){K}_{h_f}(X_i),
\]
\[
\hat \psi_{-,p,h_f}^*(X_i, X_j) = 1\{X_i < 0\}r_p(X_i/h_f)(1\{X_j \leq X_i\} - r_p(X_i)'\hat{\beta}_{f-,p}(h_{f})){K}_{h_f}(X_i).
\]
These estimators can be motivated from another representations of $\Delta \hat f_{+,p}(h_f)$ and $\Delta \hat f_{-,p}(h_f)$.
For $\Delta \hat f_{+,p}(h_f)$, one can show that
\[
\Delta \hat f_{+,p}(h_f) = \sqrt{\frac{n}{h_{f}}} e_1' \Gamma_{+,p}^{-1}(h_f) U_{+,p}^*(h_f) + O_p\left(\frac{1}{\sqrt{n h_f}}\right),
\]
where
\[
U_{+,p}^*(h_f) = \frac{1}{n(n-1)}\sum_{i,j:i\neq j}^{n} \psi_{+,p,h_f}^*(X_i, X_j), 
\]
\[
\psi_{+,p,h_f}^*(X_i, X_j) = 1\{X_i \geq 0\}r_p(X_i/h_f)(1\{X_j \leq X_i\} - r_p(X_i)'\beta_{f+,p}){K}_{h_f}(X_i),
\]
and $\hat \Psi_{f+,p}(h_f)$ is constructed to approximate the asymptotic variance of the second-order U-statistic $U_{+,p}^*(h_f)$.

\subsection{Proof of the results}\label{sec:a.4}
The proofs in this section use several auxiliary results (Lemma \ref{lem:sa2}) collected in Appendix \ref{sec:a.5}.
For Lemmas \ref{lem:sa3.b}-\ref{lem:sa3.d}, we only provide proofs for the right-hand-side estimators ($\hat \mu_{Z_k+,l}(h_k)$, $\hat f_{+,p}(h_f)$), and the proofs of the left-hand-side estimators ($\hat \mu_{Z_k-,l}(h_k)$ and $\hat f_{-,p}(h_f)$) are analogous. 
Without loss of generality, we assume that $\kappa h_j < \kappa_0$ for $j=1,\ldots,d,f$ to bound the densities and error variances evaluated at $u h_f$ where $u \in (-\kappa,\kappa)$. 

\begin{proof}[Proof of Lemma \ref{lem:sa3.b}]
For part (a), applying Lemma S.A.3 (B) of \cite{calonicoRobustNonparametricConfidence2014a} to $E[\hat{\mu}_{Z_k+,l}(h_k)|\mX_n]$ for $k=1,\ldots,d$ yields:
\begin{equation}\label{eq:b1.1}
\begin{split}
&\frac{1}{\sqrt{n}}\Delta^B \hat \mu_{Z_k+,l}(h_k)\\
=&(h_k)^{3/2+l}\frac{\mu^{(l+1)}_{Z_k+}}{(l+1)!}
\mB_{+,0,l,l+1}(h_k)+(h_k)^{5/2+l}\frac{\mu^{(l+2)}_{Z_k+}}{(l+2)!}
\mB_{+,0,l,l+2}(h_k) + o_p((h_k)^{5/2+l}). 
\end{split}
\end{equation}

For $\Delta^{B_0} \hat f_{+,p}(h_f)$ in part (b), a derivation analogous to the proof of Lemma 2 in \cite{cattaneoSimpleLocalPolynomial2019} gives
\begin{equation}\label{eq:bf1.1}
\begin{split}
&\frac{1}{\sqrt{n}}\Delta^{B_0} \hat f_{+,p}(h_f)\\
=&h^{p+1/2}_f\frac{f_+^{(p)}}{(p+1)!} \mB_{+,1,p,p+1}(h_f)
+h^{p+3/2}_f\frac{f_+^{(p+1)}}{(p+2)!} \mB_{+,1,p,p+2}(h_f)
+o_p((h_f)^{p+3/2}).
\end{split}
\end{equation}
For $\Delta^{B_1} \hat f_{+,p}(h_f)$ in part (b), we observe that
\[
\begin{split}
&E\left[\frac{1}{n} |X_p(h_f)' W_+(h_f) (\iota - {F})| \right] \\
\leq& \frac{1}{n} \sumin E\left[ |r_p(X_i /h_f)' 1\{X_i \geq 0\} K_{h_f}(X_i) (1 - F(X_i))| \right] \\
\precsim& \int_0^\infty |r_p(u)| K(u) f(h_f u) = O(1),
\end{split}
\]
and using the Markov's inequality,
\[
\frac{1}{n} X_p(h_f)' W_+(h_f) (\iota - {F}) = O_p(1).
\]
Thus, we obtain $\Delta^{B_1} \hat f_{+,p}(h_f) = O_p(\frac{1}{\sqrt{n h_f}})$.

For part (c), we observe the following.
\begin{equation}
\begin{split}
E[|U_{2,+,p}(h_f)|^2]
\precsim& \frac{1}{n(n-1)} E[|\varphi_{+,p,h_f}(X_i,X_j)|^2] \\
\precsim& \frac{1}{n(n-1)} E[|\psi_{+,p,h_f}(X_i,X_j) - E[\psi_{+,p,h_f}(X_i,X_j)|X_j]|^2] \\
&+ \frac{1}{n(n-1)} E[|\psi_{+,p,h_f}(X_j,X_i) - E[\psi_{+,p,h_f}(X_j,X_i)|X_i]|^2 \\
\precsim& \frac{1}{n(n-1)} E[|\psi_{+,p,h_f}(X_i,X_j)|^2].
\end{split}
\end{equation}
A derivation analogous to the proof of Lemma 4 in \cite{cattaneoSimpleLocalPolynomial2019} provides
\begin{equation}
\begin{split}
&\frac{1}{n(n-1)} E[|\psi_{+,p,h_f}(X_i,X_j)|^2]\\
\precsim& \frac{1}{n(n-1)h_f} f_+ (F_+ - F_+^2) 
\int_0^\infty K(u)^2 r_p(u)'r_p(u) du
+ O(\frac{1}{n^{2}})
= O_p(\frac{1}{n^{2} h_f}),
\end{split}
\end{equation}
which implies that
$U_{2,+,p}(h_f) = O_p(\frac{1}{n \sqrt{h_f}})$ from Chebyshev's inequality.
Hence part (c) follows from $\sqrt{\frac{n}{h_f}} U_{2,+,p}(h_f) =  O_p(\frac{1}{\sqrt{n h_f^2}})$.

\end{proof}


\begin{proof}[Proof of Lemma \ref{lem:sa3.v}]
Observe that 
\[
\begin{split}
&\Cov\left(\sqrt{\frac{h_{j}}{n}}X_l(h_j)'W_+(h_j)\eps_{Z_j}, \sqrt{\frac{h_{k}}{n}}X_l(h_k)'W_+(h_k)\eps_{Z_k} | \mX_n \right) \\
=& \frac{h_{jk}}{n}X_l(h_j)'W_+(h_j)\Sigma_{Z_j Z_k}W_+(h_k)X_l(h_k)
= h_{jk} \Psi_{Z_j Z_k+,l,l}(h_j,h_k)
\end{split}
\]
and part (a) follows from Lemma \ref{lem:sa2} (a).

Note that
\[
\begin{split}
&E[\psi_{+,p,h_f}(X_j, X_i)|X_i]\\
=& \int_0^{\infty} r_p(u)(1(X_i \leq h_fu) - F(h_fu)) K(u)f(h_fu)du,
\end{split}
\]
A derivation analogous to the proof of Lemma 3 in \cite{cattaneoSimpleLocalPolynomial2019} provides
\begin{eqnarray*}
&& \Var\left(\int_0^{\infty} r_p(u)(1(X_i \leq h_fu) - F(h_fu)) K(u)f(h_fu)du \right) \\
&=& \int_0^{\infty}\int_0^{\infty} r_p(u)r_p(v)'K(u)K(v)f(h_fu)f(h_fv)
\{F(h_f(u\wedge v))) - F(h_fu)F(h_fv)\}dudv \\
&=& f_+^2 (F_+ - F_+^2) 
\int_0^{\infty}\int_0^{\infty} r_p(u)r_p(v)'K(u)K(v)dudv 
+ h_f f_+^3 \Psi_{f,p} \\
&&+ h_f \{- f_+^3 F_+ + f_+ f'_+ (F_+ - F_+^2) \} 
\int_0^{\infty}\int_0^{\infty} (u+v)r_p(u)r_p(v)'K(u)K(v)dudv 
+ o(1),
\end{eqnarray*}
Hence, part (b) follows from 
\[
\int_0^{\infty}r_p(u)K(u)du = \Gamma_p e_0 \quad\text{and}\quad
\int_0^{\infty}u r_p(u)K(u)du = \Gamma_p e_1.
\]

Observe that
\[
X_l(h_k)'W_+(h_k)\eps_{Z_k} = \sumin 1\{X_i\geq 0\}K_{h_k}(X_i)r_l(X_i/h_k)\eps_{Z_k,i}
\]
and part (c) follows from
\[
\begin{split}
&\Cov\left(\sumin 1\{X_i\geq 0\}K_{h_k}(X_i)r_l(X_i/h_k)\eps_{Z_k,i}, 
\sumin E[\psi_{+,p,h_f}(X_j, X_i)|X_i] \right) \\
=& \sumin E\left[1\{X_i\geq 0\}K_{h_k}(X_i)r_l(X_i/h_k)\eps_{Z_k,i} E[\psi_{+,p,h_f}(X_j, X_i)|X_i]\right]=0
\end{split}
\]
because $E[\eps_{Z_k,i}|\mX_n]=0$.
\end{proof}

\begin{remark}\label{rem:sa3.v.b}
We now outline the proof of Lemma \ref{lem:sa3.v} (b) for the left-hand-side estimator.
A derivation analogous to the proof for the right-side estimator yields
\begin{eqnarray*}
&& \Var\left(E[\psi_{-,p,h_f}(X_j, X_i)|X_i]\right) \\
&=& f_-^2 (F_- - F_-^2) 
H_p(-1)\int_0^{\infty}\int_0^{\infty} r_p(u)r_p(v)'K(u)K(v)dudvH_p(-1) \\
&&+ h_f f_-^3 
H_p(-1)\int_0^{\infty}\int_0^{\infty} \{(-u)\wedge(-v)\}r_p(u)r_p(v)'K(u)K(v)dudv
H_p(-1) \\
&&+ h_f \{- f_-^3 F_- + f_- f'_- (F_- - F_-^2) \} \\
&&\times H_p(-1)
\int_0^{\infty}\int_0^{\infty} -(u+v)r_p(u)r_p(v)'K(u)K(v)dudv
H_p(-1) + o(1),
\end{eqnarray*}
and the stated result follows from
\[
(-u)\wedge(-v) = \frac{1}{2}(-u-v-|u-v|) = u\wedge v - (u+v).
\]
\end{remark}


\begin{proof}[Proof of Lemma \ref{lem:sa3.d}]
We show that
\begin{equation}\label{eq:d24}
\sum_{k=1}^d t_k \tilde \Delta \hat \mu_{Z_k+,l}(h_k) + t_f \tilde \Delta \hat f_{+,p}
\dto N(0,\tilde{t}'V_{+,l,p}\tilde{t}),
\end{equation}
The stated result follows from the Cram\'{e}r-Wold theorem.

We set
\begin{equation}\label{eq:d3}
\xi_{1,n} = \sum_{k=1}^d t_k \Delta^M \hat \mu_{Z_k+,l}(h_k) + t_f \Delta^M \hat f_{+,p}(h_f)
\end{equation}
and
\begin{equation}\label{eq:d4.5}
\xi_{2,n} = \sum_{k=1}^d t_k \tilde \Delta^B \hat \mu_{Z_k+,l}(h_k) + t_f \tilde \Delta^B \hat f_{+,p}(h_f),
\end{equation}
where
\[
\tilde \Delta^B \hat \mu_{Z_k+,l}(h_k) \equiv \Delta^B \hat \mu_{Z_k+,l}(h_k) - \sqrt{n}(h_k)^{3/2+l}B_{Z_k+,l,l+1}(h_k)
\]
and
\[
\tilde \Delta^B \hat f_{+,p}(h_f) \equiv \Delta^B \hat f_{+,p}(h_f) - \sqrt{n}h^{p+1/2}_f B_{f,+,p,p+1}(h_f).
\]
To demonstrate \eqref{eq:d24}, we decompose the left-hand-side as follows.
\[
\sum_{k=1}^d t_k \tilde \Delta \hat \mu_{Z_k+,l}(h_k) + t_f \tilde \Delta \hat f_{+,p}(h_f)
=\xi_{1,n}+\xi_{2,n}.
\]
We show that $\xi_{1,n}\dto N(0,\tilde{t}'V_{+,l,p}\tilde{t})$ and $\xi_{2,n}=o_p(1)$.

First, we demonstrate that $\xi_{2,n}=o_p(1)$.
If $h_{max}\to0$, then from Lemma \ref{lem:sa3.b} (a),
\begin{equation}\label{eq:d5.5}
\sum_{k=1}^d t_k \tilde \Delta^B \hat \mu_{Z_k+,l}(h_k)
=O_p\left(\sqrt{n} \sum_{k=1}^d(h_k)^{5/2+l}\right)
=O_p\left(\sqrt{n} (h_{max})^{5/2+l} \right).
\end{equation}
From Lemma \ref{lem:sa3.v} (b), $E[|\sqrt{\frac{n}{h_{f}}} U_{1,+,p}(h_f)|^2] = O_(\frac{1}{h_f})$, implying \\
$\sqrt{\frac{n}{h_{f}}} U_{1,+,p}(h_f) = O_p(\frac{1}{\sqrt{h_f}})$ from Chebyshev's inequality.
Hence, we have \\
$\Delta^{B_3} \hat f_{+,p}(h_f) = O_p(\frac{1}{n \sqrt{h_f}})$ and
\[
\tilde \Delta^B \hat f_{+,p}(h_f)
= O_p\left(\sqrt{n} h_f^{3/2 + p}\right)
+ O_p\left(\frac{1}{ \sqrt{n h_f^2}}\right).
\]
Therefore, we have $\xi_{2,n}=o_p(1)$ for the bias.

Next, we demonstrate that $\xi_{1,n}\dto N(0,1)$.
Let
\[
\tilde \Delta^M \hat \mu_{Z_k+,l}(h_k) = \sqrt{\frac{h_{k}}{n}} f_+^{-1} e_0' \Gamma_{l}^{-1}X_l(h_k)'W_+(h_k)\eps_{Z_k}
\]
and
\[
\tilde \Delta^M \hat f_{+,p}(h_f) = \sqrt{\frac{n}{h_{f}}} f_+^{-1} e_1' \Gamma_{p}^{-1} U_{1,+,p}(h_f).
\]
Subsequently, we have 
\[
\Delta^M \hat \mu_{Z_k+,l}(h_k) = \tilde \Delta^M \hat \mu_{Z_k+,l}(h_k) + o_p(1)
\]
and
\[
\Delta^M \hat f_{+,p}(h_f) = \tilde \Delta^M \hat f_{+,p}(h_f) + O_p\left(\sqrt{h_f} + \frac{1}{ \sqrt{n h_f^2}}\right).
\]
To demonstrate this, from Lemma S.A.1 in \cite{calonicoRobustNonparametricConfidence2014a}, we obtain
\[
\Gamma_{+,p}(h_f) = f_+ \Gamma_{p} + O_p\left(h_f + \frac{1}{ \sqrt{n h_f}}\right),
\]
Hence, $|\Gamma_{+,p}(h_f) - f_+ \Gamma_{p}|<1$ with a probability approaching one.
Consequently, a well-known result for the matrix inverse yields
\[
|\Gamma_{+,p}^{-1}(h_f) - f_+^{-1} \Gamma_{p}^{-1}|
\leq \frac{|\Gamma_{+,p}(h_f) - f_+ \Gamma_{p}| |f_+^{-1} \Gamma_{p}^{-1}|^2}{1 - |\Gamma_{+,p}(h_f) - f_+ \Gamma_{p}| |f_+^{-1} \Gamma_{p}^{-1}|} = O_p\left(h_f + \frac{1}{ \sqrt{n h_f}}\right)
\]
with a probability approaching one. 
Hence, we obtain 
\[
\Gamma_{+,p}^{-1}(h_f) = f_+^{-1} \Gamma_{p}^{-1} + O_p\left(h_f + \frac{1}{ \sqrt{n h_f}}\right),
\]
and we have $\sqrt{\frac{h_{k}}{n}}X_l(h_k)'W_+(h_k)\eps_{Z_k}=O_p(1)$ and $\sqrt{\frac{n}{h_{f}}}U_{1,+,p}(h_f) = O_p\left(\frac{1}{ \sqrt{h_f}}\right)$ from Lemma \ref{lem:sa3.v} (a) and (b), respectively.
Let
\[
\tilde \xi_{1,n} = \sum_{k=1}^d t_k \tilde \Delta^M \hat \mu_{Z_k+,l}(h_k) + t_f \tilde \Delta^M \hat f_{+,p}(h_f).
\]
Subsequently, from the previous definitions and derivations, $\xi_{1,n} = \tilde \xi_{1,n} + o_p(1)$.

Thus, it remains to show that $\tilde\xi_{1,n}\dto N(0,\tilde{t}'V_{+,l,p}\tilde{t})$.
Notably, $\tilde \xi_{1,n}$ can be represented as $\tilde \xi_{1,n} = 
\sumin \hat \omega_i$ with
\[
\hat \omega_i \equiv f_+^{-1} \left\{ e_0' \Gamma_{l}^{-1} \sum_{k=1}^d t_k \hat\omega_{k,i} \eps_{Z_k,i} + t_f e_1' \Gamma_{p}^{-1} \hat\omega_{f,i} \right\},
\]
where
\[
\hat\omega_{k,i} \equiv \sqrt{\frac{h_k}{n}} 1\{X_i\geq 0\}K_{h_k}(X_i)r_l(X_i/h_k)
\]
and
\[
\hat\omega_{f,i} \equiv \frac{1}{\sqrt{n h_f}} E[\psi_{+,p,h_f}(X_j, X_i)|X_i].
\]
$\{\hat \omega_i\}:1\leq i\leq n$ denotes a triangular array of independent random variables;
We show that 
\begin{equation}\label{eq:d16}
E[\tilde\xi_{1,n}]=0,
\end{equation}
\begin{equation}\label{eq:d19}
\Var(\tilde\xi_{1,n}) = \tilde{t}'V_{+,l,p}\tilde{t} + o(1),
\end{equation}
and
\begin{equation}\label{eq:d20'}
\sumin E\left[\left|\hat \omega_i \right|^4\right] = o(1).
\end{equation}
The Lyapunov Condition---a well-known sufficient condition for the Lindeberg condition---is satisfied by \eqref{eq:d20'}.
Therefore, from \eqref{eq:d16}, \eqref{eq:d19}, and \eqref{eq:d20'}, applying the Lindeberg-Feller central limit theorem yields that $\tilde\xi_{1,n}\dto N(0,\tilde{t}'V_{+,l,p}\tilde{t})$ (see, for example, \cite{durrett2019probability}).

First, from the definition, \eqref{eq:d16} follows from $E[\eps_{Z_k,i}|\mX_n]=0$ and \\$E[\psi_{+,p,h_f}(X_j, X_i)]=0$.
Next, from Lemmas \ref{lem:sa3.v} (a) and (b), we have 
\[
\begin{split}
\Var(\tilde\xi_{1,n}) =& \Var\left(\sum_{k=1}^d t_k \tilde \Delta^M \hat \mu_{Z_k+,l}(h_k) \right) + \Var\left( t_f \tilde \Delta^M \hat f_{+,p}(h_f) \right) \\
=& t'V_{Z+,l}t + t_f^2 V_{f,+,p} + o(1).
\end{split}
\]
Hence, we obtain \eqref{eq:d19}.
Finally, similar to the proof of Lemma S.A.3 (D) in \cite{calonicoRobustNonparametricConfidence2014a}, we obtain 
\begin{align*}
&\sumin E[|\hat\omega_{k,i}\eps_{Z_k,i}|^4]
\precsim \sumin E[|\hat\omega_{k,i}|^4] \\
\precsim& \frac{1}{n h_k} \int_{0}^{\infty} K(u)^4 |r_p(u)|^4 f(h_k u) du 
= O\left(\frac{1}{n h_k} \right).
\end{align*}
The first inequality holds because, for two random variables $W$ and $Y$, $E[|W+Y|^4|X_i=x]\leq8\{E[W^4|X_i=x]+E[Y^4|X_i=x]\}$ holds,
\begin{equation}\label{eq:d21}
E[\eps_{Z_k,i}^4|X_i=x]\leq8\{E[|Z_{k,i}|^4|X_i=x]+\mu(x)^4\}.
\end{equation}
Hence, $E[\eps_{Z_{k,i}}^4|X_i=x]$ is bounded, based on Assumption 1.
Similar to the proof of Lemma 3 in \cite{cattaneoSimpleLocalPolynomial2019}, we obtain 
\begin{align*}
&\sumin E[|\hat\omega_{f,i}|^4]\\
\precsim& \int_{0}^{\infty}\int_{0}^{\infty}\int_{0}^{\infty}\int_{0}^{\infty}
r_p(u_1)r_p(u_2)'r_p(u_3)r_p(u_4)'
\prod_{j=1}^4 \left[K(u_j) f(h_f u_j) \right] du_1du_2du_3du_4 \\
=& O\left(\frac{1}{n h_f^2} \right).
\end{align*}
Observe that
\begin{equation}\label{eq:d20}
\begin{split}
& \sumin E[|\hat\omega_{f,i}|^4]
\precsim \sumin \sum_{k=1}^d E\left[\left| t_k \hat\omega_{k,i}\eps_{Z_k,i}\right|^4\right] + \sumin E\left[\left|t_f \hat\omega_{f,i}\right|^4\right]\\
\precsim&\sum_{k=1}^d \sumin E\left[\left|
\hat\omega_{k,i}\eps_{Z_k,i}\right|^4\right] +  \sum_{i=1}^n E[|\hat\omega_{f,i}|^4]\\
\precsim& O\left(\frac{1}{n h_{min}} \right) + O\left(\frac{1}{n h_f^2}\right).
\end{split}
\end{equation}
The first inequality in \eqref{eq:d20} holds because, for two random variables $W$ and $Y$, $E[|W+Y|^4]\leq8\{E[W^4]+E[Y^4]\}$ holds iteratively.
Therefore, \eqref{eq:d20'} follows.
\end{proof}

\begin{proof}[Proof of Theorem 3.1]
The stated result follows from the same argument given in the proof of Lemma \ref{lem:sa3.d} using the analogs of Lemmas \ref{lem:sa3.b} and \ref{lem:sa3.v}, but is now applied to the estimator $(\hat\tau_{Z_1,l}(h_1),\ldots,\hat\tau_{Z_d,l}(h_d),\hat\tau_{f,p}(h_f))'$. 
\end{proof}

\begin{proof}[Proof of Proposition 3.1]
Let $c_{l, p}(\alpha)$ be the $\alpha$-quantile of $\| N_{d+1}(0,V_{l,p}^*) \|_{2}^2$.
Theorem 3.1 implies that $\tilde \chi^2_{l,p}(h) \dto \| N_{d+1}(0,V_{l,p}^*) \|_{2}^2$ under $H_0$.
To prove Proposition 3.1 under $H_0$, it suffices to show that $\hat c_{l,p}(\alpha) \pto c_{l,p}(\alpha)$ because this implies that $\tilde \chi^2_{l,p}(h) - \hat c_{l,p}(\alpha) \dto \| N_{d+1}(0,V_{l,p}^*) \|_{2}^2 - c_{l,p}(\alpha)$ under $H_0$.
Let $F(\cdot; V)$ be the distribution function of $\| N_{d+1}(0,V) \|_{2}^2$.
Suppose that $V_m$ and $V$ are positive definite matrices for $m$ and $V_m \to V$.
Subsequently, for all $t \in \mathbb{R}$, it follows from the dominated convergence theorem that
\begin{eqnarray*}
F(t;V_m) &=& \int \cdots \int 1 \left\{ u_1^2 + \cdots + u_{d+1}^2 \leq t \right\} \phi(u_1, \ldots , u_{d+1};V_m)  du_1 \cdots du_{d+1} \nonumber \\
&\to & \int \cdots \int 1 \left\{ u_1^2 + \cdots + u_{d+1}^2 \leq t \right\} \phi(u_1, \ldots , u_{d+1};V) du_1 \cdots du_{d+1} = F(t;V), \nonumber
\end{eqnarray*}
where $\phi(u_1, \ldots , u_{d+1};V)$ denotes the joint density function of $N_{d+1}(0,V)$.
As $F(t;V)$ is continuous with respect to $t$, $F(\cdot;V_m)$ converges uniformly to $F(\cdot;V)$, according to Polya's theorem.
As discussed in Section 3.9.4.2, in \cite{vaart1996weak}, the inverse map is continuous.
Thus, $F^{-1}(\alpha;V_m) \to F^{-1}(\alpha;V)$ for any $\alpha \in (0,1)$.
Hence, $F^{-1}(t;V)$ is continuous with respect to $V$, and we obtain $\hat c_{l,p}(\alpha) \pto c_{l,p}(\alpha)$ from the continuous-mapping theorem.

To prove Proposition 3.1 under $H_1$, suppose that, without loss of generality, $\tau_{Z_k} \neq 0$.
The proof for the case $\tau_f \neq 0$ follows from an analogous argument.
Observe that
\[
P\left( \frac{n h_{k}\hat \tau_{Z_k,l}^2(h_k)}{\hat V_{Z_k,l}(h_k)} \geq \hat c_{l,p}(\alpha) \right) 
\leq P(\tilde \chi^2_{l,p}(h) \geq \hat c_{l,p}(\alpha))
\]
and
\[
\begin{split}
&P\left( \frac{n h_{k}\hat \tau_{Z_k,l}^2(h_k)}{\hat V_{Z_k,l}(h_k)} \geq \hat c_{l,p}(\alpha) \right) \\
=& P\left( \frac{n h_{k} (\hat \tau_{Z_k,l}(h_k) - \tau_{Z_k})^2}{\hat V_{Z_k,l}(h_k)} - \hat c_{l,p}(\alpha) 
\geq - \frac{n h_{k} (2 \hat \tau_{Z_k,l}(h_k) - \tau_{Z_k}) \tau_{Z_k}}{\hat V_{Z_k,l}(h_k)} \right).
\end{split}
\]
Because we have
\[
\frac{n h_{k} (\hat \tau_{Z_k,l}(h_k) - \tau_{Z_k})^2}{\hat V_{Z_k,l}(h_K)} - \hat c_{l,p}(\alpha)
\dto || N(0, 1) ||_2^2 - c_{l,p}(\alpha),
\]
\[
\frac{(2 \hat \tau_{Z_k,l}(h_k) - \tau_{Z_k}) \tau_{Z_k}}{\hat V_{Z_k,l}(h_k)} \pto \frac{2 \tau_{Z_k}^2}{V_{Z_k,l}},
\]
and $n h_{k} \to \infty$, we obtain
\[
\lim_{n\to \infty} P\left( \frac{n h_{k}\hat \tau_{Z_k,l}^2(h_k)}{\hat V_{Z_k,l}(h_k)} \geq \hat c_{l,p}(\alpha) \right) = 1.
\]
Hence, the stated result under $H_1$ follows.
\end{proof}

\begin{proof}[Proof of Proposition 3.2]
Let $m_{l, p}(\alpha)$ be the $\alpha$-quantile of $\| N_{d+1}(0,V_{l,p}^*) \|_{\infty}^2$.
Then, $\hat m_{l,p}(\alpha) \pto m_{l,p}(\alpha)$ follows from the same argument as the proof of  Proposition 3.1.
Hence, the stated result under $H_0$ follows.

To prove Proposition 3.2 under $H_1$, suppose that, without loss of generality, $\tau_{Z_k} \neq 0$.
The proof for the case $\tau_f \neq 0$ follows from an analogous argument.
Observe that
\[
P\left( \frac{n h_{k}\hat \tau_{Z_k,l}^2(h_k)}{\hat V_{Z_k,l}(h_k)} \geq \hat m_{l,p}(\alpha) \right) 
\leq P(\hat M_{l,p}(h) \geq \hat m_{l,p}(\alpha)).
\]
By an analogous argument as in the proof of Proposition 3.1, we obtain
\[
\lim_{n\to \infty} P\left( \frac{n h_{k}\hat \tau_{Z_k,l}^2(h_k)}{\hat V_{Z_k,l}(h_k)} \geq \hat m_{l,p}(\alpha) \right) = 1.
\]
Hence, the stated result under $H_1$ follows.
\end{proof}

\begin{proof}[Proof of Proposition 5.1]
We first prove the statement under $H_0$.
First, suppose that $\max\{\tau_f,\tau_{Z_1},\ldots,\tau_{Z_d}\} \geq \varepsilon$.
Then, for the case of $\max\{\tau_f,\tau_{Z_1},\ldots,\tau_{Z_d}\} = \tau_{Z_k}$, observe that
\[
\hat \psi_{1,l,p}(h) \geq \frac{\sqrt{n}\sqrt{h_{k}}(\hat\tau_{Z_k,l}(h_k)-\varepsilon)}{\hat V_{Z_k,l}(h_k)}
\geq \frac{\sqrt{n}\sqrt{h_{k}}(\hat\tau_{Z_k,l}(h_k)-\tau_{Z_k})}{\hat V_{Z_k,l}(h_k)}.
\]
Hence, we have
\[
P(\hat \psi_{1,l,p}(h) \leq z_{\alpha/2})
\leq P\left( \frac{\sqrt{n}\sqrt{h_{k}}(\hat\tau_{Z_k,l}(h_k)-\tau_{Z_k})}{\hat V_{Z_k,l}(h_k)} \leq z_{\alpha/2} \right) \to \frac{\alpha}{2},
\]
The same result holds for the case of $\max\{\tau_f,\tau_{Z_1},\ldots,\tau_{Z_d}\} = \tau_f$.

Next, suppose that $\min\{\tau_f,\tau_{Z_1},\ldots,\tau_{Z_d}\} \leq -\varepsilon$.
Then, for the case of $\min\{\tau_f,\tau_{Z_1},\ldots,\tau_{Z_d}\} = \tau_{Z_k}$, observe that
\[
\hat \psi_{2,l,p}(h) \leq \frac{\sqrt{n}\sqrt{h_{k}}(\hat\tau_{Z_k,l}(h_k)+\varepsilon)}{\hat V_{Z_k,l}(h_k)}
\leq \frac{\sqrt{n}\sqrt{h_{k}}(\hat\tau_{Z_k,l}(h_k)-\tau_{Z_k})}{\hat V_{Z_k,l}(h_k)}.
\]
Hence, we have
\[
P(\hat \psi_{2,l,p}(h) \geq z_{1 - \alpha/2})
\leq P\left( \frac{\sqrt{n}\sqrt{h_{k}}(\hat\tau_{Z_k,l}(h_k) - \tau_{Z_k})}{\hat V_{Z_k,l}(h_k)} \geq z_{1 - \alpha/2} \right) \to \frac{\alpha}{2},
\]
The same result holds for the case of $\min\{\tau_f,\tau_{Z_1},\ldots,\tau_{Z_d}\} = \tau_f$.

Therefore, under $H_0$, we have
\[
\limsup_{n\to\infty} P\left( \hat \psi_{1,l,p}(h) \leq z_{\alpha/2} \text{ or }
\hat \psi_{2,l,p}(h) \geq z_{1 - \alpha/2} \right) \leq \alpha.
\]

We next prove the statement under $H_1$.
Observe that
\[
\begin{split}
&P\left( \frac{\sqrt{n}\sqrt{h_{k}}(\hat\tau_{Z_k,l}(h_k) - \varepsilon)}{\hat V_{Z_k,l}(h_k)} \geq z_{\alpha/2} \right) \\
=& P\left( \frac{\sqrt{n}\sqrt{h_{k}}(\hat\tau_{Z_k,l}(h_k) - \tau_{Z_k})}{\hat V_{Z_k,l}(h_k)} - z_{\alpha/2} \geq \frac{\sqrt{n}\sqrt{h_{k}}(\varepsilon - \tau_{Z_k})}{\hat V_{Z_k,l}(h_k)} \right).
\end{split}
\]
Because we have
\[
\frac{\sqrt{n}\sqrt{h_{k}}(\hat\tau_{Z_k,l}(h_k) - \tau_{Z_k})}{\hat V_{Z_k,l}(h_k)} - z_{\alpha/2}
\dto N(0,1) - z_{\alpha/2},
\]
\[
\frac{\varepsilon - \tau_{Z_k}}{\hat V_{Z_k,l}(h_k)} 
\pto \frac{\varepsilon - \tau_{Z_k}}{V_{Z_k,l}},
\]
and $n h_k \to \infty$, we obtain
\[
\lim_{n\to\infty} P\left( \frac{\sqrt{n}\sqrt{h_{k}}(\hat\tau_{Z_k,l}(h_k) - \varepsilon)}{\hat V_{Z_k,l}(h_k)} \geq z_{\alpha/2} \right) = 0.
\]
Similar result holds for $\hat\tau_{f,p+1}(h_f)$.
Therefore, we obtain
\[
\begin{split}
&P(\hat \psi_{1,l,p}(h) \geq z_{\alpha/2}) \\
\leq& \sum_{k=1}^d P\left( \frac{\sqrt{n}\sqrt{h_{k}}(\hat\tau_{Z_k,l}(h_k) - \varepsilon)}{\hat V_{Z_k,l}(h_k)} \geq z_{\alpha/2} \right)
+ P\left( \frac{\sqrt{n}\sqrt{h_{f}}(\hat\tau_{f,p}(h_f)-\varepsilon)}{\hat V_{f,p}(h_f)} \geq z_{\alpha/2} \right) \\
\to& 0,
\end{split}
\]
which implies
\[
\lim_{n\to\infty} P\left( \hat \psi_{1,l,p}(h) \geq z_{\alpha/2} \text{ and }
\hat \psi_{2,l,p}(h) \leq z_{1 - \alpha/2} \right)  = 0.
\]

\end{proof}

\begin{proof}[Proof of Theorem \ref{thm:1'}]
Under the same assumption as in Theorem \ref{thm:1'}, the asymptotic distribution for the local polynomial estimators, where the order is increased instead of eliminating the bias term, is as follows.
\[
[ \Delta \hat \mu_{Z_1+,l+1}(h_1),\ldots, \Delta \hat \mu_{Z_d+,l+1}(h_d),\Delta \hat f_{+,p+1}(h_f)]'
\dto N_{d+1}(0,V_{+,l+1,p+1}), 
\]
\[
[ \Delta \hat \mu_{Z_1-,l+1}(h_1),\ldots, \Delta \hat \mu_{Z_d-,l+1}(h_d),\Delta \hat f_{-,p+1}(h_f)]'
\dto N_{d+1}(0,V_{-,l+1,p+1}).
\]
This result and the stated result follow from a similar argument given in the proof of Lemma \ref{lem:sa3.d} using the analogs of Lemmas \ref{lem:sa3.b} and \ref{lem:sa3.v}.
\end{proof}


\subsection{Auxiliary Lemma}\label{sec:a.5}
The following lemma establishes convergence in the probability of the sample matrix $\Psi_{Z_j Z_k+,p,q}(h_j,h_k)$ to its population counterpart and characterizes this limit. 
\begin{lemma}\label{lem:sa2}
Suppose that Assumptions 1-2 hold.
If $h_{jk}\to0$ and $n h_{jk}\to\infty$, then
\begin{equation}\label{eq:20}
h_{jk}\Psi_{Z_j Z_k+,p,q}(h_j,h_k)=
\tilde\Psi_{Z_j Z_k+,p,q}(h_j,h_k)+o_p(1),
\end{equation}
\begin{equation}
h_{jk}\Psi_{Z_j Z_k-,p,q}(h_j,h_k)=
H_p(-1)\tilde\Psi_{Z_j Z_k-,p,q}(h_j,h_k)H_q(-1)+o_p(1),
\end{equation}
and
\begin{equation}\label{eq:21}
\tilde\Psi_{Z_j Z_k+,p,q}(h_j,h_k)=
\sigma^2_{Z_j Z_k+}f_+\Psi_{jk,p,q}+o(1),
\end{equation}
\begin{equation}
\tilde\Psi_{Z_j Z_k-,p,q}(h_j,h_k)=
\sigma^2_{Z_j Z_k-}f_-\Psi_{jk,p,q}+o(1).
\end{equation}

\end{lemma}

\begin{remark}\label{rem:0}
In the proof of Lemma \ref{lem:sa2}, we use the compactness of the support of $K(\cdot)$ in the derivation of \eqref{eq:21}.
We obtain \eqref{eq:21} without assuming compactness in support of $K(\cdot)$.
If $\rho_{jk}\to c_{jk}\in(0,\infty)$, \eqref{eq:21} follows:
continuity of $K(u)$, $r_p(u)$, $\sigma^4_{Z_j Z_k}(u)$, and $f(u)$.
For the other cases, we observe the following.
\begin{equation}\label{eq:24}
\begin{split}
&\tilde \Psi_{Z_j Z_k+,p,q}(h_j,h_k)\\
=&\frac{1}{h_{jk}}\int_0^\infty K\left(\frac{x}{h_j}\right)K\left(\frac{x}{h_k}\right)
r_p\left(\frac{x}{h_j}\right)r_q\left(\frac{x}{h_k}\right)'
\sigma^2_{Z_j Z_k}(x)f(x)dx\\
=&\frac{m_{jk}}{h_{jk}}\int_0^\infty K\left(\frac{m_{jk}u}{h_j}\right)
K\left(\frac{m_{jk}u}{h_k}\right)
r_p\left(\frac{m_{jk}u}{h_j}\right)
r_q\left(\frac{m_{jk}u}{h_k}\right)'
\sigma^2_{Z_j Z_k}(um_{jk})f(um_{jk})du.
\end{split}
\end{equation}
Suppose that $\rho_{jk,n}\to 0$. Subsequently, $h_{k,n}<h_{j,n}$; thus, $m_{jk,n}/h_{jk,n}=\rho_{jk,n}^{1/2}$ holds for a sufficiently large $n$.
Hence, $m_{jk,n}/h_{jk,n}\to0$, and \eqref{eq:21} follows from \eqref{eq:24}. The case of $\rho_{jk,n}^{-1}\to 0$ follows from an analogous argument.
\end{remark}

\begin{proof}[Proof of Lemma \ref{lem:sa2}]
First, for $\Psi_{Z_j Z_k+,p,q}(h_j,h_k)$, a change in the variables yields
\begin{equation}\label{eq:22}
\begin{split}
&E[h_{jk}\Psi_{Z_j Z_k+,p,q}(h_j,h_k)]\\
=&E\left[
\frac{1}{nh_{jk}}\sumin 1(X_i\geq 0)K\left(\frac{X_i}{h_j}\right)K\left(\frac{X_i}{h_k}\right)
r_p\left(\frac{X_i}{h_j}\right)r_q\left(\frac{X_i}{h_k}\right)'
\sigma^2_{Z_j Z_k}(X_i)
\right]\\
=&\frac{1}{h_{jk}}\int_0^\infty K\left(\frac{x}{h_j}\right)K\left(\frac{x}{h_k}\right)
r_p\left(\frac{x}{h_j}\right)r_q\left(\frac{x}{h_k}\right)'
\sigma^2_{Z_j Z_k}(x)f(x)dx\\
=&\int_0^\infty K\left(\frac{h_{jk}u}{h_j}\right)
K\left(\frac{h_{jk}u}{h_k}\right)
r_p\left(\frac{h_{jk}u}{h_j}\right)
r_q\left(\frac{h_{jk}u}{h_k}\right)'
\sigma^2_{Z_j Z_k}(uh_{jk})f(uh_{jk})du\\
=&\tilde \Psi_{Z_j Z_k+,p,q}(h_j,h_k).
\end{split}
\end{equation}
Let 
\begin{equation}\label{eq:22.1}
\psi_{j,k,i}=1(X_i\geq 0)K\left(\frac{X_i}{h_j}\right)K\left(\frac{X_i}{h_k}\right)
r_p\left(\frac{X_i}{h_j}\right)r_q\left(\frac{X_i}{h_k}\right)'
\sigma^2_{Z_j Z_k}(X_i),
\end{equation}
where $\psi_{j,k,i}=\{\psi_{j,k,l,m,i}\}_{l=1,\ldots,(p+1)}^
{m=1,\ldots,(q+1)}$ is a $(p+1)\times(q+1)$-matrix.
Additionally, let $\psi_{j,k,l,m}=\sumin \psi_{j,k,l,m,i}$ and
$\psi_{j,k}=\sumin \psi_{j,k,i}$.
Observe that 
\begin{equation}\label{eq:22.2}
\begin{split}
h_{jk}^2 & E[|\Psi_{Z_j Z_k+,p,q}(h_j,h_k)-
E[\Psi_{Z_j Z_k+,p,q}(h_j,h_k)]|^2]\\
=&\frac{1}{n^2h_{jk}^2}E[|\psi_{j,k}-E[\psi_{j,k}]|^2]
=\frac{1}{n^2h_{jk}^2}\sum_{l=1}^{p+1}\sum_{m=1}^{q+1}
\Var(\psi_{j,k,l,m})\\
=&\frac{1}{n^2h_{jk}^2}\sum_{l=1}^{p+1}\sum_{m=1}^{q+1}
\sumin \Var(\psi_{j,k,l,m,i})\\
\leq&\frac{1}{n^2h_{jk}^2}\sum_{l=1}^{p+1}\sum_{m=1}^{q+1}
\sumin E[\psi_{j,k,l,m,i}^2]
=\frac{1}{n^2h_{jk}^2}\sumin E[|\psi_{j,k,i}|^2]
\end{split}
\end{equation}
and
\begin{equation}\label{eq:23}
\begin{split}
\frac{1}{n^2h_{jk}^2}\sumin E[|\psi_{j,k,i}|^2] &\\
=\frac{1}{nh_{jk}^2}\int_0^\infty K\left(\frac{x}{h_j}\right)^2 K&\left(\frac{x}{h_k}\right)^2
\left|r_p\left(\frac{x}{h_j}\right)\right|^2\left|r_q\left(\frac{x}{h_k}\right)\right|^2
\sigma^4_{Z_j Z_k}(x)f(x)dx\\
\leq\frac{1}{nm_{jk}^2}\int_0^\infty K\left(\frac{x}{h_j}\right)^2 K&\left(\frac{x}{h_k}\right)^2
 \left|r_p\left(\frac{x}{h_j}\right)\right|^2\left|r_q\left(\frac{x}{h_k}\right)\right|^2
\sigma^4_{Z_j Z_k}(x)f(x)dx\\
=\frac{1}{nm_{jk}}\int_0^\infty K\left(\frac{m_{jk}u}{h_j}\right)^2
&K\left(\frac{m_{jk}u}{h_k}\right)^2 \\
& \left|r_p\left(\frac{m_{jk}u}{h_j}\right)\right|^2
\left|r_q\left(\frac{m_{jk}u}{h_k}\right)\right|^2
\sigma^4_{Z_j Z_k}(um_{jk})f(um_{jk})du\\
=O\left(\frac{1}{nm_{jk}}\right)=o(1).&
\end{split}
\end{equation}
Thus, \eqref{eq:20} follows the Markov inequality.
Second, \eqref{eq:21} follows by
continuity of $K(u)$, $r_p(u)$, $\sigma^4_{Z_j Z_k}(u)$, and $f(u)$, and the compactness of the support of $K(\cdot)$.
\end{proof}


\section{Details on the search criterion} \label{sec:a_crit}
For the meta-analysis, we analyzed RD studies using diagnostic tests. There are two widely cited methodological papers, (\citealp{mccraryManipulationRunningVariable2008} and \citealp{leeRandomizedExperimentsNonrandom2008}), and two widely cited survey papers, (\citealp{Imbens_Lemieux_2008} and \citealp{leeRegressionDiscontinuityDesigns2010}). We collected the citations of $2,697$ unique papers on November 5, 2021, from the Web of Science. \footnote{We used the Web of Science to limit our focus on published papers.} Among $2,697$ papers, we limited our focus to publications from the top five journals in economics (\textit{American Economic Review, Econometrica, Journal of Political Economy, Quarterly Journal of Economics, Review of Economic Studies} in alphabetical order.). Among $98$ in the top five publications, $60$ papers reported at least one diagnostic test to validate their empirical analyses of RD designs. Furthermore, we removed one paper which meant to report the implausibility of an existing design by demonstrating the rejected diagnostic tests. We excluded surveys, theoretical contributions, and other uses of similar tests in manipulation detection or kink designs. Furthermore, we limited our focus to the density and balance or placebo tests in their \textit{standard} procedures, excluding placebo or balance tests for predicted variables from covariates. We identified five studies that incorporated a joint test for multiple testing problems. However, we did not include these joint tests because all but one study did not incorporate the nonparametric nature of the RD estimates. The only considered study \citep{Fort_Ichino_Zanella_2020} used \cite{canayApproximatePermutationTests2018}.

From these $59$ papers, we collected the balance, placebo, or density test results that appeared to be their \textit{main} specifications. In practice, many researchers have run multiple versions of these tests using different bandwidths, kernels, and specifications. We collected the total number of tests separately, but our numerical analysis was limited to the main specifications. We computed p-values from the reported statistics when only the test statistics were reported.

\section{Details on the simulation data generating process}\label{sec:c}

We conducted $3,000$ replications to generate a random sample. 
\[
\{(X^+_i,X^-_i,\tilde Z_{1,i},\ldots,\tilde Z_{d,i},U_i)':i=1,\ldots,n\}
\]
with size $n=500, 1000$, $X^+_i\sim tN(0,0.12^2;[0,1])$ with $tN(\mu,\sigma^2;[0,1])$ denotes a truncated normal distribution with mean $\mu$ and variance $\sigma^2$, lying within the interval $[0,1]$, $X^-_i\deq -X^+_i$. 
$U_i\sim Unif[0,1]$ with $Unif[0,1]$ denoting a uniform distribution on the interval $[0,1]$, $(\tilde Z_{1i},\ldots,\tilde Z_{di})'\sim N_d(0, \tilde \Sigma)$ with $\tilde \Sigma$ denoting a correlation matrix where each $(j,k)$ entry is $1>\rho\geq0$ for $j\neq k$,
and $(X^+_i,X^-_i)$, $(\tilde Z_{1,i},\ldots,\tilde Z_{d,i})$, and $U_i$ are mutually independent.
The running variable $X_i$ is generated as follows.
\[
X_i = (1-M_i)X^-_i + M_i X^+_i,
\]
where $M_i = 1\{U_i\leq \bar{p}\}$ with $\bar{p}\geq0.5$.
For pre-treatment covariates $Z_i=(Z_{1,i},\ldots,Z_{d,i})'$, we consider two specifications: first, one of the $d$ covariates sees a jump, and second, all the $d$ covariates see a jump, but each jump size is divided by $d$.
For the first specification, each $Z_{k,i}, k=1,\ldots,d$ is generated as follows.
\[
Z_{k,i} = 
\begin{cases}
\lambda(X_i) + \tilde Z_{k,i} & \text{for }k=1,\ldots,d-1\\
\lambda(X_i) + a M_i + \tilde Z_{k,i} & \text{for }k=d\text{ with }a\geq0,
\end{cases}
\]
where the functional form of $\lambda(x)$ is defined as
\begin{equation*}
 \lambda(x) = 0.48 + 
 \begin{cases}
 0.84x - 3.00 x^2 + 7.99 x^3 - 9.01 x^4 + 3.56 x^5, & \mbox{ if } x \geq 0\\
 1.27x + 7.18 x^2 + 20.21 x^3 + 21.54 x^4 + 7.33 x^5, & \mbox{ otherwise}.
 \end{cases}
\end{equation*}
For the second specification, each $Z_{k,i}, k=1,\ldots,d$ is generated as follows.
\[
Z_{k,i} = \lambda(X_i) + a/d M_i + \tilde Z_{k,i}.
\]
Let $f^*(x)$ be the density of $X^+_i$, 
\[
f^*(x) = \frac{\phi(x/0.12)}{0.12(\Phi(1/0.12) - \Phi(0))},
\]
where $\phi(x)$ and $\Phi(x)$ are the density and cumulative distribution functions, respectively, of the standard normal random variable.
Using $f^*(x)$, the density of $X_i$ can be expressed as
\[
f(x) =
\begin{cases}
f^*(x)(1-\bar{p}) & \text{if }x<0\\
f^*(x)\bar{p} & \text{if }x>0,
\end{cases}
\]
and the discontinuity level of $f(x)$ at $x=0$ is $\tau_f = f^*(0)(2\bar{p}-1)$.
Hence, we have $\tau_f = 0$ if and only if $\bar{p}=0.5$, and $\tau_f > 0$ when $\bar{p} > 0.5$.

\section{Additional figures}\label{sec:a_table}
\subsection{Additional power plots}
Appendix Figures 1-6 present the empirical power of the (ii) Bonferroni correction, (iv) Max test, and (v) sWald test under $p=0.575$ and $\tau_{Z_d}(=a)=0,0.5,1,1.5,2$ for different dimensions, and $d=3,5$ for the Monte Carlo experiment in Section 4. In Appendix Figures 1–4, each figure presents the results for different correlations, $\rho = 0, 0.3$, which are weaker than those in Figures 5-8 from the main text.

Appendix Figures 1 and 2 present the results when one of the $d$ covariates jumps, and Appendix Figures 3 and 4 present the results when all $d$ covariates jump. However, each jump size is divided by $d$ for $d=3,5$.
Although the difference between the Max test and Bonferroni correction is smaller, the simulation results show that the proposed joint testing methods exhibit power improvements similar to those shown in Section 4. 
Moreover, the power improvements in the sWald test are more significant when the correlation is weaker. 

Appendix Figures 5 and 6 compare the results for positive and negative correlations when $d=3$ and $|\rho| = 0.9$.
In Appendix Figures 5 (b), and 6 (b), two of the three covariates have the same pairwise correlation coefficient of $0.9$, and the remaining one has the pairwise correlation coefficient of $-0.9$.
Appendix Figure 5 presents the results when one of the three covariates jumps, and Appendix Figure 6 presents the results when all three covariates jump. Nevertheless, each jump size is divided by three.
Although the sWald test is conservative when the correlation is strong, 
the Max test had more power than the Bonferroni correction in both cases. 
This suggests that the power improvements in the Max test are robust against negative correlations.

\begin{figure}[H]
 \centering
 \begin{minipage}[b]{0.85\hsize}
 \includegraphics[width=\hsize]{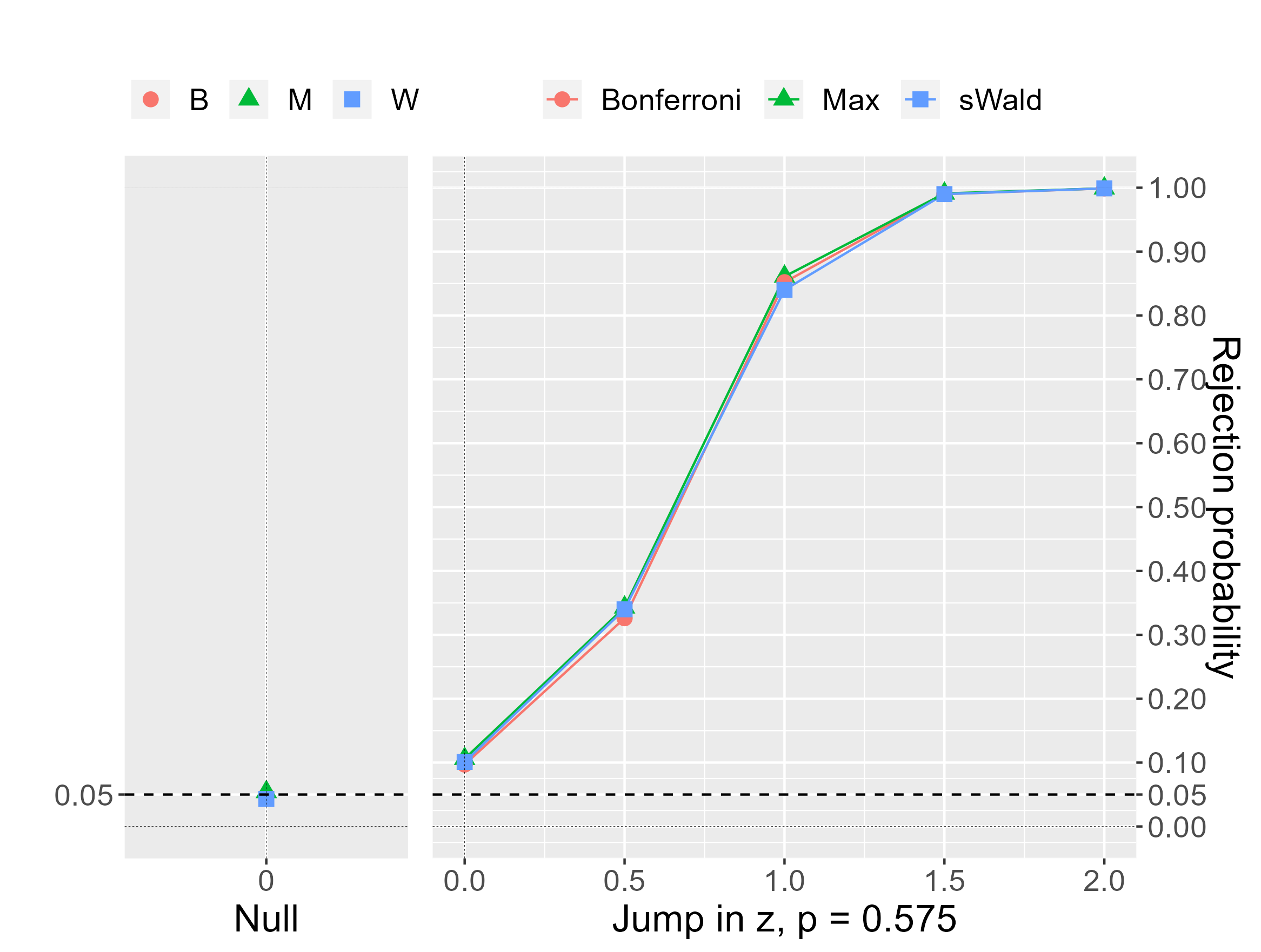}
 \subcaption{Covariates have the same pairwise correlation coefficient of $0$}
 \end{minipage} \\
 \begin{minipage}[b]{0.85\hsize}
 \includegraphics[width=\hsize]{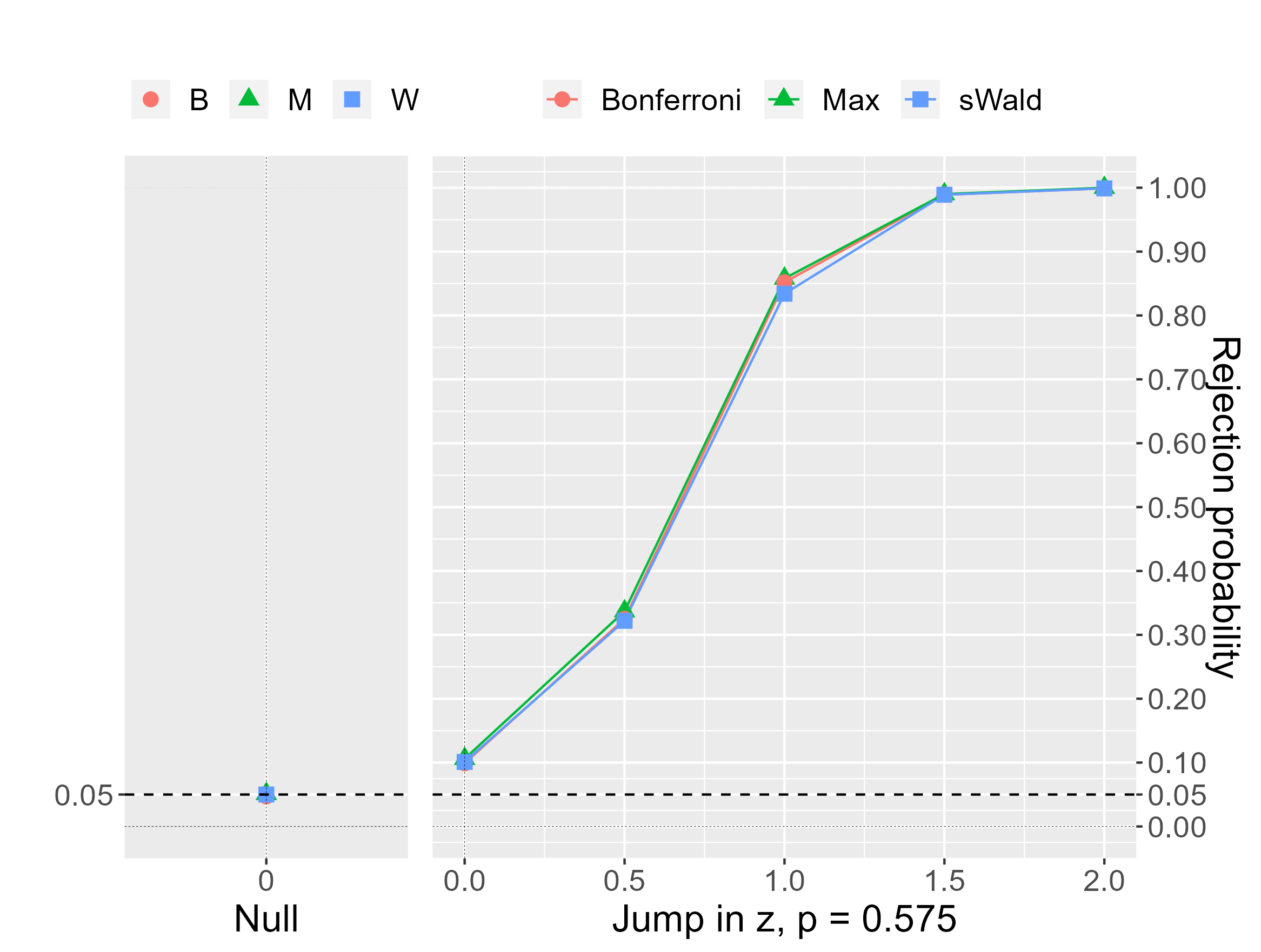}
 \subcaption{Covariates have the same pairwise correlation coefficient of $0.3$}
 \end{minipage} 
 \caption*{Appendix Figure 1: A jump in the three covariates with a density, $n = 1000$}\label{fig:5.3.last.small} 
\end{figure}

\begin{figure}[H]
 \centering
 \begin{minipage}[H]{0.85\hsize}
 \includegraphics[width=\hsize]{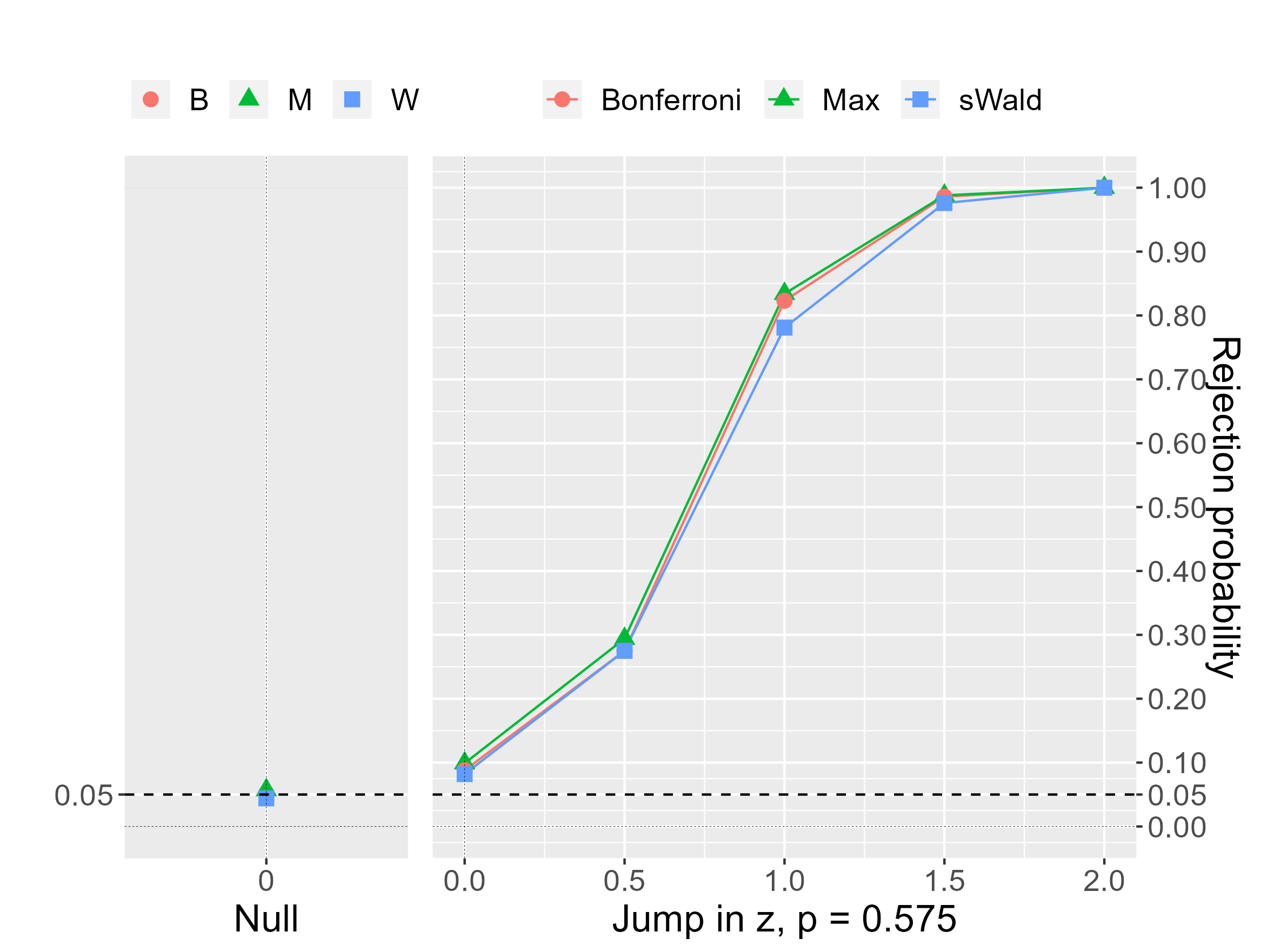}
 \subcaption{Covariates have the same pairwise correlation coefficient of $0$}
 \end{minipage} \\
 \begin{minipage}[b]{0.85\hsize}
 \includegraphics[width=\hsize]{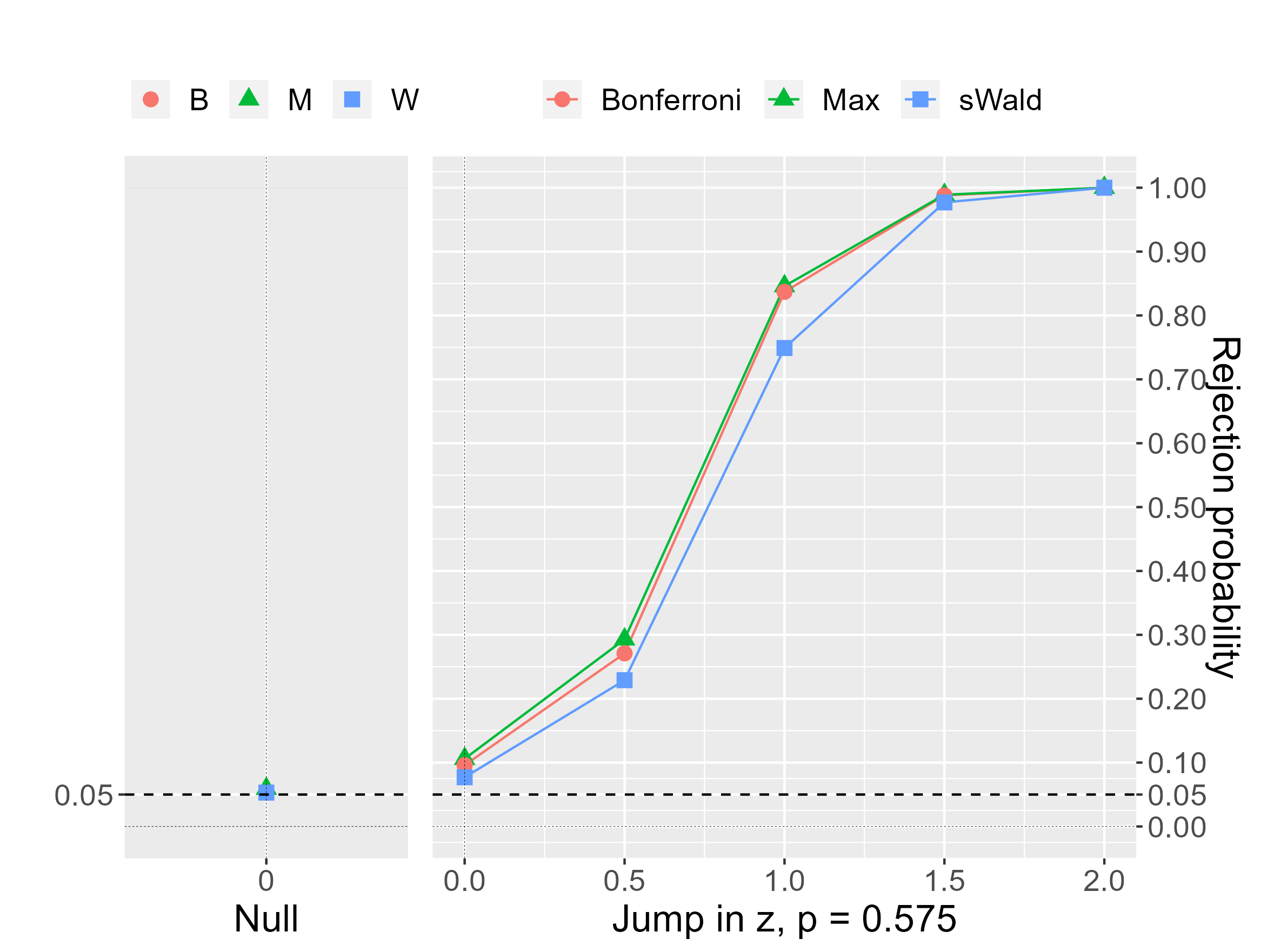}
 \subcaption{Covariates have the same pairwise correlation coefficient of $0.3$}
 \end{minipage} 
 \caption*{Appendix Figure 2: A jump in the five covariates with a density, $n = 1000$}\label{fig:5.5.last.small} 
\end{figure}

\begin{figure}[H]
 \centering
 \begin{minipage}[b]{0.85\hsize}
 \includegraphics[width=\hsize]{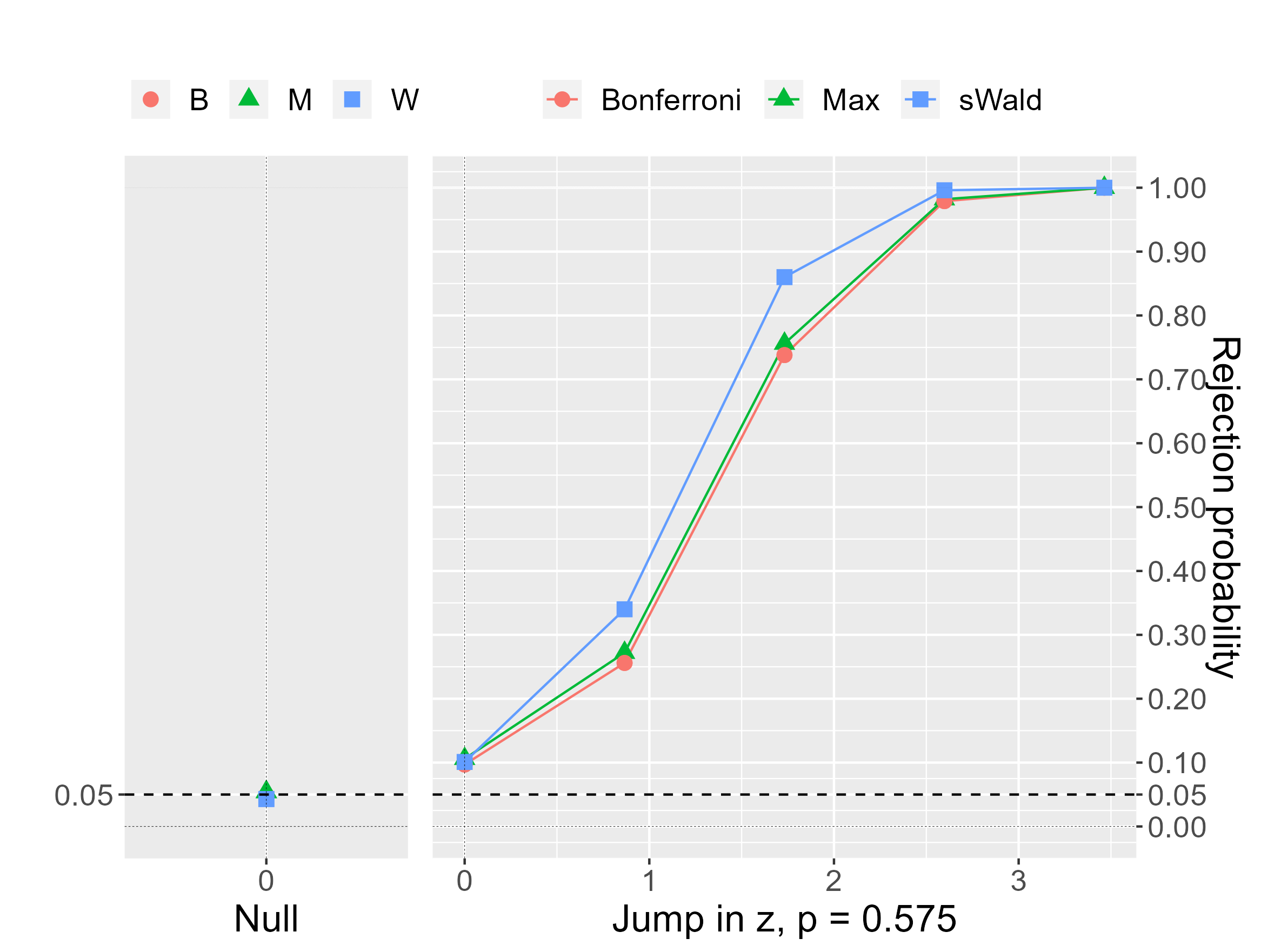}
 \subcaption{Covariates have the same pairwise correlation coefficient of $0$}
 \end{minipage} \\
 \begin{minipage}[b]{0.85\hsize}
 \includegraphics[width=\hsize]{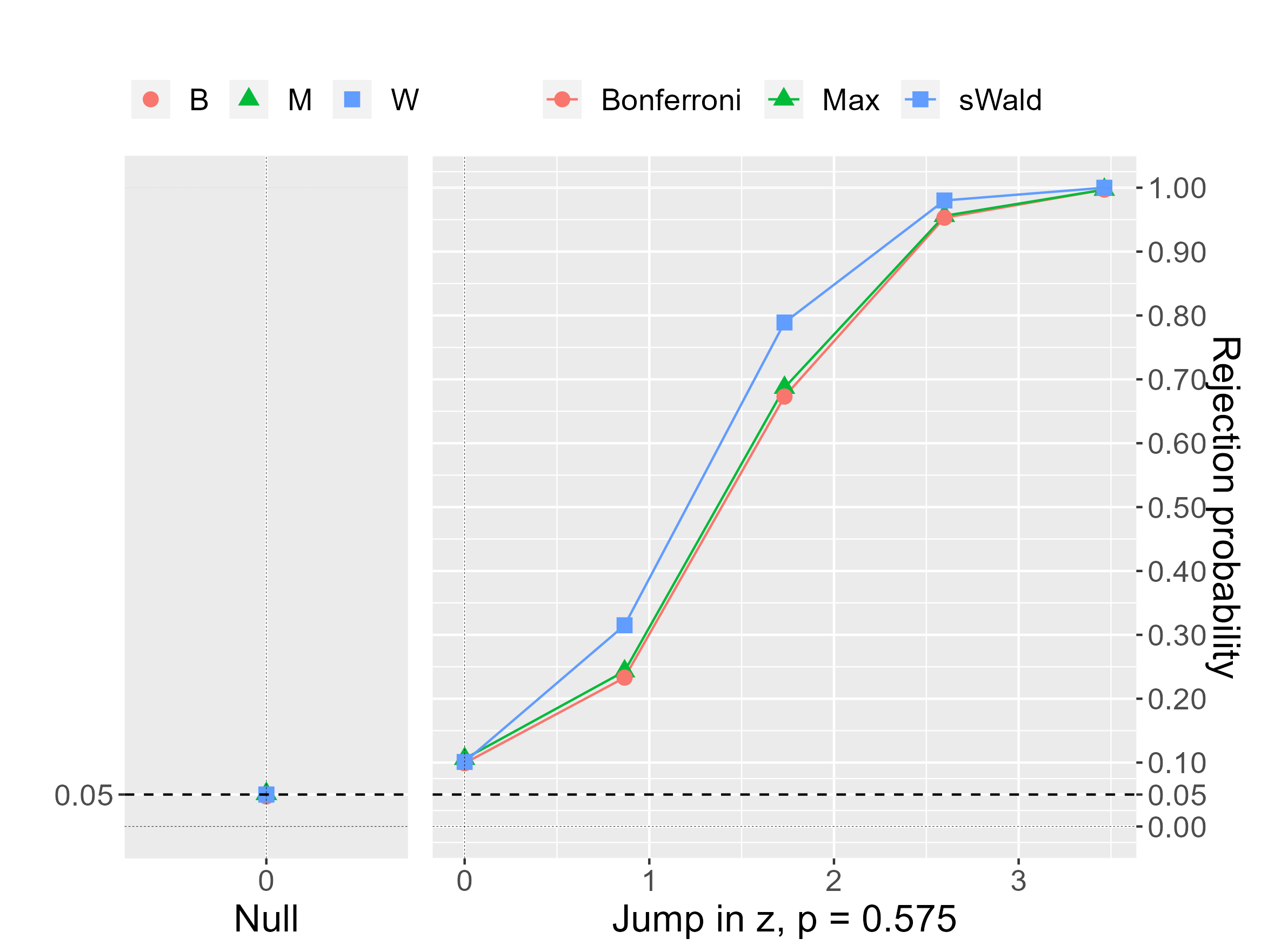}
 \subcaption{Covariates have the same pairwise correlation coefficient of $0.3$}
 \end{minipage} 
 \caption*{Appendix Figure 3: $1/3$ of jumps in all the three covariates with a density, $n = 1000$.}\label{fig:5.3.all.small} 
\end{figure}

\begin{figure}[H]
 \centering
 \begin{minipage}[b]{0.85\hsize}
 \includegraphics[width=\hsize]{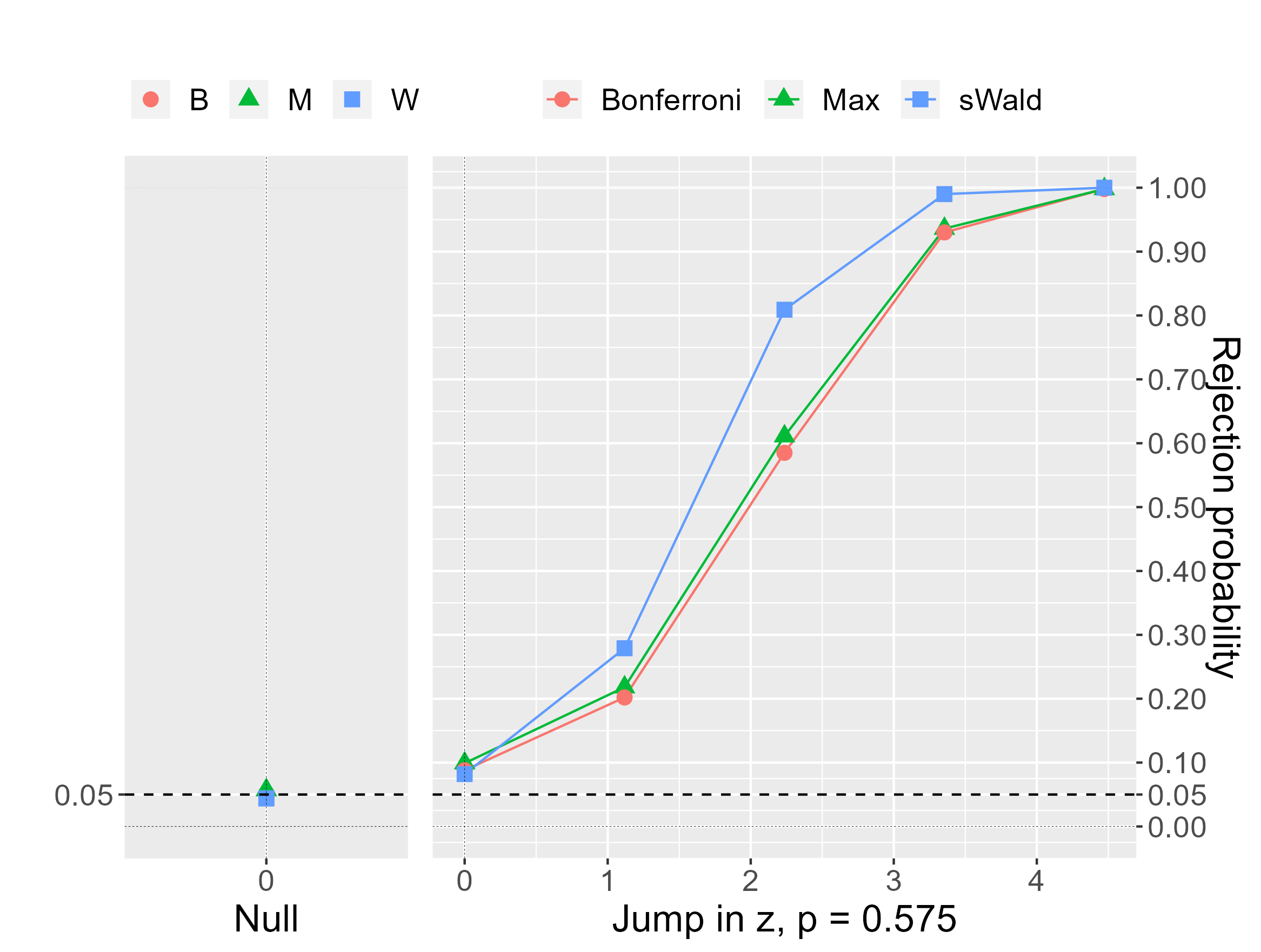}
 \subcaption{Covariates have the same pairwise correlation coefficient of $0$}
 \end{minipage} \\
 \begin{minipage}[b]{0.85\hsize}
 \includegraphics[width=\hsize]{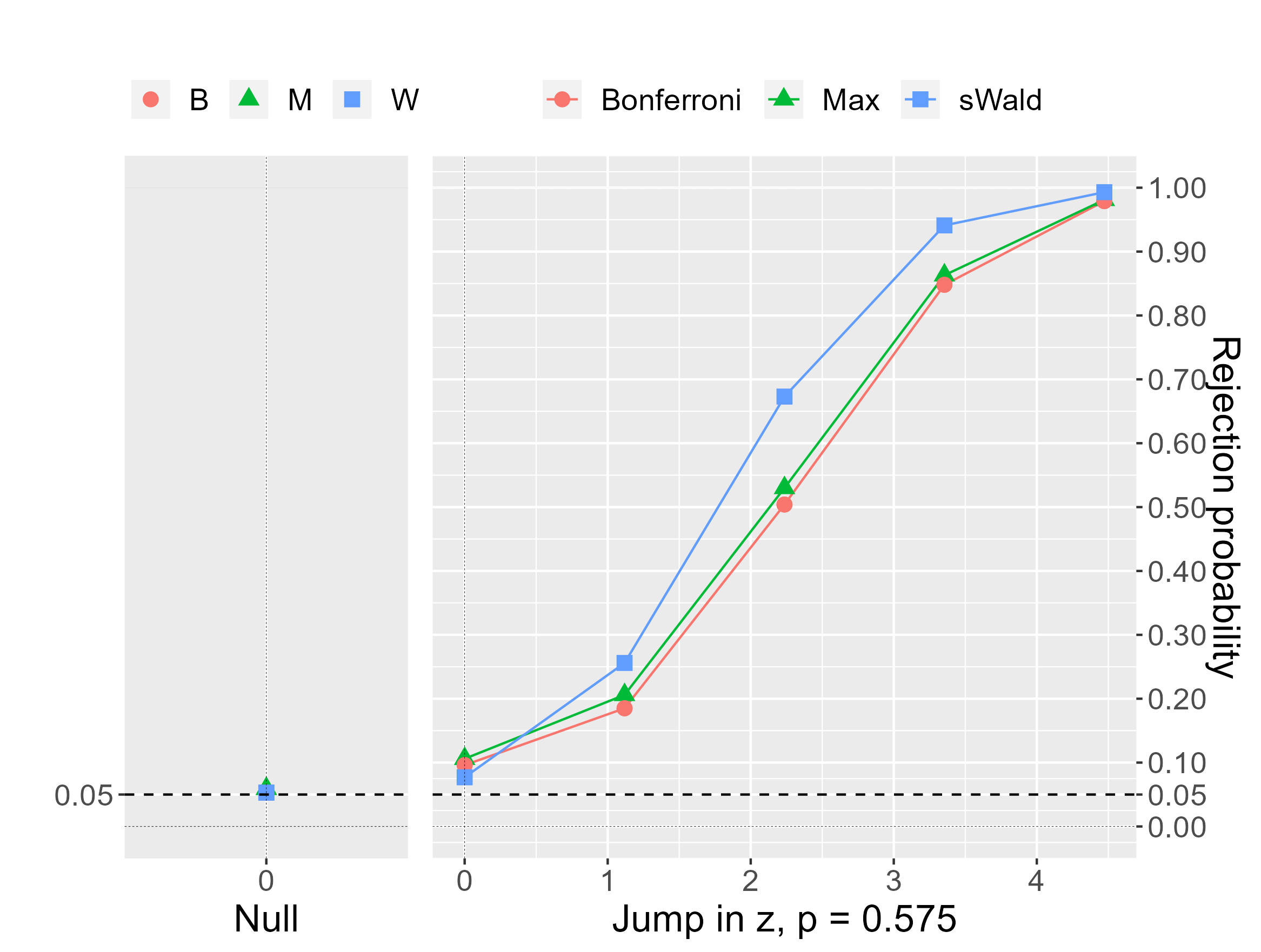}
 \subcaption{Covariates have the same pairwise correlation coefficient of $0.3$}
 \end{minipage} 
 \caption*{Appendix Figure 4: $1/5$ of jumps in all the five covariates with a density, $n = 1000$.}\label{fig:5.5.all.small} 
\end{figure}

\begin{figure}[H]
 \centering
 \begin{minipage}[b]{0.85\hsize}
 \includegraphics[width=\hsize]{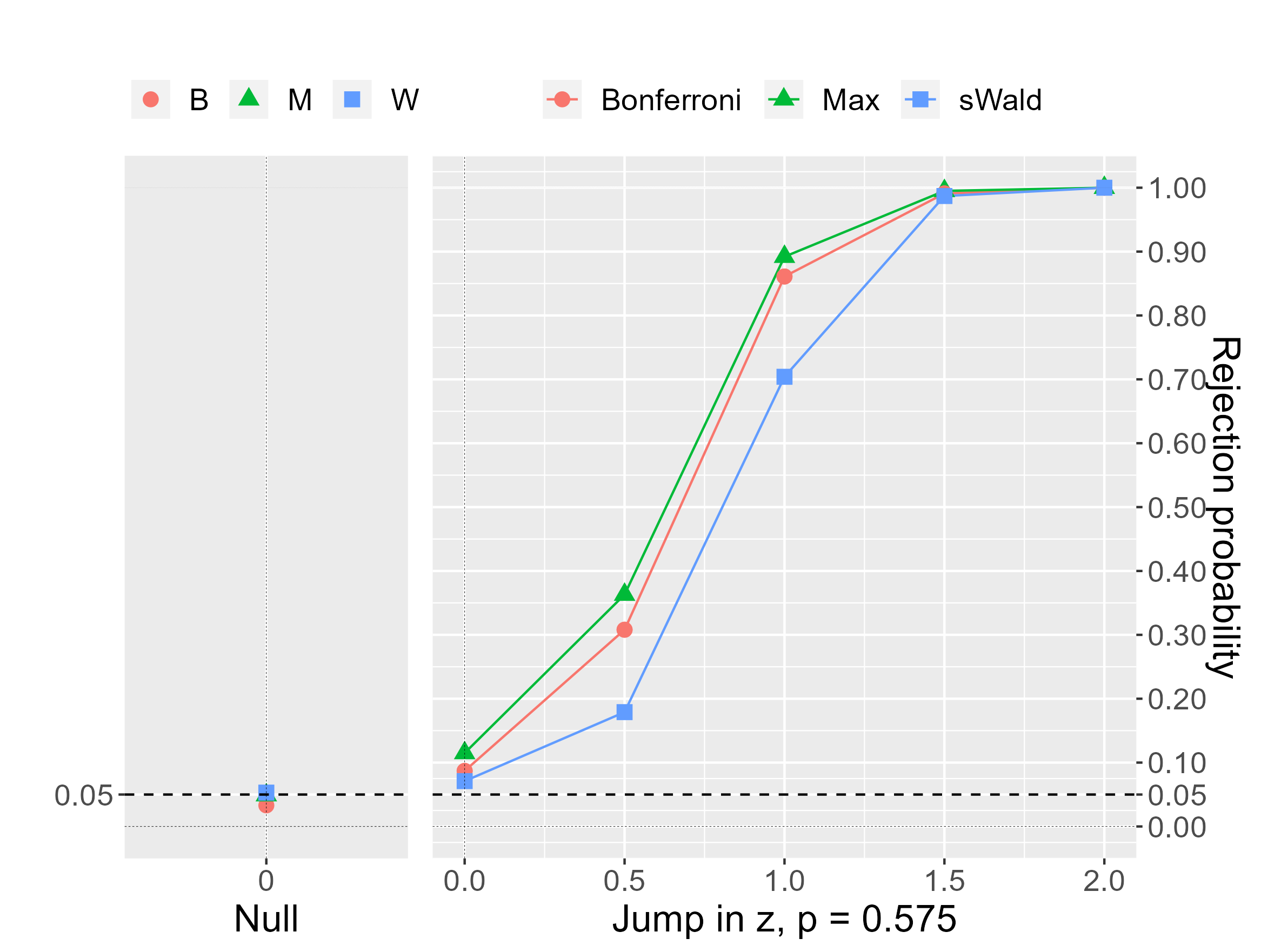}
 \subcaption{The remaining one has the same pairwise correlation coefficient of $0.9$.}
 \end{minipage} \\
 \begin{minipage}[b]{0.85\hsize}
 \includegraphics[width=\hsize]{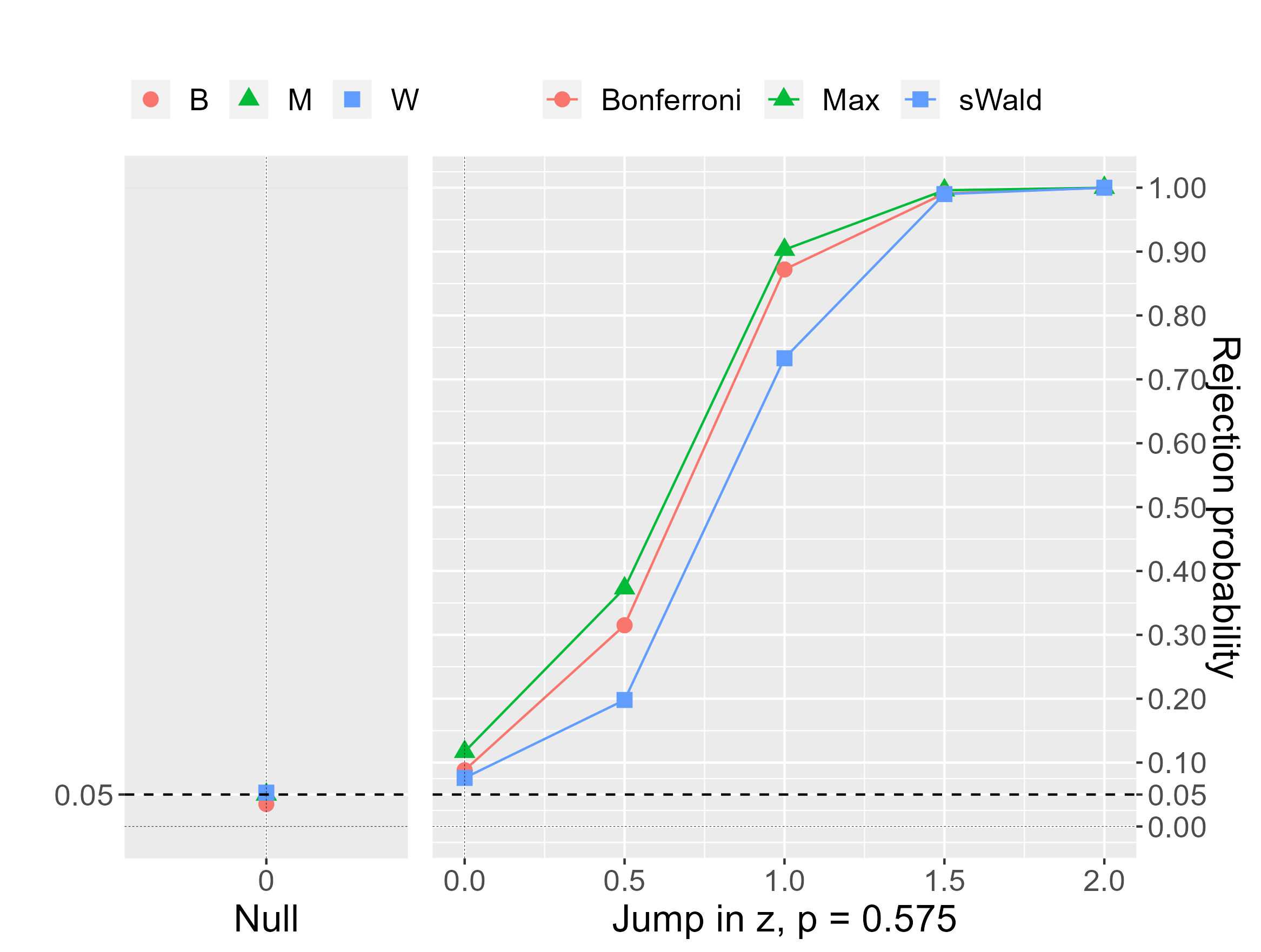}
 \subcaption{The remaining one has the pairwise correlation coefficient of $-0.9$.} 
 \end{minipage} 
 \caption*{Appendix Figure 5: A jump in the three covariates with a density, $n = 1000$. Two out of three covariates have the same pairwise correlation coefficient of $0.9$}\label{fig:5.3.last.neg} 
\end{figure}

\begin{figure}[H]
 \centering
 \begin{minipage}[b]{0.85\hsize}
 \includegraphics[width=\hsize]{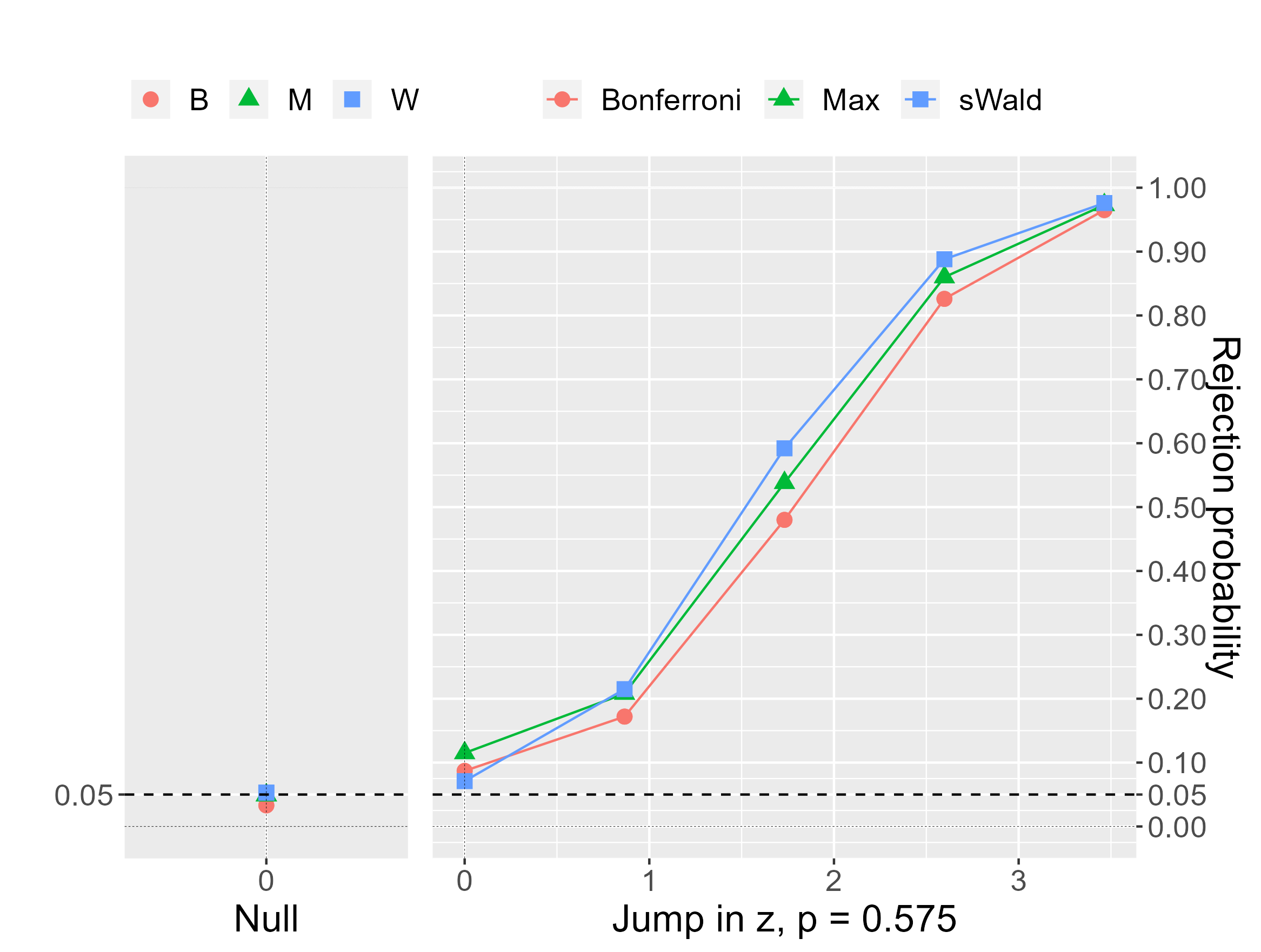}
 \subcaption{The remaining one has the same pairwise correlation coefficient of $0.9$.}
 \end{minipage} \\
 \begin{minipage}[b]{0.85\hsize}
 \includegraphics[width=\hsize]{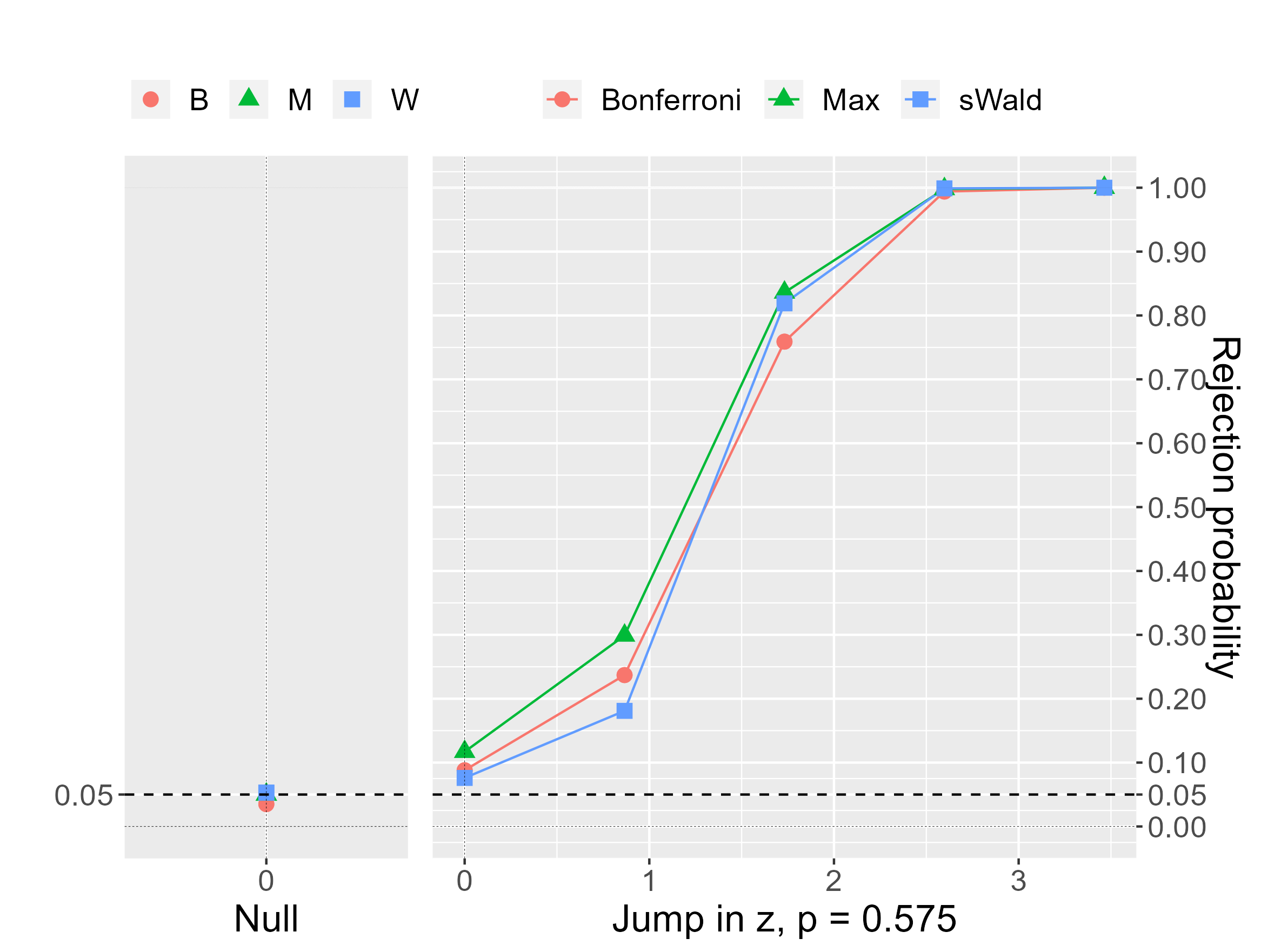}
 \subcaption{The remaining one has the pairwise correlation coefficient of $-0.9$.} 
 \end{minipage} 
 \caption*{Appendix Figure 6: $1/3$ of jumps in all the three covariates with a density, $n = 1000$. Two out of three covariates have the same pairwise correlation coefficient of $0.9$}\label{fig:5.3.all.neg} 
\end{figure}

\subsection{Additional meta-analysis plots}

\begin{figure}[H]
    \centering
 \caption*{Appendix Figure 7: Histogram of p-values from rejected tests.}
    \includegraphics[width=\hsize]{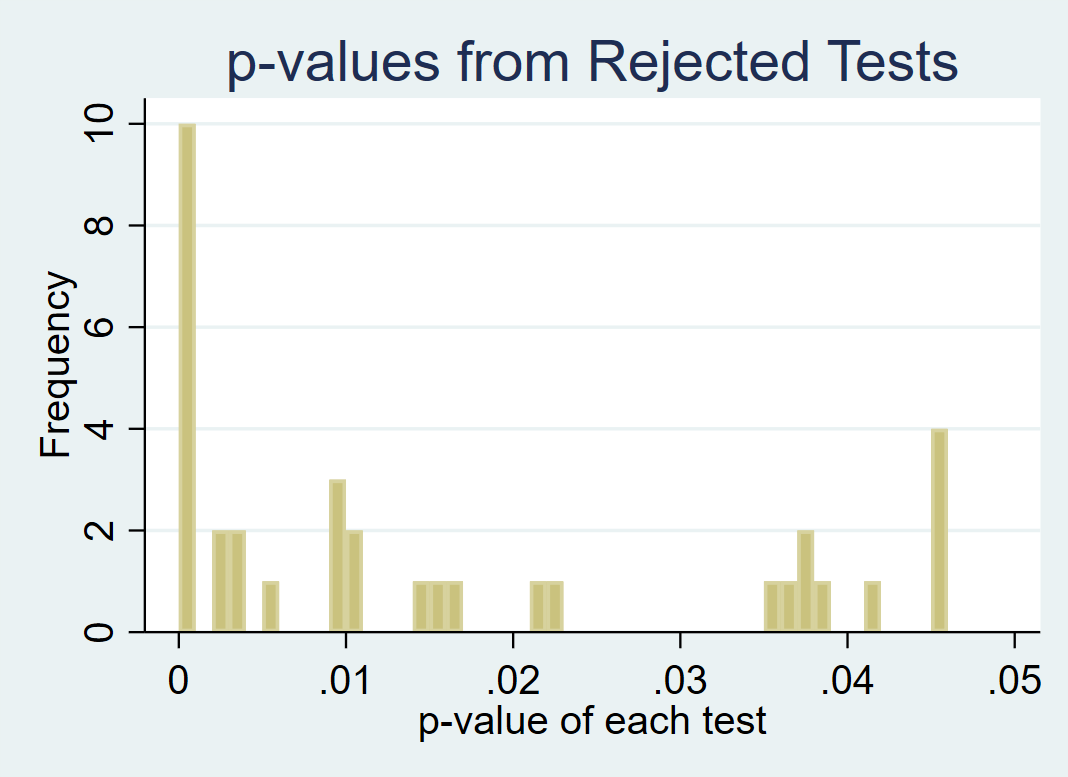}
    \label{fig:figure_E_1}
     \begin{minipage}{380pt}
    {\fontsize{10pt}{10pt}\selectfont\smallskip\textit{Notes}: The distribution of the $34$ tests are fairly uniform over $0$ to $0.05$. In the histogram of Appendix Figure 7, there are two masses due to rounding: the mass at $0$ is rounding to $0$ in the original reported p-value in the papers; the mass near $0.05$ is at $0.0455$ which is the p-value corresponding the computed t-statistics of $2$ by dividing rounded numbers such as $0.02$ over $0.01$.}
    \end{minipage}
\end{figure}


\begin{figure}[H]
    \centering
    \includegraphics[width=\hsize]{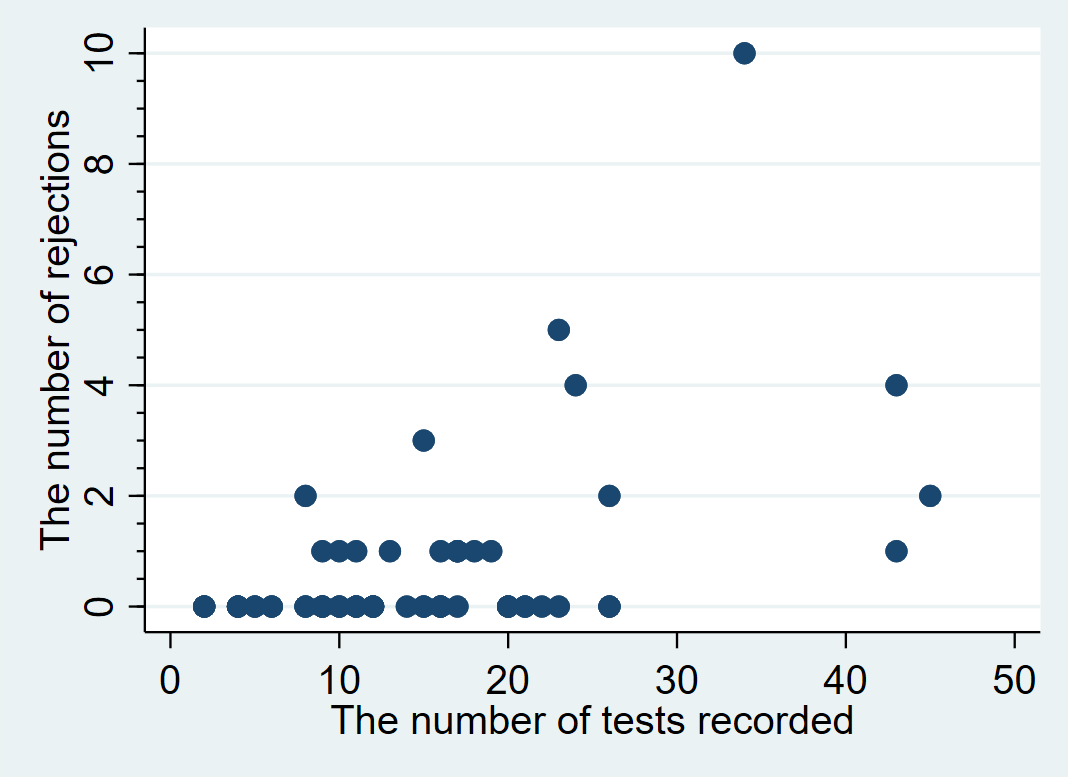}
 \caption*{Appendix Figure 8: Scatter plot of the number of tests and the number of rejected tests.}
    \label{fig:figure_E_3}
\end{figure}

\subsection{Additional power plots to compare with rwolf2 for balance tests only}

\begin{figure}[H]
    \centering
     \begin{minipage}[b]{0.8\hsize}
    \includegraphics[width=0.9\hsize]{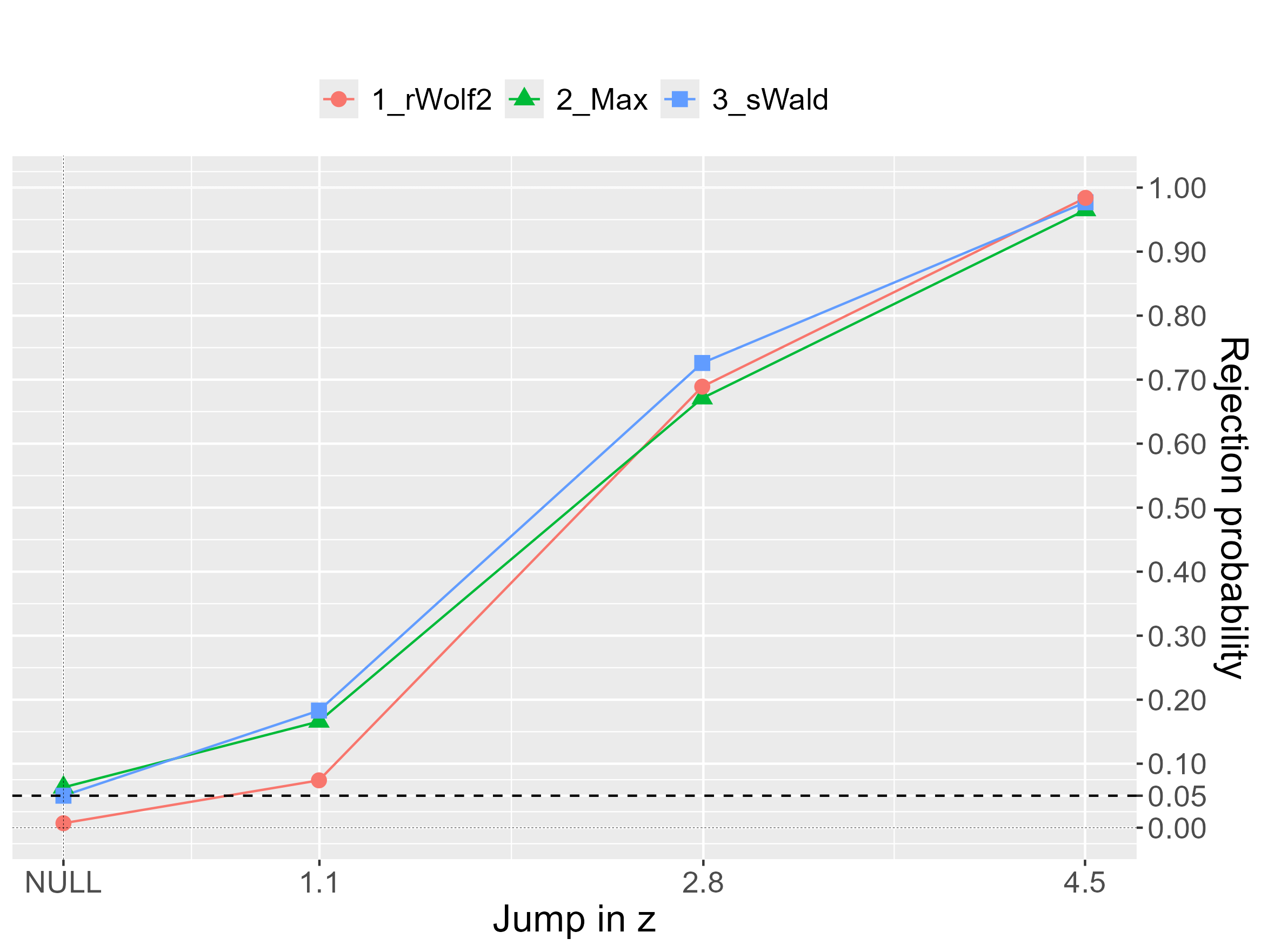}
 \caption*{Appendix Figure 9: The unified balance test results compared with rwolf2. $1/5$ of jumps in all the five covariates. Five covariates have the pairwise correlation coefficient of $0.5$}    \label{fig:rwolf_05_all}
    \end{minipage} \\
    \begin{minipage}[b]{0.8\hsize}
    \includegraphics[width=0.9\hsize]{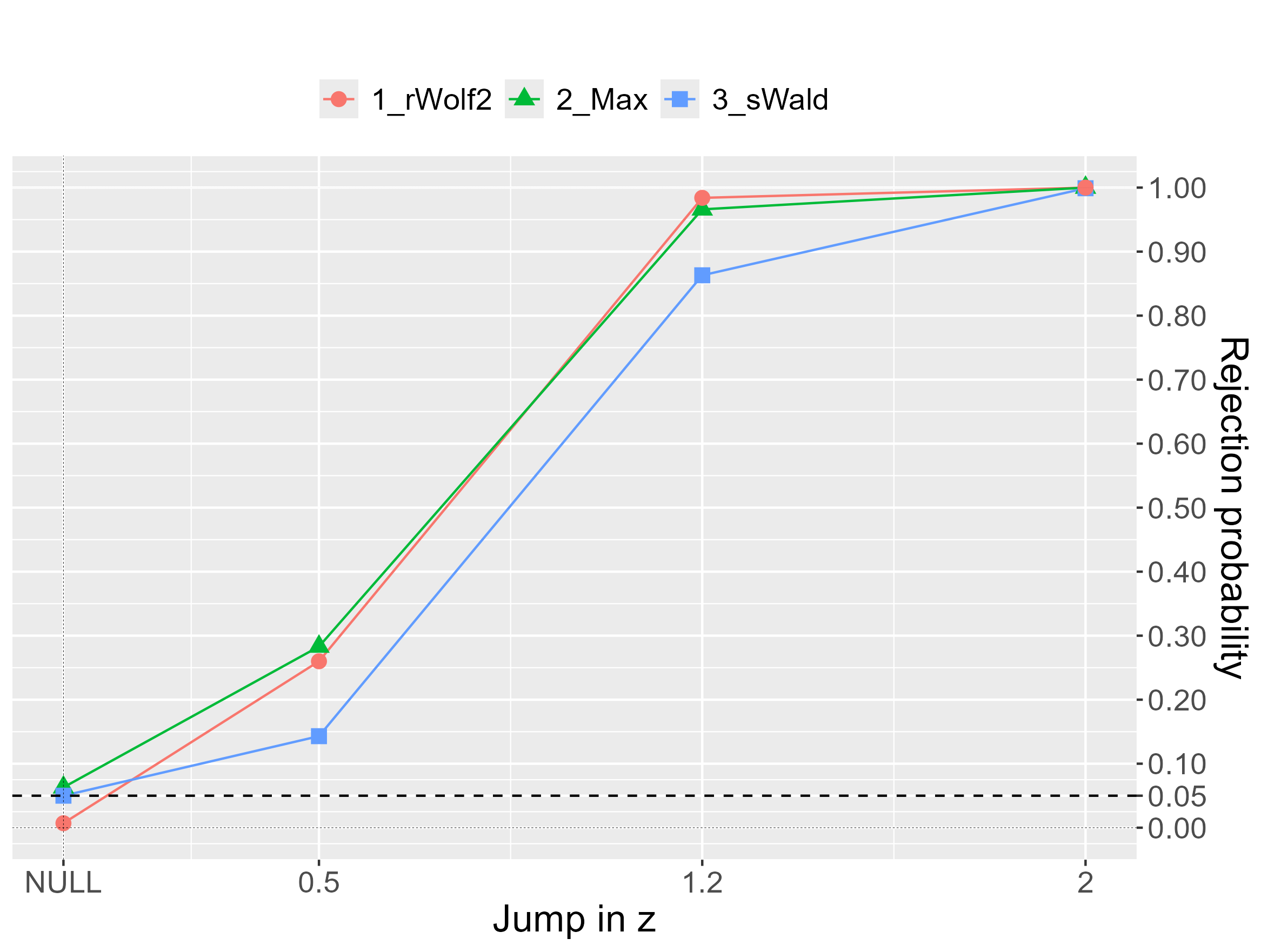}
 \caption*{Appendix Figure 10: The unified balance test results compared with rwolf2. One of five covariates have the jump. Five covariates have the pairwise correlation coefficient of $0.5$}
    \label{fig:rwolf_05_last}
    \end{minipage}
\end{figure}

\begin{figure}[H]
    \centering
     \begin{minipage}[b]{0.8\hsize}
    \includegraphics[width=0.9\hsize]{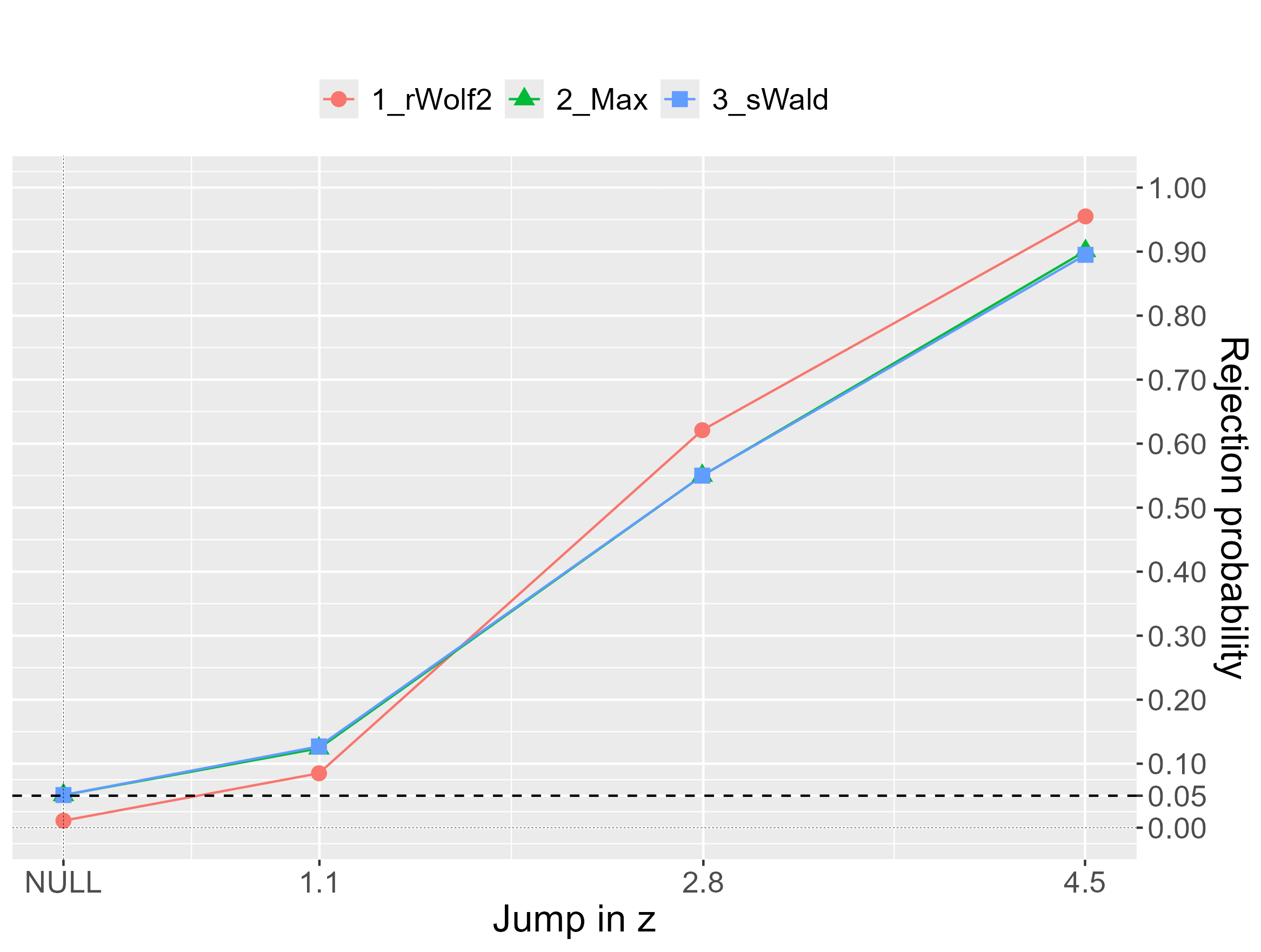}
 \caption*{Appendix Figure 11: The unified balance test results compared with rwolf2. $1/5$ of jumps in all the five covariates. Five covariates have the pairwise correlation coefficient of $0.9$}    \label{fig:rwolf_05_all}
    \end{minipage} \\
    \begin{minipage}[b]{0.8\hsize}
    \includegraphics[width=0.9\hsize]{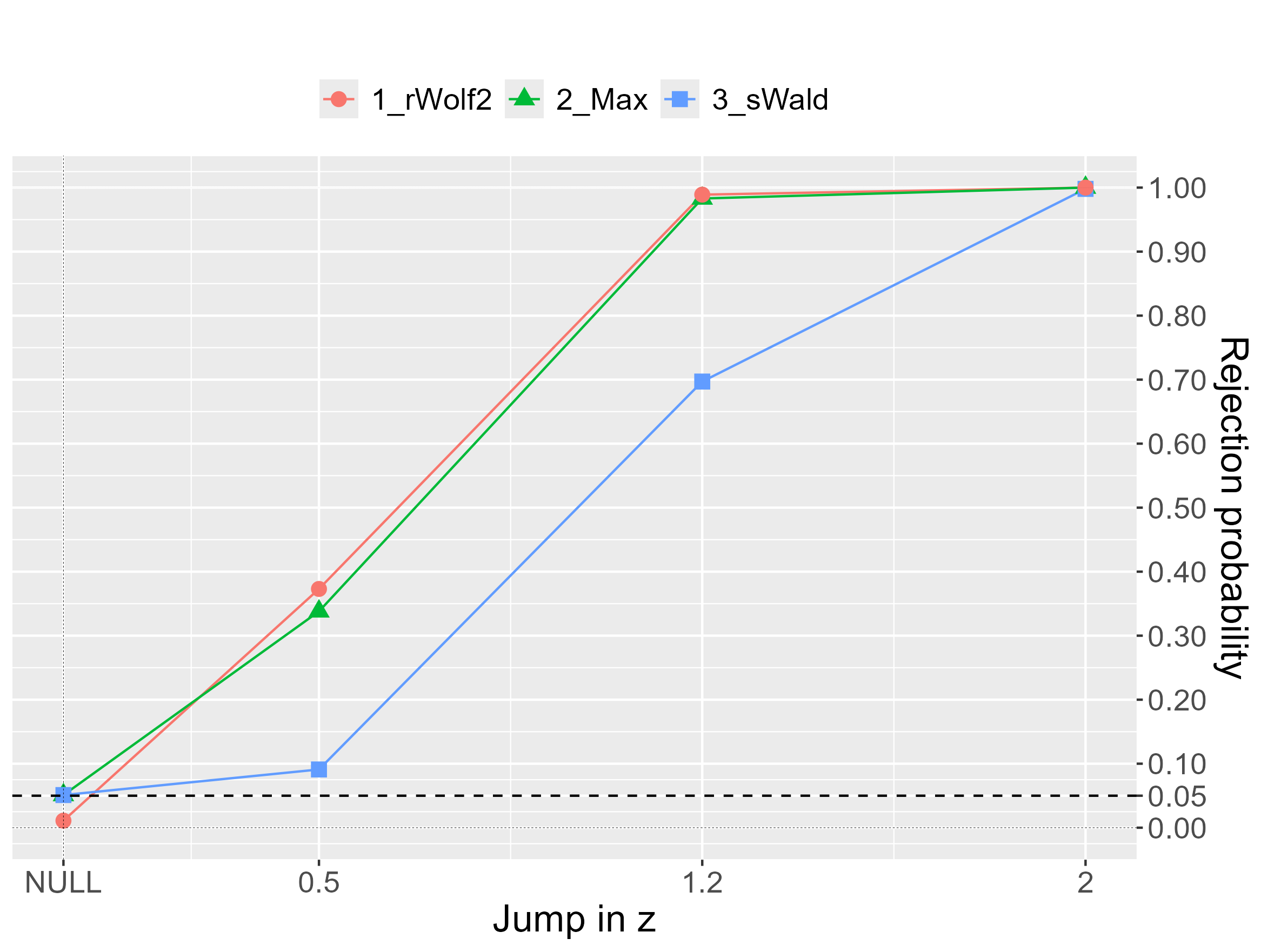}
 \caption*{Appendix Figure 12: The unified balance test results compared with rwolf2. One of five covariates have the jump. Five covariates have the pairwise correlation coefficient of $0.9$}
    \label{fig:rwolf_05_last}
    \end{minipage}
\end{figure}

\section{Further discussions for the pretest analysis} \label{sec:a-pretest-general}

Given the general case of the pretest analysis, we ran a few numerical analysis based on a simulated toy dataset and calibrated datasets from two empirical studies. 

First, we consider a null RD design with a sample size of $3000$ and four covariates. In the toy dataset, we generate the outcome using the following model
\[
 Y_i = \epsilon_i + Z_i' \beta
\]
where $\epsilon_i \sim N(0,1)$, $Z_i \sim N(0,I_4)$ and $\beta = [-0.05,0.1,0.2,-0.1]'$. Second, we use the dataset from two empirical studies, \cite{pons_tricaud_2018} and \cite{Asher_Novosad_2020}. For \cite{pons_tricaud_2018}, we use the five covariates for the balance test in the original study to evaluate the impact on the first stage binary treatment status. For \cite{Asher_Novosad_2020}, we use the six covariates for the balance test in the original study to evaluate the impact on the first stage binary treatment status.

For three datasets, we conduct the following analysis as in  \cite{roth2022pretest}. First, we set $\delta_Z$ so that the max test has $80\%$ power. Second, run \textit{rdrobust} with covariates option and obtain the treatment effect estimate $\hat{\tau}$ and the weights on the covariates $\hat{\beta}$. Finally, compute the mean $\delta_Y$ according to the linear model with $\hat{\beta}$ as the unconditional bias, and the conditional bias which is the mean $\delta_Y$ among simulated samples which accepts the null for the data generating process of $\delta_Z$ for the $80\%$ power.

Appendix Figure 13 summarizes the results. In all three results, the unconditional bias is reduced towards the original estimate by conditioning on the draws which pass the max test. Hence, pretesting is bias-reducing at least for these cases.

\begin{figure}
    \centering
    \includegraphics[width=1\linewidth]{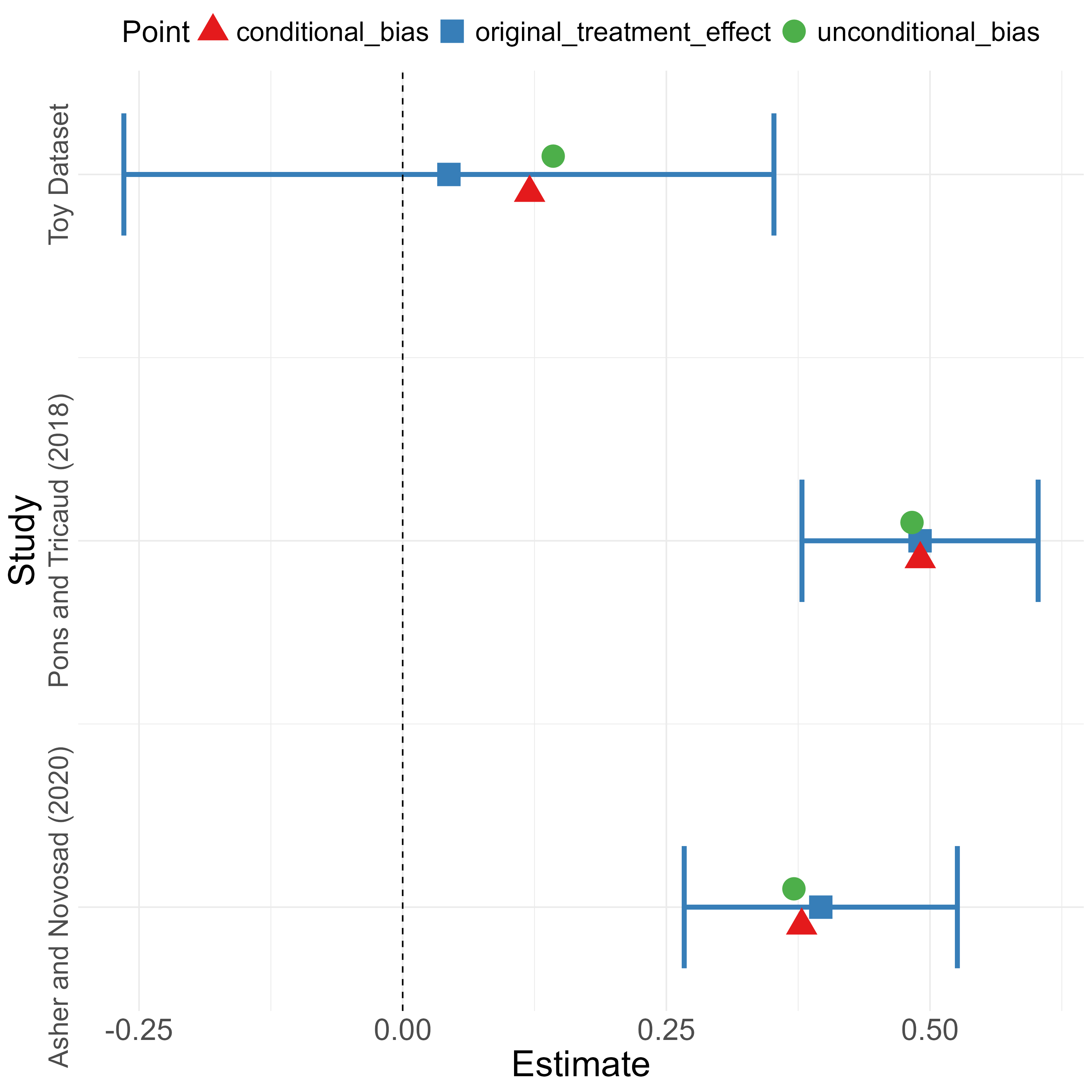}
 \caption*{Appendix Figure 13: Pretest numerical results}
    \label{fig:result_pretest}
     \begin{minipage}{380pt}
    {\fontsize{10pt}{10pt}\selectfont\smallskip\textit{Notes}: The figure compares the original estimates and confidence intervals with two points of the unconditional bias and (pretest) conditional bias plus the point estimates from a toy dataset and two empirical applications. The square points denote the original (bias corrected) point estimate, the range denote the confidence interval, the circle points denote the unconditional bias, the triangular points denote the conditional bias. In all three cases, conditioning reduces the bias towards the original point estimates.}
    \end{minipage}
\end{figure}

\FloatBarrier


\bibliography{reference}  

\end{appendix}
\end{document}